\tikzstyle{every node}=[circle,draw=black, inner sep=1.5pt,fill=white]
\DeclareMathOperator*{\argmin}{arg\,min}
\newcommand{\ltdots}{..}
\newcommand\BB{\ensuremath{\mathtt{b}}\xspace} 
\newcommand\E{\ensuremath{\mathtt{e}}\xspace}
\newcommand\zero{\ensuremath{\mathtt{0}}\xspace}
\newcommand\one{\ensuremath{\mathtt{1}}\xspace}
\newcommand\ov{\textsf{OV}\xspace}
\newcommand\msa{\textsf{MSA}\xspace}
\newcommand\msas{\textsf{MSAs}\xspace}
\newcommand\msamn{\textsf{MSA}$[1\ltdots m,1\ltdots n]$\xspace}
\newcommand\msaij[2]{\mathsf{MSA}[#1,#2]\xspace}
\newcommand\eds{\textsf{EDS}\xspace}
\newcommand\edses{\textsf{EDS}s\xspace}
\newcommand\efg{\textsf{EFG}\xspace}
\newcommand\efgs{\textsf{EFG}s\xspace}
\newcommand\ovh{\textsf{OVH}\xspace}
\newcommand\spell{\mathsf{spell}\xspace}
\newtheorem{theorem}{Theorem}
\newtheorem{lemma}{Lemma}
\newtheorem{proposition}{Proposition}
\newtheorem{corollary}{Corollary}
\newtheorem{observation}{Observation}
\newtheorem{definition}{Definition}
\begin{document}

\title{Algorithms and Complexity on Indexing \\Founder Graphs\thanks{This is a preprint of the extended full version of conference papers in WABI 2020 \cite{MCENT20} and ISAAC 2021 \cite{ENACTM21}. An earlier version of this preprint has also been reviewed and presented at RECOMB-seq 2021 highlights track.}}%

\author[1]{Massimo Equi}
\author[1]{Tuukka Norri}
\author[1,2]{Jarno Alanko}
\author[3]{Bastien Cazaux}
\author[1]{Alexandru I. Tomescu}
\author[1,$\star$]{Veli M\"akinen}

\affil[1]{Department of Computer Science, University of Helsinki, Finland.}
\affil[2]{Faculty of Computer Science, Dalhousie University, Canada}
\affil[3]{LIRMM, Univ. Montpellier, CNRS, France}
\affil[$\star$]{Contact: veli.makinen@helsinki.fi}

\setcounter{Maxaffil}{0}
\renewcommand\Affilfont{\itshape\small}

\maketitle
\thispagestyle{empty}

\begin{abstract}
We study the problem of matching a string in a labeled graph. Previous research has shown that unless the \emph{Orthogonal Vectors Hypothesis} (OVH) is false, one cannot solve this problem in strongly sub-quadratic time, nor index the graph in polynomial time to answer queries efficiently (Equi et al. ICALP 2019, SOFSEM 2021). These conditional lower-bounds cover even deterministic graphs with binary alphabet, but there naturally exist also graph classes that are easy to index: For example, \emph{Wheeler graphs} (Gagie et al.~\emph{Theor. Comp. Sci.} 2017) cover graphs admitting a Burrows-Wheeler transform -based indexing scheme. However, it is NP-complete to recognize if a graph is a Wheeler graph (Gibney, Thankachan, ESA 2019). 

We propose an approach to alleviate the construction bottleneck of Wheeler graphs. Rather than starting from an arbitrary graph, we study graphs induced from \emph{multiple sequence alignments} (\msas). \emph{Elastic degenerate strings} (Bernadini et al. SPIRE 2017, ICALP 2019) can be seen as such graphs, and we introduce here their generalization: \emph{elastic founder graphs}. We first prove that even such induced graphs are hard to index under OVH. Then we introduce two subclasses, repeat-free and semi-repeat-free graphs, that are easy to index. We give a linear time algorithm to construct a repeat-free non-elastic founder graph from a gapless \msa, and (parameterized) near-linear time algorithms to construct semi-repeat-free (repeat-free, respectively) elastic founder graphs from general \msas. Finally, we show that repeat-free elastic founder graphs admit a reduction to Wheeler graphs in polynomial time. 
\end{abstract}

\section{Introduction\label{sec:introduction}}

In string research, many different problems relate to the common question of how to handle a collection of strings. When such a collection contains very similar strings, it can be represented as some ``high scoring'' \emph{Multiple Sequence Alignment} (\msa), i.e., as a matrix \msamn whose $m$ rows are the individual strings each of length $n$, which may include special ``gap'' symbols such that the columns represent the aligned positions. While it is NP-hard to find an optimal \msa even under the simplest score of maximizing the number of identity columns (i.e., longest common subsequence length)~\cite{Mai78}, the central role of \msa as a model of biological evolution has resulted into numerous heuristics to solve this problem in practice~\cite{Chaetal15}. In this paper, we assume an \msa as an input.

A simple way to define the problem of finding a match for a given string in the \msa is to ask whether the string matches a substring of some row (ignoring gap symbols). This leads to the widely studied problem of indexing repetitive text collections, see, e.g., references~\cite{MNSV09jcb,Naetal13a,Naetal13b,Naetal16a,Naetal16b,GN19,GNP20}. These approaches reducing an \msa to plain text reach algorithms with linear time complexities. However, the performance of these algorithms rely on the fact that a match is always found within an individual row of the \msa.

One feature worth considering is the possibility to allow a match to jump from any row to any other row of the \msa between consecutive columns. This property is usually referred to as \emph{recombination} due to its connection to evolution. To solve this version of the problem, different approaches have to be used, and a possible alternative is a graph representation of the \msa. Figure~\ref{fig:overview-vg} shows a simple solution, which consists in turning distinct characters of each column into nodes, and then adding the edges supported by row-wise connections. In this graph, a path whose concatenation of node labels matches a given string represents a match in the original \msa (ignoring gaps). Refinements of this approach are mostly used in bioinformatics~\cite{marschall2016computational}, where recombination is a desired feature, and it is realized by the fact that the resulting graph encodes a super-set of the strings of the original \msa. 

Aligning a sequence against a graph is not a trivial task. Only quadratic solutions are known~\cite{AmirLL00,manber1992approximate, rautiainen2017aligning}, and this was recently proved to be a conditional lower bound for the problem~\cite{EGMT19}. Moreover, even attempting to index the graph to query the string faster presents significant difficulties. On one hand, indexes constructed in polynomial time still require quadratic-time queries in the worst case~\cite{Tha13}. On the other hand, worst-case linear-time queries are possible, but this has the potential to make the index grow exponentially~\cite{SVM14}. These might be the best results possible for general graphs and DAGs without any specific structural property, as the need for exponential indexing time to achieve sub-quadratic time queries constitutes another conditional lower bound for the problem~\cite{EMT21}.

Thus, if we want to achieve better performances, we have to make more assumptions on the structure of the input, so that the problem might become tractable. Following this line, a possible solution consists in identifying special classes of graphs 
that, while still able to represent any \msa, have a more limited amount of recombination, thus allowing for fast matching or fast indexing. This is the case for \emph{Elastic Degenerate Strings} (\edses) \cite{aoyama_et_alCPM2018,bernardini_et_al2019elastic,Beretal20,Beretal17,IKP17}, which can represent an \msa as a sequence of sets of strings, in which a match can span consecutive sets, using any one string in each of these (see Figure~\ref{fig:overview-eds-efg}, graph in the center). The advantage of this structure is that it is possible to perform expected-case subquadratic time queries \cite{bernardini_et_al2019elastic}. However, \edses are still hard to index~\cite{GibneySPIRE2020}, and there is a lack of results on how to derive a ``suitable'' \eds from an \msa.

In this context, we propose a generalization of an \eds to what we call an \emph{Elastic Founder Graph} (\efg). An \efg is a DAG that, as an \eds, represents an \msa as a sequence of sets of strings; each set is called a \emph{block}, and each string inside a block is represented as a labeled node. The difference with \edses is that the nodes of two consecutive blocks are not forced to be fully connected
. This means that, while in an \eds a match can always pair any string of a set with any string of the next set, in an \efg it might be the case that only some of these pairings are allowed. Figure~\ref{fig:overview-eds-efg} illustrates these differences. Allowing for more selective connectivity between consecutive blocks also means that finding a match for a string in an \efg is harder than in an \eds. This is because \edses are a special case of \efgs, hence the hardness results for the former carry to the latter. Specifically, a previous work \cite{GT19} showed that, under the \emph{Orthogonal Vectors Hypothesis} (\ovh), no index for \edses constructed in polynomial time can provide queries in time $O(|Q|+|\widetilde{T}|^\delta |Q|^\beta)$, where $|\widetilde{T}|$ is the number of sets of strings, $|Q|$ is the length of the pattern and $\beta<1$ or $\delta<1$. Nevertheless, in this work we present an even tighter quadratic lower bound for \efgs, proving that, under \ovh, an index built in time $O(|E|^\alpha)$ cannot provide queries in time $O(|Q|+|E|^\delta|Q|^\beta)$, where $|E|$ is the number of edges and $\beta<1$ or $\delta<1$. Notice that $|\widetilde{T}|$ could even be $o(|E|)$ (e.g. an \efg of two fully connected blocks), hence our lower bound more closely relates to the total size of an \efg. Additionally, the earlier lower bound  \cite{GT19} naturally applies only to indexing \edses, and is obtained by performing many hypothetical fast queries; ours is derived by first proving a quadratic \ovh-based lower bound for the \emph{online} string matching problem in \efgs, and then using a general result \cite{EMT21} to simply translate this into an indexing lower bound.

\begin{figure*}[t]
    \begin{subfigure}[b]{\linewidth}
    \includegraphics[scale=0.25]{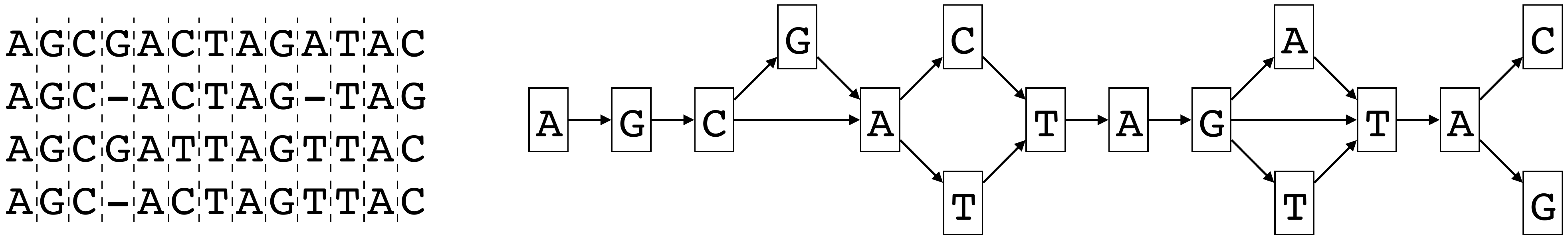}   
    \caption{A column-by-column segmentation of an \msa on the left, leading to the variation graph on the right.   \label{fig:overview-vg}}
    \end{subfigure}
    \smallskip
    \begin{subfigure}[b]{\linewidth}
    \includegraphics[scale=0.25]{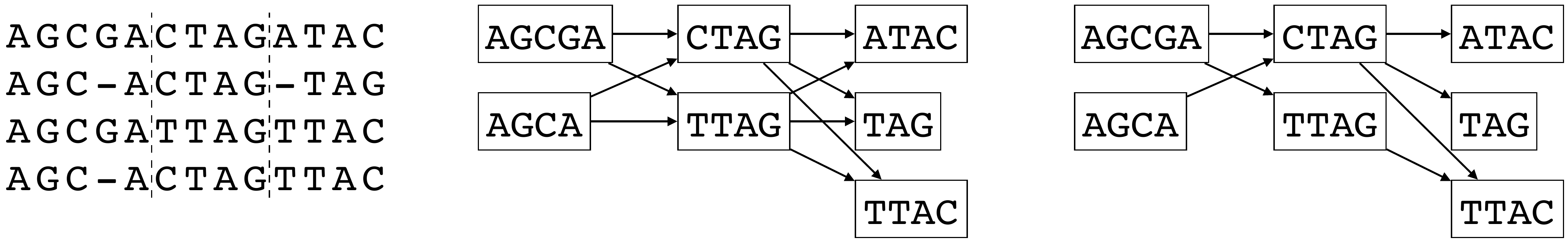} 
    \caption{A different segmentation of the \msa, leading to the \eds in the center, and the \efg on the right. Notice that in an \eds every node is connected with all nodes to the right, while in an \efg edges are added only if their endpoints are consecutive in some row of the \msa (as in the case of variation graphs).\label{fig:overview-eds-efg}}
    \end{subfigure}
    \smallskip
    \begin{subfigure}[b]{\linewidth}
    \includegraphics[scale=0.25]{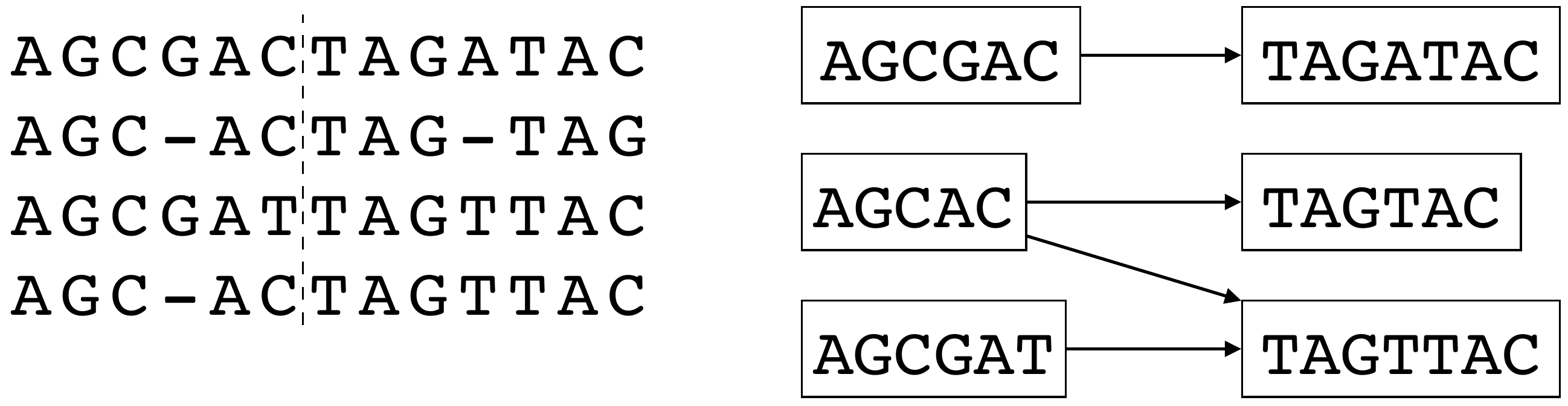} 
    \caption{A segmentation of the \msa that leads to a repeat-free \efg (i.e.~no node label has another occurrence on some path of the \efg).\label{fig:overview-rf-efg}}
    \end{subfigure}
    \smallskip
    \begin{subfigure}[b]{\linewidth}
    \includegraphics[scale=0.25]{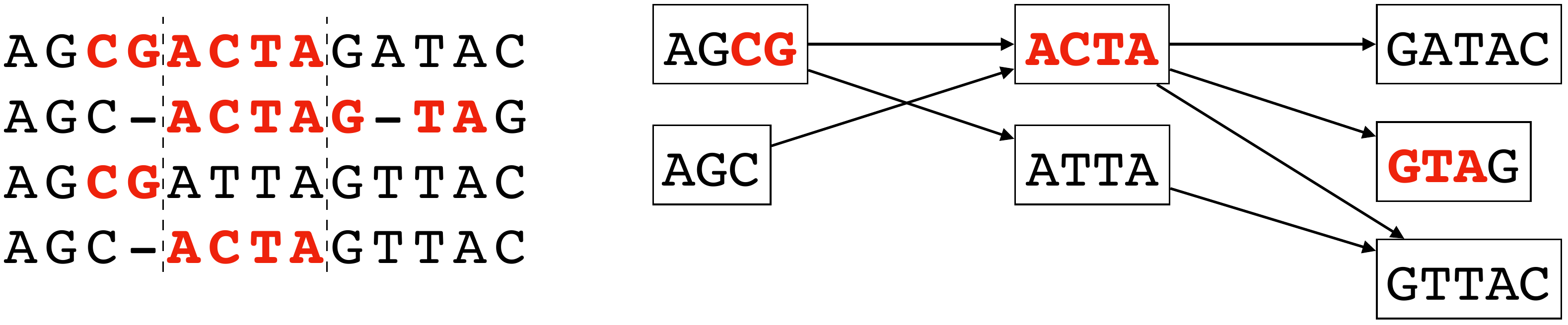} 
    \caption{A segmentation of the \msa that leads to a semi-repeat-free \efg (i.e.~no node label has another occurrence on some path of the \efg, except as a prefix of another node in the same segment). An occurrence of query $Q=\mathtt{CGACTAGTA}$ in \efg is depicted in red. As can be seen, such query does not have an occurrence in a single row of the \msa. \label{fig:overview-srf-efg}}
    \end{subfigure}
    \caption{An \msa on the left, and various graph-based representations of it on the right. Notice that in all graphs (except the \eds) edges are added only between nodes that are observed as consecutive in some row of the \msa.\label{fig:MSA_Repr}}
\end{figure*}

Then, in order to break through these lower bounds, we identify two natural classes of \efgs, which respect what we call \emph{repeat-free} and \emph{semi-repeat-free} properties. The repeat-free property (Figure~\ref{fig:overview-rf-efg}) forces each string in each block to occur only once in the entire graph, and the semi-repeat-free property (Figure~\ref{fig:overview-srf-efg}) is a weaker form of this requirement. Thanks to these properties, we can more easily locate substrings of a query string in repeat-free \efgs and semi-repeat-free \efgs. In particular, (semi-)repeat-free \efgs and \edses can be indexed in polynomial time for linear time string matching.

One might think that these time speedups come with a significant cost in terms of flexibility. Instead, the special structure of these \efgs do not hinder their expressive power. Indeed, we show that an \msa can be ``optimally'' segmented into blocks inducing a repeat-free or semi-repeat-free \efg. Clearly, this depends on how one chooses to define optimality. We consider three optimality notions: maximum number of blocks, minimum maximum block height, and minimum maximum block length. In Figure~\ref{fig:overview-srf-efg}, the first score is 3, second is 3, and the third is 5. The two latter notions stem from the earlier work on segmentations~\cite{NCKM19,CKMN19}, now combined with the (semi)-repeat-free constraint. The first is the simplest optimality notion, now making sense combined with the (semi)-repeat-free constraint. 

For each of these optimality notions, we give a polynomial-time dynamic programming algorithm that converts an \msa into an optimal (semi-)repeat-free \efg if such exists. For the first and the third notion combined with the semi-repeat-free constraint, we derive more involved solutions with almost optimal $O(mn \log m)$ and $O(mn \log m+n\log \log n)$ running time, respectively. Futhermore, we give an (optimal) $O(mn)$ time solution for the special case of \msa without gap symbols. The algorithm for the special case uses a monotonicity property not holding with gaps. With general \msas we delve into the combinatorial properties of repetitive string collections synchronized with gaps and show how to use string data structures in this setting. The techniques can be easily adapted for other notions of optimality.

Another class of graphs that admits efficient indexing are Wheeler graphs~\cite{GMS17}, which offer an alternative way to model an \efg and thus a \msa. However, it is NP-complete to recognize if a given graph is a Wheeler graph~\cite{GT19}, and thus, to use the efficient algorithmic machinery around Wheeler graphs~\cite{Alaetal20} one needs to limit the focus on indexable graphs that admit efficient construction. Indeed, we show that any \efg that respects the repeat-free property can be reduced to a Wheeler graph in polynomial time.
Interestingly, we were not able to modify this reduction to cover the semi-repeat-free case, leaving it open if these two notions of graph indexability have indeed different expressive power, and whether there are more graph classes with distinctive properties in this context.

The paper is structured as follows. At the high level, we first focus on gapless \msas, and then we extend the results to the general case. In more detail, Section~ \ref{sect:definitions} defines the founder graph concepts and explores some basic techniques. Section~\ref{sect:repeat-freeness} gives the first indexing results using just classical data structures as a warm up. Section~\ref{sect:construction} covers linear time construction of repeat-free (non-elastic) founder graphs from gapless \msas. Section~\ref{sect:compressedindexing} improves over the basic indexing results using succinct data structures. Section~\ref{sect:hardness} gives the proof of conditional indexing hardness when moving from the gapless case to the general case of \efgs. Indexing results are generalized to (semi-)repeat-free \efgs in Section~\ref{sect:gaps}.
Construction results are generalized to (semi-)repeat-free \efgs in Section~\ref{sect:efgconstruction}. Connection to Wheeler graphs is considered in Section~\ref{sect:wheeler}. Implementation is discussed in Section~\ref{sect:experiments}. Finally, future directions are discussed in Section~\ref{sect:discussion}.

\section{Definitions and basic tools\label{sect:definitions}}

\subsection{Strings}

We denote integer intervals by $[i..j]$. Let $\Sigma = \{1, \ldots, \sigma\}$ be an alphabet of size $|\Sigma| = \sigma$.
A \emph{string} $T[1..n]$ is a sequence of symbols from $\Sigma$, i.e. $T\in \Sigma^n$, where 
$\Sigma^n$ denotes the set of strings of length $n$ under the alphabet $\Sigma$.
A \emph{suffix} of string $T[1..n]$ is $T[i..n]$ for $1\leq i\leq n$. 
A \emph{prefix} of string $T[1..n]$ is $T[1..i]$ for $1\leq i\leq n$. 
A \emph{substring} of string $T[1..n]$ is $T[i..j]$ for $1\leq i\leq j \leq n$. 
The \emph{length} of a string $T$ is denoted $|T|$. The \emph{empty string} is the string of length $0$.
In particular, substring $T[i..j]$ where $j<i$ is the empty string. The \emph{lexicographic order} of two strings $A$ and $B$ is naturally defined by the order of the alphabet: $A<B$ iff $A[1..i]=B[1..i]$ and $A[i+1]<B[i+1]$ for some $i\geq 0$. If $i+1>\min(|A|,|B|)$, then the shorter one is regarded as smaller. However, we usually avoid this implicit comparison by adding \emph{end marker} $\mathbf{0}$ to the strings.
Concatenation of strings $A$ and $B$ is denoted $AB$.

\subsection{Elastic founder graphs}

As mentioned in the introduction, our goal is to compactly represent an \msa using an elastic founder graph. In this section we formalize these concepts.

A \emph{multiple sequence alignment} \msamn is a matrix with $m$ strings drawn from $\Sigma \cup \{\text{-}\}$, each of length $n$, as its rows. Here $\text{-} \notin \Sigma$ is the \emph{gap} symbol. For a string $X \in \left(\Sigma \cup \{\text{-}\}\right)^*$, we denote $\spell(X)$ the string resulting from removing the gap symbols from $X$.

Let $\mathcal{P}$ be a \emph{partitioning} of $[1..n]$, that is, a sequence of subintervals $\mathcal{P}=[x_1..y_1]$, $[x_2..y_2],\ldots,[x_b..y_b]$, where $x_1=1$, $y_b=n$, and for all $j>2$, $x_j=y_{j-1}+1$. A \emph{segmentation} $S$ of \msamn based on partitioning $\mathcal{P}$ is a sequence of $b$ sets $S^k= \{\spell(\msaij{i}{x_k..y_k}) \mid 1\leq i\leq m\}$ for $1\leq k\leq b$; in addition, we require for a (proper) segmentation that $\spell(\msaij{i}{x_k..y_k})$ is not an empty string for any $i$ and $k$. We call set $S^k$ a \emph{block}, while $\msaij{1..m}{x_k..y_k}$ or just $[x_k..y_k]$ is called a \emph{segment}. The \emph{length} of block $S^k$ is $L(S^k)=y_k-x_k+1$ and the \emph{height} of block $S^k$ is $H(S^k)=|S^k|$. Segmentation naturally leads to the definition of a founder graph through the block graph concept:

\begin{definition}[Block Graph]
A \emph{block graph} is a graph $G=(V,E,\ell)$ where $\ell: V \rightarrow \Sigma^+$ is a function that assigns a string label to every node and for which the following properties hold.
\begin{enumerate}
	\item Set $V$ can be partitioned into a sequence of $b$ \emph{blocks} $V^1, V^2, \ldots, V^b$, that is, $V = V^1 \cup V^2 \cup \cdots \cup V^b$ and $V^i \cap V^j = \emptyset$ for all $i\neq j$;
	\item If $(v,w) \in E$ then $v \in V^i$ and $w \in V^{i+1}$ for some $1 \leq i \leq b-1$; and
	\item if $v,w \in V^i$ then $|\ell(v)| = |\ell(w)|$ for each $1 \leq i \leq b$ and if $v\neq w$, $\ell(v) \neq \ell(w)$.
\end{enumerate}
\end{definition}

With \emph{gapless} \msas, block $S^k$ equals segment $\msaij{1..m}{x_k..y_k}$, and in that case the \emph{founder graph} is a block graph induced by segmentation $S$. The idea is to have a graph in which the nodes represent the strings in $S$ while the edges retain the information of how such strings can be recombined to spell any sequence in the original \msa. With general \msas with gaps, we consider the following extension, with an analogy to \edses~\cite{bernardini_et_al2019elastic}:   

\begin{definition}[Elastic block and founder graphs]
We call a block graph \emph{elastic} if its third condition is relaxed in the sense that each $V^i$ can contain non-empty variable-length strings. An \emph{elastic founder graph} (\efg) is an elastic block graph $G(S) = (V,E,\ell)$ \emph{induced} by a segmentation $S$ as follows: For each $1 \leq k \leq b$ we have $S^k = \{\spell(\msaij{i}{x_k..y_k}) \mid 1\leq i\leq m\} = \{\ell(v) : v \in V^k\}$. It holds $(v,w) \in E$ if and only if there exists $k \in [1 \ltdots b-1]$ and $t \in [1 \ltdots m]$ such that $v \in V^k$, $w \in V^{k+1}$ and $\spell(\msaij{t}{x_k..y_{k+1}})= \ell(v)\ell(w)$.
\end{definition}

By definition, (elastic) founder and block graphs are acyclic. For convention, we interpret the direction of the edges as going from left to right. Consider a path $P$ in $G(S)$ between any two nodes. The label $\ell(P)$ of $P$ is the concatenation of labels of the nodes in the path. Let $Q$ be a query string. We say that $Q$ \emph{occurs} in $G(S)$ if $Q$ is a substring of $\ell(P)$ for any path $P$ of $G(S)$.
Figure~\ref{fig:MSA_Repr} illustrates such a query.

As we later learn, some further properties on founder graphs are needed for supporting fast queries:
\begin{definition}
\efg $G(S)$ is \emph{repeat-free} if each $\ell(v)$ for $v\in V$ occurs in $G(S)$ only as prefix of paths starting with $v$. 
\end{definition}

We also consider a variant that is relevant due to variable-length strings in the blocks:
\begin{definition}
\efg $G(S)$ is \emph{semi-repeat-free} if each $\ell(v)$ for $v\in V$ occurs in $G(S)$ only as prefix of paths starting with $w\in V$, where $w$ is from the same block as $v$. 
\end{definition}

These definitions also apply to general elastic block graphs and to elastic degenerate strings as their special case.

We note that not all \msas admit a segmentation leading to a (semi-)repeat-free \efg
, e.g.~an alignment with rows \texttt{-A} and \texttt{AA}. However, our algorithms detect such cases, 
thus one can build an \efg consisting of just one block with the rows of the \msa (with gaps removed). Such \efgs can be indexed using standard string data structures to support efficient queries.

\subsection{Basic tools}

A \emph{trie} \cite{Bri59} of a set of strings is a rooted directed tree with outgoing edges of each node labeled by distinct characters such that there is a root to leaf path spelling each string in the set; the shared part of the root to leaf paths to two different leaves spell the common prefix of the corresponding strings. Such a trie can be computed in $O(N \log \sigma)$ time, where $N$ is the total length of the strings, and it supports string queries that require $O(q \log \sigma)$ time, where $q$ is the length of the queried string. 

In a \emph{compact trie} the maximal non-branching paths of a trie become edges labeled with the concatenation of labels on the path. \emph{Suffix tree} is the compact trie of all suffixes of string $T\mathbf{0}$. In this case, the edge labels are substrings of $T$ and can be represented in constant space as an interval. Such tree takes linear space and can be constructed in linear time~\cite{Farach97} so that when reading the leaves from left to right, the suffixes are listed in their lexicographic order. The leaves hence form the \emph{suffix array}~\cite{MM93} of string $T$, which is an array $\mathtt{SA}[1..n+1]$ such that $\mathtt{SA}[i]=j$ if $T'[j..n+1]$ is the $i$-th smallest suffix of string $T'=T\mathbf{0}$, where $T \in \{1,2,\ldots,\sigma\}^n$. A \emph{generalized suffix tree or array} is one built on a set of strings. In this case, string $T$ above is the concatenation of the strings with symbol $\mathbf{0}$ between each.

Let $Q[1..m]$ be a query string. If $Q$ occurs in $T$, then the \emph{locus} or \emph{implicit node} of $Q$ in the suffix tree of $T$ is $(v,k)$ such that $Q=XY$, where $X$ is the path spelled from the root to the parent of $v$ and $Y$ is the prefix of length $k$ of the edge from the parent of $v$ to $v$. The leaves of the subtree rooted at $v$ are then all the suffixes sharing the common prefix $Q$. Let the left- and right-most leaves in that subtree be the $c$-th and $d$-th smallest suffixes of $T\mathbf{0}$. Then $Q=T[i..i+|Q|-1]$ for  $i\in \mathtt{SA}[c..d]$ and does not occur elsewhere. We use heavily this connection of suffix tree node $v$ and suffix array interval $\mathtt{SA}[c..d]$. Moreover, there are succinct data structures to do this mapping in both directions in constant time~\cite{Sad07}. 

Let $aX$ and $X$ be paths spelled from the root of a suffix tree to nodes $v$ and $w$, respectively. Then one can store a \emph{suffix link} from $v$ to $w$. \emph{Implicit suffix links} for implicit nodes are defined analogously, but they are not stored explicitly. In many algorithms, one can simulate implicit suffix links through explicit suffix links, as the work amortizes to a constant per step.

An \emph{Aho-Corasick automaton} \cite{AC75} is a trie of a set of strings with additional pointers (fail-links). While scanning a query string, these pointers (and some shortcut links on them) allow to identify all the positions in the query at which a match for any of the strings occurs. Construction of the automaton takes the same time as that of the trie. Queries take $O(q \log \sigma +\mathtt{occ})$ time, where $\mathtt{occ}$ is the number of matches.

A \emph{Burrows-Wheeler transform} BWT$[1..n+1]$ \cite{BW94} of string $T$ is such that BWT$[i]=T'[\mathtt{SA}[i]-1]$, where $T'=T\mathbf{0}$ and $T'[-1]$ is regarded as $T'[n+1]=\mathbf{0}$.

A \emph{bidirectional BWT index} \cite{SOG12,BCKM20} is a succinct index structure based on some auxiliary data structures on BWT. Given a string $T \in \Sigma^n$, with $\sigma\leq n$, such index occupying $O(n \log \sigma)$ bits of space can be built in randomized $O(n)$ time and it supports finding in $O(q)$ time if a query string $Q[1..q]$ appears as substring of $T$ \cite{BCKM20}. Moreover, the query returns an interval pair  $([i..j]$,$[i'..j'])$ such that suffixes of $T$ starting at positions $\mathtt{SA}[i],\mathtt{SA}[i+1],\ldots, \mathtt{SA}[j]$ share a common prefix matching the query. Interval $[i'..j']$ is the corresponding interval in the suffix array of the reverse of $T$. Let $([i..j]$,$[i'..j'])$ be the interval pair corresponding to query substring $Q[l..r]$. A \emph{bidirectional backward step} updates the interval pair $([i..j]$,$[i'..j'])$ to the corresponding interval pair when the query substring $Q[l..r]$ is extended to the left into $Q[l-1..r]$ (\emph{left extension}) or to the right into $Q[l..r+1]$ (\emph{right extension}). Such step takes constant time \cite{BCKM20}.

A \emph{fully-functional bidirectional BWT index} \cite{BC19} expands the steps to allow contracting symbols from the left or from the right. That is, substring $Q[l..r]$ can be modified into $Q[l+1..r]$ (\emph{left contraction}) or to $Q[l..r-1]$ (\emph{right contraction}) and the the corresponding interval pair can be updated in constant time.

Among the auxiliary structures used in BWT-based indexes, we explicitly use the \emph{rank} and \emph{select} structures: String $B[1..n]$ from binary alphabet is called a \emph{bitvector}. Operation $\mathtt{rank}(B,i)$ returns the number of $1$s in $B[1..i]$. Operation $\mathtt{select}(B,j)$ returns the index $i$ containing the $j$-th $1$ in $B$.  Both queries can be answered in constant time using an index requiring $o(n)$ bits in addition to the bitvector itself \cite{Jac89}.

We summarize a result that we use later.
\begin{lemma}[\cite{BC19}]\label{lemma:randomized_index_construction}
Given a text $T$ of length $n$ from an alphabet  $\{1,2,\ldots, \sigma\}$, it is possible to construct in $O(n)$ randomized time and $O(n \log \sigma)$ bits of space a bidirectional BWT index that supports left extensions and right contractions in $O(1)$ time.
\end{lemma}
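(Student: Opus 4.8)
The plan is to assemble the index from the two ingredients already introduced: the randomized linear-time, $O(n\log\sigma)$-bit construction of a bidirectional BWT index, and the fully-functional machinery of \cite{BC19} that augments such an index with constant-time contraction steps. Since the statement only requires left extensions and right contractions, I would retain exactly the structures those two operations touch and verify that the stated time and space bounds survive this specialization.

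First I would build the Burrows--Wheeler transforms of $T\mathbf{0}$ and of its reverse, together with rank/select indices answering queries in constant time within $O(n\log\sigma)$ bits; all of this is constructible in randomized $O(n)$ time. With these in hand, a left extension $Q[l..r]\mapsto Q[l-1..r]$ is the standard bidirectional backward step: one rank query on the BWT of $T$ produces the new interval $[i..j]$, and the companion interval $[i'..j']$ over the reverse text is updated by accumulating the counts of the lexicographically smaller left extensions, all in $O(1)$ time.

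The crux, which I expect to be the main obstacle, is the right contraction $Q[l..r]\mapsto Q[l..r-1]$ in constant time. A contraction must \emph{enlarge} the interval pair, and the enlarged intervals are not determined by the current pair alone: removing the last symbol of $Q[l..r]$ amounts to dropping the first symbol of its reversal, i.e.\ to following a suffix link in the suffix tree of the reverse text and thus to an ancestor move in the corresponding topology. Here I would invoke the fully-functional representation of \cite{BC19}, which stores a succinct encoding of this topology (a balanced-parentheses tree augmented with string-depth information and range-minimum/rank-select primitives) so that the node realizing the contracted string, together with its intervals in both directions, is recovered in $O(1)$ time. I would then check that this auxiliary encoding still fits in $O(n\log\sigma)$ bits and is built in randomized $O(n)$ time, and note that discarding the symmetric structures needed for right extensions and left contractions only simplifies the construction, so the overall bounds match the claim.
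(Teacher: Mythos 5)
The paper gives no proof of this lemma at all: it is stated verbatim as a result imported from \cite{BC19} (with the constant-time bidirectional extension machinery originating in \cite{BCKM20}), so the ``paper's approach'' is simply the citation, and your sketch --- building the BWTs of $T$ and its reverse with randomized linear-time construction, handling left extensions by bidirectional backward steps, and deferring the $O(1)$ right contraction to the fully-functional suffix-tree/suffix-link-tree topologies of \cite{BC19} --- reconstructs exactly the ingredients of that cited result and rests on the same citation, so it is consistent with the paper's treatment. One caution if you wanted to make the sketch self-contained: your claim of ``rank/select indices answering queries in constant time within $O(n\log\sigma)$ bits'' is too strong for general $\sigma$; the constant-time backward step actually relies on partial-rank structures built from monotone minimal perfect hash functions (which is precisely where the randomization in the construction comes from), a point the paper itself hints at when discussing the deterministic $O(\log\sigma)$-time alternative in Lemma~\ref{lemma:index_construction}.
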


\subsubsection{Deterministic index construction for integer alphabets}

We now describe how to replace the randomized linear time construction of the bidirectional BWT index with a deterministic one, so that left extensions and right contractions are supported in $O(\log \sigma)$ time. We need only a single-directional subset of the index of Belazzougui and Cunial \cite{BC19}, consisting of an index on the BWT, augmented with balanced parentheses representations of the topologies of the suffix tree of $T$ and of the suffix link tree of the reverse of $T$, such that the nodes corresponding to maximal repeats in both topologies are marked \cite{BC19}.

Belazzougui et al. showed how to construct the BWT of a string $O(n)$ deterministic time and $O(n \log \sigma)$ bits of space \cite{BCKM20}. Their algorithm can be used to construct both the BWT of $T$ and the BWT of the reverse of $T$ in $O(n)$ time. We can build the bidirectional BWT index \cite{belazzougui2013versatile,SOG12} by indexing both BWTs as \emph{wavelet trees} \cite{GGV03}. The bidirectional index can then be used to construct both of the required tree topologies in $O(n \log \sigma)$ time using bidirectional extension operations and the counter-based topology construction method of Belazzougui et al. \cite{BCKM20}.  See the supplement of a paper on variable order Markov models by Cunial et al. \cite{cunial2019framework} for more details on the construction of the tree topologies. The topologies are then indexed for various navigational operations required by the contraction operation described in \cite{BC19}.

This index enables left extensions in $O(\log \sigma)$ time using the BWT of $T$, and right contractions in constant time using the succinct tree topologies as shown by Belazzougui and Cunial \cite{BC19}. We summarize the result in the lemma below.

\begin{lemma}\label{lemma:index_construction}
Given a text $T$ of length $n$ from an alphabet  $\{1,2,\ldots, \sigma\}$, it is possible to construct in $O(n \log \sigma)$ deterministic time and $O(n \log \sigma)$ bits of space a bidirectional BWT index that supports left extensions in $O(\log \sigma)$ time and and right contractions in $O(1)$ time.
\end{lemma}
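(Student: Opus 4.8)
The plan is to assemble the deterministic index out of the same ingredients that underlie the randomized index of Lemma~\ref{lemma:randomized_index_construction}, replacing only the randomized steps and absorbing the resulting slowdown into the left-extension operation. Following Belazzougui and Cunial \cite{BC19}, I would build a single-directional version of their structure: an index on the BWT of $T$, together with balanced-parentheses topologies of the suffix tree of $T$ and of the suffix-link tree of the reverse of $T$, in which the nodes corresponding to maximal repeats are marked in both topologies. The right contraction of \cite{BC19} is realized purely by navigating these topologies, so it will remain $O(1)$; the only operation that becomes slower is the left extension, which I would implement as a backward step on the BWT.

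First I would construct the BWT of $T$ and the BWT of the reverse of $T$ deterministically in $O(n)$ time and $O(n \log \sigma)$ bits using the algorithm of Belazzougui et al.\ \cite{BCKM20}. I would then index each BWT as a wavelet tree \cite{GGV03}, built in $O(n \log \sigma)$ time; the wavelet tree supports the \texttt{rank}/\texttt{select} queries needed for a backward step in $O(\log \sigma)$ time, which is precisely the cost of one left extension and the source of the extra logarithmic factor relative to Lemma~\ref{lemma:randomized_index_construction}. Indexing the two BWTs together yields a bidirectional BWT index \cite{belazzougui2013versatile,SOG12} supporting each bidirectional extension step in $O(\log \sigma)$ time.

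Next I would construct the two required tree topologies. Using the bidirectional extension operations on the index just built, together with the counter-based topology construction of Belazzougui et al.\ \cite{BCKM20} (following the detailed description in the supplement of \cite{cunial2019framework}), both the suffix-tree topology of $T$ and the suffix-link-tree topology of the reverse of $T$ can be emitted as balanced-parentheses sequences in $O(n \log \sigma)$ total time, since each of the $O(n)$ extension steps costs $O(\log \sigma)$. During this pass I would mark the maximal-repeat nodes in each topology. Finally, I would preprocess these balanced-parentheses representations for the constant-time navigational primitives (e.g.\ via \texttt{rank}/\texttt{select} on the parenthesis bitvectors \cite{Jac89} and the constant-time node-to-interval mapping \cite{Sad07}) that the contraction routine of \cite{BC19} invokes, so that a right contraction runs in $O(1)$ time.

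The main obstacle is the topology-construction step: one must verify that the counter-based method of \cite{BCKM20} can be driven by the $O(\log \sigma)$-time bidirectional extensions of the wavelet-tree index and still completes in $O(n \log \sigma)$ time within $O(n \log \sigma)$ bits, and that the resulting marked topologies carry exactly the information the right-contraction operation of \cite{BC19} needs in order to preserve its $O(1)$ running time. Once these two facts are checked, the claimed bounds follow by summing the $O(n)$-time deterministic BWT construction, the $O(n \log \sigma)$-time wavelet-tree and topology construction, and the $O(n)$-bit balanced-parentheses representations together with their $o(n)$-bit navigational indexes, all of which fit within $O(n \log \sigma)$ bits.
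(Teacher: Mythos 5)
Your proposal is correct and follows essentially the same route as the paper's own proof: deterministic BWT construction of $T$ and its reverse via Belazzougui et al.~\cite{BCKM20}, wavelet-tree indexing for $O(\log \sigma)$-time left extensions, counter-based construction of the marked suffix-tree and suffix-link-tree topologies as in \cite{BCKM20,cunial2019framework}, and the balanced-parentheses navigation of \cite{BC19} for $O(1)$-time right contractions. The only difference is presentational: you explicitly flag the verification that the topology construction tolerates $O(\log \sigma)$-time extensions, which the paper handles by direct citation.
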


We note that it may be possible to improve the time of left extensions to $O(\log \log \sigma)$ by replacing monotone minimum perfect hash functions by slower yet deterministic linear time constructable data structures in all constructions leading to Theorem 6.7 of Belazzougui et al. \cite{BCKM20}.

\section{Indexable repeat-free founder graphs\label{sect:repeat-freeness}}

We now consider non-elastic founder graphs induced from gapless \msas, and later turn back to the general case. We show that there exists a family of founder graphs that admit a polynomial time constructable index structure supporting fast string matching. First, a trivial observation: the input multiple alignment is a founder graph for the segmentation consisting of only one segment. Such founder graph (set of sequences) can be indexed in linear time to support linear time string matching \cite{BCKM20}. Now, the question is, are there other segmentations that allow the resulting founder graph to be indexed in polynomial time? We show that this is the case.

\begin{proposition}
Repeat-free founder graphs can be indexed in polynomial time to support polynomial time string queries.
\label{prop:blockrepeatfree}
\end{proposition}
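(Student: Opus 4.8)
The plan is to reduce testing whether a query $Q$ occurs in $G(S)$ to a polynomially bounded collection of \emph{anchored}, almost-deterministic walks through the blocks, exploiting that inside every block the labels are distinct and of equal length and that the repeat-free property turns any fully matched node label into an unambiguous anchor. First I would build, in polynomial time, a generalized suffix tree (equivalently, the suffix array with the constant-time locus/interval mapping from the Basic tools section) over the multiset of node labels $\{\ell(v) : v\in V\}$, and store, for every node, its block index and label length together with the edge set $E$. The total length of the labels is at most $mn$, so all of this fits in polynomial space and is built by the standard constructions recalled earlier. I would then record the occurrence characterization that drives the query: $Q$ occurs in $G(S)$ if and only if either \emph{(i)} $Q$ is a substring of a single label $\ell(v)$, or \emph{(ii)} there is a path $v_1 v_2 \cdots v_r$ with $r\ge 2$ and a factorization $Q=Q_1Q_2\cdots Q_r$ in which $Q_1$ is a nonempty suffix of $\ell(v_1)$, $Q_r$ is a nonempty prefix of $\ell(v_r)$, every interior factor satisfies $Q_j=\ell(v_j)$, and $(v_j,v_{j+1})\in E$ for all $j$. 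Case \emph{(i)} is settled by a single substring query in the suffix tree of the labels.

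The crux is that case \emph{(ii)} can be decided by forced matching once the starting node $v_1$ and the offset $o$ at which $Q$ begins inside $\ell(v_1)$ are fixed, so that $Q_1=\ell(v_1)[o..L(S^{k})]$ is required to be a prefix of $Q$, where block $k$ contains $v_1$. From the first block boundary onward the walk is forced: after consuming a node of block $k'$, if the remaining suffix of $Q$ still has length at least $L(S^{k'+1})$, then the next node must carry the label equal to the next $L(S^{k'+1})$ characters of $Q$, and since the labels within a block are distinct and equally long this successor is unique; the factorization at the boundaries is itself forced by the path, so the interior factors are genuinely full-label matches anchored at block boundaries. Only at the last block, where the remaining suffix of $Q$ is a proper prefix of the final label, may several candidate nodes exist, and there it suffices to test existentially whether some out-neighbor of the current node has a label with the required prefix.

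The query algorithm therefore enumerates the candidate starts: for each node $v_1$ and each offset $o$ for which $Q[1..\,L(S^{k})-o+1]$ is a suffix of $\ell(v_1)$, it follows the forced walk using suffix-tree navigation and edge lookups, accepting if all of $Q$ is consumed. There are $O(mn\cdot n)$ candidate pairs $(v_1,o)$ and each walk costs time polynomial in $|Q|$ (one unique-label lookup plus one edge check per block boundary, and an existential scan over the out-neighbors of the last node), so the whole procedure runs in polynomial time on a polynomially constructible index, which is exactly the claim.

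The step I expect to be the main obstacle is making the correctness of case \emph{(ii)} fully rigorous, namely proving that the anchored walks miss no occurrence and that no spurious branching has to be explored in the interior. This is precisely where the repeat-free property earns its place: the guarantee that each $\ell(v)$ occurs in $G(S)$ only as a prefix of paths starting at $v$ forbids an interior factor $Q_j=\ell(v_j)$ from being realized at any node other than $v_j$ and prevents it from being matched across a block boundary, so that what would otherwise be an exponential search over paths collapses to a unique continuation at every internal boundary. The remaining care is the bookkeeping at the two ends, where $Q_1$ is only a suffix and $Q_r$ only a prefix of their nodes; these are absorbed, respectively, by the start enumeration and by the existential test at the final block, and I would verify that both remain polynomially bounded.
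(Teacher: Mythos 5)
Your proof is correct, but it takes a genuinely different route from the paper's. The paper proves the proposition by (i) building an Aho--Corasick automaton on the node labels to locate a \emph{single} candidate occurrence of some $\ell(v)$ inside $Q$, (ii) verifying only that one candidate by walking left and right through per-node tries $\mathcal{R}(v)$ and $\mathcal{F}(v)$, and (iii) catching queries that span at most two nodes with a BWT index on the concatenation $\prod_{(v,w)\in E}\ell(v)\ell(w)\mathbf{0}$; the crux there --- and the only place repeat-freeness enters --- is the lemma that verifying one \emph{arbitrary} candidate suffices, because a failed verification from $Q[i..j]=\ell(v)$ together with a successful occurrence through another candidate would produce a second occurrence of $\ell(v)$ in the graph, contradicting the repeat-free property. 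You replace this with exhaustive enumeration of anchored starts $(v_1,o)$ followed by forced walks, and it is worth noting that the determinism of your interior steps does not come from repeat-freeness at all: it follows from condition 3 of the block-graph definition (labels within a block are pairwise distinct and of equal length), so your closing paragraph mis-attributes the source of uniqueness. The pleasant consequence is that your argument actually proves something more general than the proposition: \emph{every} non-elastic founder graph, repeat-free or not, can be indexed in polynomial time for polynomial-time queries. What this generality costs is speed: your enumeration gives roughly $O(N\,|Q|)$ time per query (your count of $O(mn\cdot n)$ candidate pairs is loose, since the total label length $N\le mn$ already bounds the number of pairs $(v_1,o)$), i.e., query time growing with the graph, whereas the paper's proposition is a warm-up for Theorem~\ref{thm:blockrepeatfree}, where repeat-freeness buys query time $O(|Q|\log\sigma)$ --- independent of the graph size --- and later, in Theorem~\ref{thm:blockrepeatfreeBWTindex}, $O(|Q|)$. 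In short, your route is more elementary and exposes that the repeat-free hypothesis is unnecessary at the coarse ``polynomial indexing, polynomial queries'' granularity of the statement; the paper's route is the one in which that hypothesis earns fast, graph-size-independent queries.
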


To prove the proposition, we construct such an index and show how queries can be answered efficiently. Our first solution uses just classical data structures, and works as a warm up: Later we improve this solution using succinct data structures, and while doing so we exploit the connections to the derivations in this section.

Let $P(v)$ be the set of all paths starting from node $v$ and ending in a sink node. Let $P(v,i)$ be the set of \emph{suffix path labels} $\{\ell(L)[i..]\mid L \in P(v)\}$ for $1\leq i \leq |\ell(v)|$. Consider sorting $\mathcal{P}=\cup_{v \in V,1\leq i\leq |\ell(v)|} P(v,i)$ in lexicographic order. Then one can binary search any query string $Q$ in $\mathcal{P}$ to find out if it occurs in $G(S)$ or not. The problem with this approach is that $\mathcal{P}$ is of exponential size. 

However, if we know that $G(S)$ is repeat-free, we know that the lexicographic order of $\ell(L)[i..]$, $L \in P(v)$ in $\mathcal{P}$, is fully determined by the prefix $\ell(v)[i..|\ell(v)|]\ell(w)$ of $\ell(L)[i..]$, where $w$ is the node following $v$ on the path $L$, except against other suffix path labels starting with $\ell(v)[i..|\ell(v)|]\ell(w)$. Let $P'(v,i)$ denote the set of suffix path labels cut in this manner. Now the corresponding set $\mathcal{P}'=\cup_{v \in V,1\leq i\leq |\ell(v)|} P'(v,i)$ is no longer of exponential size. Consider again binary searching a string $Q$ in sorted $\mathcal{P}'$. If $Q$ occurs in $\mathcal{P}'$ then it occurs in $G(S)$. If not, $Q$ has to have some $\ell(v)$ for $v\in V$ as its substring in order to occur in $G(S)$. 

To figure out if $Q$ contains $\ell(v)$ for some $v\in V$ as its substring, we build an Aho-Corasick automaton \cite{AC75} for $\{\ell(v) \mid v \in V\}$. Scanning this automaton takes $O(|Q|\log \sigma)$ time and returns such $v \in V$ if it exists. 

To verify such a potential match, we need several tries \cite{Bri59}. For each $v\in V$, we build tries $\mathcal{R}(v)$ and $\mathcal{F}(v)$ on the sets 
$\{\ell(u)^{-1} \mid (u,v)\in E\}$ and $\{\ell(w) \mid (v,w)\in E\}$, respectively, where $X^{-1}$ denotes the reverse $x_{|X|}x_{|X|-1}\cdots x_1$ of string $X=x_1x_2 \cdots x_{|X|}$.

Assume now we have located (using the Aho-Corasick automaton) $v\in V$ with $\ell(v)$ such that $\ell(v)=Q[i..j]$, where $v$ is at the $k$-th block of $G(S)$. We continue searching $Q[1..i-1]$ from right to left in trie $\mathcal{R}(v)$. If we reach a leaf after scanning $Q[i'..i-1]$, we continue the search with $Q[1..i'-1]$ on trie $\mathcal{R}(v')$, where $v'\in V$ is the node at block $k-1$ of $G(S)$ corresponding to the leaf we reached in the trie. If the search succeeds after reading $Q[1]$ we have found a path in $G(S)$ spelling $Q[1..j]$. We repeat the analogous procedure with $Q[j..m]$ starting from trie $\mathcal{F}(v)$.
That is, we can verify a candidate occurrence of $Q$ in $G(S)$ in $O(|Q|\log \sigma)$ time, as the search in the tries takes $O(\log \sigma)$ time per step. 

We are now ready to specify a theorem that reformulates Proposition~\ref{prop:blockrepeatfree} in detailed form.

\begin{theorem}
Let $G=(V,E)$ be a repeat-free founder graph with blocks $V^1, V^2, \ldots, V^b$ such that $V=V^1 \cup V^2 \cup \cdots \cup V^b$. We can preprocess an index structure for $G$ in $O((N+L|E|)\log \sigma)$ time, where $\{1,\ldots,\sigma\}$ is the alphabet for node labels, $L=\max_{v \in V} |\ell(v)|$, $N=\sum_{v \in V} |\ell(v)|$, and $\sigma\leq N$. Given a query string $Q[1..q] \in \{1,\ldots,\sigma\}^q$, we can use the index structure to find out if $Q$ occurs in $G$. This query takes $O(|Q| \log \sigma)$ time. 
\label{thm:blockrepeatfree}
\end{theorem}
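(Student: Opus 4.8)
The plan is to turn the informal recipe above into an \emph{if-and-only-if} characterization of occurrence, prove that characterization (this is where repeat-freeness is really used), and only then tally the construction and query costs. Concretely, the index has three parts: a structure over $\mathcal{P}'$ supporting prefix search of $Q$, an Aho--Corasick automaton on $\{\ell(v):v\in V\}$, and the per-node tries $\mathcal{R}(v),\mathcal{F}(v)$. The claim I would isolate is: $Q$ occurs in $G$ \emph{iff} either (i) $Q$ is a prefix of some string in $\mathcal{P}'$, or (ii) Aho--Corasick reports a full node label $\ell(v)=Q[i..j]$ and the leftward walk of $Q[1..i-1]$ through the $\mathcal{R}$-tries together with the rightward walk of $Q[j+1..q]$ through the $\mathcal{F}$-tries each consume their whole input.

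For the forward direction I would write $Q=\ell(P)[s..e]$ for a path $P=v_1\cdots v_p$ and split on how many nodes $Q$ touches. If $Q$ lies within at most two consecutive labels $\ell(v_1)\ell(v_2)$ (including the sink case $p=1$), then $Q$ is a prefix of $\ell(v_1)[i..]\ell(v_2)\in\mathcal{P}'$ and (i) holds. If $p\ge 3$, then the internal node $v_2$ sits strictly inside $Q$, so $\ell(v_2)$ is a substring of $Q$ and Aho--Corasick reports \emph{some} full node label. The key step, and the place repeat-freeness does the work, is that any occurrence of a label $\ell(v)$ inside $\ell(P)$ must by definition be a prefix of a subpath of $P$ starting at $v$; hence the node reported at $[i..j]$ is genuinely the node of $P$ sitting there. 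Because labels in a block are distinct and (non-elastic case) of equal length, the trie walks are character-forced and each leaf identifies a unique predecessor/successor, so they reconstruct exactly the portions of $P$ left and right of $v$. This simultaneously shows that verifying from the \emph{first} reported label suffices, so we never have to try all matches. The converse is immediate: a prefix of a string in $\mathcal{P}'$ is a substring of a genuine two-node path label, and a successful pair of trie walks explicitly exhibits a path spelling $Q$; thus false positives cannot arise (independently of repeat-freeness), and repeat-freeness is exactly what rules out a false \emph{negative} caused by Aho--Corasick reporting a label off the intended path.

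The step I expect to be the real obstacle is realizing $\mathcal{P}'$ within the stated bound, since its strings $\ell(v)[i..]\ell(w)$ written out explicitly would cost $\Theta(L^2|E|)$. The fix is to note that each such string is a suffix of the edge-string $\ell(v)\ell(w)$, so a generalized suffix tree (equivalently a sorted suffix array) built on $\{\ell(v)\ell(w):(v,w)\in E\}$ represents $\mathcal{P}'$ implicitly: prefix search of $Q$ is a root descent in $O(|Q|\log\sigma)$ time, and any substring match there is automatically a real occurrence (it lies inside an actual two-node path label), with node-label-suffixes of non-source, non-sink nodes all captured as suffixes of some edge-string (the degenerate single-block graph being indexed directly). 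The total text length is $\sum_{(v,w)\in E}\bigl(|\ell(v)|+|\ell(w)|\bigr)\le 2L|E|$, so this part is built in $O(L|E|\log\sigma)$.

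Finally I would sum the costs. The Aho--Corasick automaton over $\{\ell(v):v\in V\}$ has total pattern length $N$ and is built in $O(N\log\sigma)$ time; the tries have total length $\sum_{(u,v)\in E}|\ell(u)|+\sum_{(v,w)\in E}|\ell(w)|\le 2L|E|$ and are built in $O(L|E|\log\sigma)$; together with the $\mathcal{P}'$ structure this gives preprocessing $O((N+L|E|)\log\sigma)$. For a query, the prefix search over $\mathcal{P}'$, a single first-hit Aho--Corasick scan, and the two amortized-linear trie walks each cost $O(|Q|\log\sigma)$ (the $\log\sigma$ paying for child navigation, and $\sigma\le N$ keeping these factors in range), so the whole query runs in $O(|Q|\log\sigma)$, as claimed.
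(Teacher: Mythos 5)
Your proposal is correct and follows essentially the same route as the paper: an Aho--Corasick automaton over the node labels plus the per-node tries $\mathcal{R}(v),\mathcal{F}(v)$ for queries spanning three or more nodes, and an index realizing $\mathcal{P}'$ implicitly via the edge strings $\ell(v)\ell(w)$ for shorter queries, with repeat-freeness invoked at exactly the same point to justify that verifying a single arbitrary candidate hit suffices. The only differences are presentational: you argue directly that any reported label must coincide with a node of the witness path, where the paper argues by contradiction (a failed verification plus a second successful candidate would force a second occurrence of $\ell(v)$), and you use a generalized suffix tree over the edge strings ($O(|Q|\log\sigma)$ search) where the paper uses a unidirectional BWT index on the concatenation $C$ ($O(|Q|)$ search); both choices stay within the stated preprocessing and query bounds.
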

\begin{proof}
To see that the approach stated above works correctly, we need to show that it suffices to verify exactly one arbitrary candidate occurrence identified by the Aho-Corasick automaton. Indeed, for contradiction, assume our verification fails in finding $Q$ starting from a candidate match $Q[i..j]=\ell(v)$, but there is another candidate match $Q[i'..j']=\ell(w)$, $v\neq w$, resulting in an occurrence of $Q$ in $G$. First, we can assume it holds that $[i..j]$ is not included in $[i'..j']$ and $[i'..j']$ is not included in $[i..j]$, since such nested cases would contradict the repeat-free property of $G$. Now, starting from the candidate match $Q[i'..j']$, we will find an occurrence of $Q[i..j]$ in $G$ when extending to the left or to the right. This occurrence cannot be $\ell(v)$, as we assumed the verification starting from $Q[i..j]=\ell(v)$ fails. That is, we found another occurrence of $\ell(v)$ in $G$, which is a contradiction with the repeat-free property. Hence, the verification starting from an arbitrary candidate match is sufficient.

With preprocessing time $O(N\log \sigma)$ we can build the Aho-Corasick automaton \cite{AC75}. The tries can be built in $O(\log \sigma)(\sum_{v \in V} (\sum_{(u,v) \in E} |\ell(u)| + \sum_{(v,w) \in E} |\ell(w)|))= O(|E|L\log \sigma)$ time. The search for a candidate match and the following verification take $O(|Q| \log \sigma)$ time.

We are left with the case of short queries not spanning a complete node label. To avoid the costly binary search in sorted $\mathcal{P}'$, we instead construct the unidirectional BWT index \cite{BCKM20} for the concatenation $C=\prod_{i \in \{1,2,\ldots,b\}}$ $\prod_{v \in V^i, (v,w) \in E} $ $\ell(v)\ell(w)0$.
Concatenation $C$ is thus a string of length $O(|E|L)$ from alphabet $\{\mathbf{0},1,2,\ldots, \sigma\}$. The unidirectional BWT index for $C$ can be constructed in $O(|C|)$ time, so that in $O(|Q|)$ time, one can find out if $Q$ occurs in $C$ \cite{BCKM20}. This query equals that of binary search in $\mathcal{P}'$.
\end{proof}

The above result can also be applied to \emph{degenerate strings} \cite{Alzetal20}. These are special case of elastic degenerate strings with equal length strings inside each block, and can thus be seen as fully connected block graphs.

\begin{corollary}
The results of Theorem \ref{thm:blockrepeatfree} hold for a \emph{repeat-free degenerate string} a.k.a. a fully connected repeat-free founder graph.
\label{thm:blockrepeatfreeGDstring}
\end{corollary}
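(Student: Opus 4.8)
The plan is to argue that this corollary follows essentially by inspection, once we make two observations: first, that a repeat-free degenerate string is nothing but a fully connected repeat-free (non-elastic) founder graph, and second, that the proof of Theorem~\ref{thm:blockrepeatfree} never relies on the graph being induced by an actual \msa.

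First I would make the reduction explicit. A degenerate string with equal-length strings in each block is a block graph in the sense of the Block Graph definition: the partition into blocks $V^1,\ldots,V^b$ is given by the sequence of string sets; the labels within a block all share the same length, so condition~3 holds; and edges join only nodes in consecutive blocks, so condition~2 holds. ``Fully connected'' simply means that for each $k$ every $v\in V^k$ is joined to every $w\in V^{k+1}$, and imposing the repeat-free property on this structure is exactly the hypothesis ``repeat-free degenerate string.'' Thus the given object is a special case of the repeat-free founder graphs treated by Theorem~\ref{thm:blockrepeatfree}.

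Next I would observe that every ingredient in the construction and analysis of Theorem~\ref{thm:blockrepeatfree} is defined purely in terms of the node labels $\{\ell(v)\}$ and the edge set $E$. The Aho--Corasick automaton is built on $\{\ell(v)\mid v\in V\}$; the verification tries $\mathcal{R}(v)$ and $\mathcal{F}(v)$ are built from the labels of in- and out-neighbours of $v$; and the short-query BWT index is built on the edge-indexed concatenation $C=\prod_i\prod_{v\in V^i,(v,w)\in E}\ell(v)\ell(w)\mathbf{0}$. None of these constructions, and none of the correctness argument---which invokes repeat-freeness only to rule out a failed verification through the ``one candidate suffices'' contradiction---ever uses that the edges arise from rows of an \msa. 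Consequently the same preprocessing time $O((N+L|E|)\log\sigma)$ and the same query time $O(|Q|\log\sigma)$ transfer verbatim, with $N$, $L$, and $|E|$ now measured on the degenerate string; here $|E|=\sum_{k}H(S^k)H(S^{k+1})$ may be considerably larger than in a sparse founder graph, but the bound is stated in terms of $|E|$ and so remains valid.

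I do not expect a genuine obstacle. The only point worth stating carefully---and the reason a separate corollary is warranted rather than an offhand remark---is that a fully connected block graph need not be realizable as the founder graph of any $m$-row \msa, since full connectivity between two tall blocks can demand more distinct row-pairings than there are rows. The corollary thus records that the indexing machinery depends only on the block-graph structure together with the repeat-free property, not on any underlying \msa, and so it applies to degenerate strings regardless of how they were obtained.
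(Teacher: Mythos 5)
Your proposal is correct and takes essentially the same route as the paper, which states this corollary without a separate proof: the intended justification is precisely your two observations, namely that a repeat-free degenerate string is a fully connected repeat-free block graph, and that the proof of Theorem~\ref{thm:blockrepeatfree} (Aho--Corasick automaton, tries $\mathcal{R}(v)$ and $\mathcal{F}(v)$, and the BWT index on the concatenation $C$) depends only on the node labels, the edge set, and the repeat-free property, never on an inducing \msa. One small inaccuracy in your closing aside: a fully connected block graph \emph{is} always realizable as the founder graph induced by some \msa (e.g.\ take one row per edge, routing each row arbitrarily through the remaining blocks, which is possible since all transitions exist), so what can fail is only realizability with a prescribed number of rows $m$; this does not affect your argument, which never relies on realizability in either direction.
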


Observe that $N<|C|\leq 2mn$, where $C$ is the concatenation in the proof above (whose length was bounded by $O(L|E|)$), and $m$ and $n$ are the number of rows and number of columns, respectively, in the  multiple sequence alignment from where the founder graph is induced. That is, the index construction algorithms of the above theorems can be seen to be take time almost linear in the (original) input size, namely, $O(mn\log \sigma)$ time. We study succinct variants of these indexes in Sect.~\ref{sect:compressedindexing}, and also improve the construction and query times to linear as side product.

\section{Construction of repeat-free founder graphs \label{sect:construction}}

Now that we know how to index repeat-free founder graphs, we turn our attention to the construction of such graphs from a given and \msa. For this purpose, we will adapt the dynamic programming segmentation algorithms for founders \cite{NCKM19,CKMN19}.

The idea is as follows. Let $S$ be a segmentation of $\msa[1..m,1..n]$. We say $S$ is \emph{valid} (or repeat-free) if it induces a repeat-free founder graph $G(S)=(V,E)$. A segment in $S$ is valid (or repeat-free) if $S$ is valid. We build such valid $S$ considering valid segmentation of prefixes of \msa from left to right, looking at shorter valid segmentations appended with a valid new segment.

\subsection{Characterization lemma}

Given a segmentation $S$ and founder graph $G(S)=(V,E)$ induced by $S$, we can ensure that it is valid by checking if, for all $v\in V$, $\ell(v)$ occurs in the rows of the \msa only in the interval of the block $V^i$, where $V^i$ is the block of $V$ such that $v \in V^i$.

\begin{lemma}[Characterization]
\label{lemma:charaterization}
Let $\mathcal{P}=[x_1..y_1],[x_2..y_2],\ldots,[x_b..y_b]$ be the partitioning corresponding to a segmentation $S$ inducing a block graph $G=(V,E)$. The segmentation $S$ is valid if and only if, for all blocks $V^i \subseteq V$, $1 \leq t \leq m$ and $j \neq x_i$, if $v \in V^i$ then $\msaij{t}{j\ltdots j+|\ell(v)|-1} \neq \ell(v)$.
\end{lemma}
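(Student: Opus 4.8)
The plan is to prove both implications through the contrapositive, using throughout the correspondence between rows of the (gapless, non-elastic) \msa and source-to-sink paths of $G$. As a bridge I would first record that each row $t$ traces a path $P_t = u_1 u_2 \cdots u_b$ with $u_k \in V^k$, $\ell(u_k) = \msaij{t}{x_k..y_k}$, and every edge $(u_k,u_{k+1})$ present because row $t$ witnesses $\msaij{t}{x_k..y_{k+1}} = \ell(u_k)\ell(u_{k+1})$; hence $\ell(P_t) = \msaij{t}{1..n}$. Since each block label has length equal to its block width $L(S^k)=y_k-x_k+1$, the block boundaries inside any full path label land exactly on the \msa column positions $x_1,x_2,\ldots$. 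Therefore an occurrence of $\ell(v)$ (with $v\in V^i$, so $|\ell(v)|=L(S^i)$) that begins at column $x_i$ necessarily coincides with a block-$i$ node and, labels in a block being distinct, with $v$ itself; conversely any prefix of a path starting at $v$ begins where $v$ does, at column $x_i$. Hence an occurrence of $\ell(v)$ is a prefix of a path starting at $v$ if and only if it begins at column $x_i$, and every repeat-freeness-violating occurrence begins at some column $p\neq x_i$.

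For the forward direction I would assume the column condition fails, giving a block $V^i$, a node $v\in V^i$, a row $t$ and a column $j\neq x_i$ with $\msaij{t}{j..j+|\ell(v)|-1}=\ell(v)$. Reading this occurrence inside $\ell(P_t)$ exhibits $\ell(v)$ in $G$ starting at column $j\neq x_i$, which by the equivalence above is not a prefix of a path starting at $v$; hence $G$ is not repeat-free and $S$ is not valid.

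The converse is the substantial direction, and I would again argue contrapositively: assuming $G$ is not repeat-free, I produce a violation of the column condition. Fix a node $v\in V^i$ together with a bad occurrence of $\ell(v)$ inside a path label $\ell(P)$, starting at column $p\neq x_i$ and spanning the consecutive blocks $\alpha,\alpha+1,\ldots,\beta$ of $P$ through nodes $w_\alpha,\ldots,w_\beta$. If $\beta=\alpha$, the occurrence lies inside $\ell(w_\alpha)=\msaij{t}{x_\alpha..y_\alpha}$ for any row $t$ realizing $w_\alpha$, placing $\ell(v)$ in row $t$ at column $p\neq x_i$. If $\beta=\alpha+1$, the row $t$ witnessing the edge $(w_\alpha,w_{\alpha+1})$ satisfies $\msaij{t}{x_\alpha..y_{\alpha+1}}=\ell(w_\alpha)\ell(w_{\alpha+1})$ and thus contains the whole occurrence, again placing $\ell(v)$ at column $p\neq x_i$. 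Both contradict the column condition for $v$.

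The main obstacle is the case $\beta\geq\alpha+2$, where $\ell(P)$ is a genuine recombination and the occurrence need not appear in any single row, so chasing $\ell(v)$ directly fails. The idea I would use is to switch targets to an interior node: since the occurrence covers all of block $\alpha+1$, the label $\ell(w_{\alpha+1})$ sits entirely inside $\ell(v)$ at offset $x_{\alpha+1}-p$. Choosing a row $t$ that realizes $v$, so that $\ell(v)=\msaij{t}{x_i..y_i}$, then exhibits $\ell(w_{\alpha+1})$ inside row $t$ at column $x_i+(x_{\alpha+1}-p)$, which differs from $x_{\alpha+1}$ precisely because $p\neq x_i$; this violates the column condition for the node $w_{\alpha+1}\in V^{\alpha+1}$ and closes the contrapositive. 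The only remaining routine bookkeeping, which I would verify in each case, is that a bad occurrence is indeed forced to start at some $p\neq x_i$; this follows from $|\ell(v)|$ equaling the width of block $i$ together with the distinctness of labels within a block.
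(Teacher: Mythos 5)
Your proof is correct and takes essentially the same route as the paper's: both directions reduce occurrences in the graph to occurrences in \msa rows, and your key step for occurrences spanning three or more blocks---transplanting the fully contained interior label $\ell(w_{\alpha+1})$ into the row realizing $v$, where the shift $p \neq x_i$ forces a misaligned column---is exactly the paper's observation (its case d) that such a substring can make $S$ invalid only if the interior node label already does. Your write-up merely makes the paper's terse observations a)--d) explicit, including the column arithmetic.
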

\begin{proof}
To see that this is a necessary condition for the validity of $S$, notice that each row of the \msa can be read through $G$, so if $\ell(v)$ occurs elsewhere than inside the block, then these extra occurrences make $S$ invalid. To see that this is a sufficient condition for the validity of $S$, we observe the following:
\begin{enumerate}[a)]
    \item For all $(v,w)\in E$, $\ell(v)\ell(w)$ is a substring of some row of the input \msa.
    \item Let $(x,u),(u,y)\in E$ be two edges such that $U=\ell(x)\ell(u)\ell(y)$ is not a substring of any row of input \msa. Then any substring of $U$ either occurs in some row of the input \msa or it includes $\ell(u)$ as its substring.
    \item Thus, any substring of a path in $G$ either is a substring of some row of the input \msa, or it includes $\ell(u)$ of case b) as its substring.
    \item Let $\alpha$ be a substring of a path of $G$ that includes $\ell(u)$ as its substring. If $\ell(z)=\alpha$ for some $z\in V$, then $\ell(u)$ appears at least twice in the \msa. Substring $\alpha$ makes $S$ invalid only if $\ell(u)$ does.
\end{enumerate}
\end{proof}

\subsection{From characterization to a segmentation\label{sect:recurrence}}

Among the valid segmentations, we wish to select an \emph{optimal} segmentation under some goodness criteria. 

We consider three score functions for the valid segmentations, one maximizing the number of blocks, one minimizing the maximum height of a block, and one minimizing the maximum length of a block. The latter two have been studied earlier without the repeat-free constraint, and non-trivial linear time solutions have been found~\cite{NCKM19,CKMN19}, while the first score function makes sense only with this new constraint. 

Let $s(j')$ be the score of an optimal scoring segmentation $S^1,S^2,\ldots,S^b$ of prefix $\msaij{1..m}{1..j'}$ for a selected scoring scheme. Then  
\begin{equation}
s(j)=\bigoplus_{\begin{array}{c} j':0\leq j'< j,\\ \msaij{1..m}{j'+1..j} \text{ is }\\ 
\text{repeat-free segment} \end{array}} w(s(j'),j',j), 
\label{eq:valid-segmentation}
\end{equation}
gives the score of an optimal scoring repeat-free segmentation $S^1,S^2,\ldots,$ $S^b,S^{b+1}$ of $\msaij{1..m}{1..j}$,
where $\bigoplus$ is an operator depending on the scoring scheme and $w(x,j',j)$ is a function on the score $x$ of the segmentation of $S^1,S^2,\ldots,S^b$ and on the last block $S^{b+1}$ corresponding to $\msaij{1..m}{j'+1..j}$. To fix this recurrence so that $s(n)$ equals the maximum number of blocks over valid segmentations of $\msaij{1..m}{1..n}$, set $\bigoplus=\max$ and $w(x,j',j)=x+1$. For initialization, set $s(0)=0$. Moreover, when there is no valid segmentation for some $j$, set $s(j)=-\infty$. To fix this recurrence so that $s(n)$ equals the minimum of maximum heights of blocks over valid segmentations of $\msaij{1..m}{1..n}$, set $\bigoplus=\min$ and $w(x,j',j)=\max(x,|\{\msaij{i}{j'+1..j} \mid 1\leq i\leq m\}|)$. For initialization, set $s(0)=0$. Moreover, when there is no valid segmentation for some $j$, $s(j)=\infty$. Finally, to fix this recurrence so that $s(n)$ equals the minimum of maximum length of blocks over valid segmentations of $\msaij{1..m}{1..n}$, set $\bigoplus=\min$ and $w(x,j',j)=\max(x,j-j')$. For initialization, set $s(j)=0$. Moreover, when there is no valid segmentation for some $j$, set $s(j)=\infty$.

To derive efficient dynamic programming recurrences for these scoring functions, we separate the computation into the preprocessing phase and into the main computation. In the preprocessing phase, we compute values $v(j)$ and $f(j)$, $1\leq j\leq n$, defined as follows. Value $v(j)$ is the largest integer such that segment $\msaij{1..m}{v(j)+1..j}$ is valid. Value $f(j)$ is the smallest integer such that segment $\msaij{1..m}{j+1..f(j)}$ is valid. Note that $v(j)$ may not be defined for small $j$ and $f(j)$ may not be defined for large $j$ (short blocks may not be repeat-free).

Assuming values $v(j)$ have been preprocessed, we can simplify recurrence~(\ref{eq:valid-segmentation}) into
\begin{equation}
s(j)=\bigoplus_{j':0\leq j'\leq v(j)} w(s(j'),j',j),
\label{eq:valid-segmentation-vj}
\end{equation}
by observing that left-extensions of valid segments are also valid. We use this equation later for deriving a linear time solution for minimizing the maximum length of a block score.

With values $f(j)$ we can use an analogous observation that right-extensions of valid segments are also valid. This observation directly yields forward-propagation dynamic programming solutions for maximizing the number of blocks score and for minimizing the maximum length of a block score. These are given in Algorithms~\ref{algo:maxblocks}~and~\ref{algo:minmaxlength}. We leave it for future work to derive similar result for minimizing the maximum height of a block score.

\begin{algorithm}
\KwIn{Right-extensions $(j,f(j))$ sorted from smallest to largest order by second component: $(j_1,f(j_1)),(j_2,f(j_2)), \ldots , (j_{n-J},f(j_{n-J}))$, where $J$ is such that $f(j_{n-J+1}),f(j_{n-J+2}),\ldots, f(j_n)$ are not defined.}
\KwOut{Score of an optimal repeat-free segmentation maximizing the number of blocks.}
$x \gets 1$\; 
$\mathtt{maxblocks}(0) \gets 0$\; $\mathtt{maxblocks}(j) = -\infty$\; $\mathtt{maxscore} = -\infty$ for all $0<j\leq n$\;
\For{$j \gets 1$ to $n$}{
  \While{$j = f(j_{x})$}{
     $\mathtt{maxscore} \gets \max(\mathtt{maxscore},\mathtt{maxblocks}(j_{x}))$\;
     $x \gets x+1$\;
  }
  $\mathtt{maxblocks}(j) \gets \mathtt{maxscore}+1$\;
}
\Return{$\mathtt{maxblocks}(n)$}\;
\caption{\label{algo:maxblocks}
An $O(n)$ time algorithm for finding an optimal repeat-free segmentation maximizing the number of blocks.}
\end{algorithm}

\begin{algorithm}
\KwIn{Right-extensions $(j,f(j))$ sorted from smallest to largest order by second component: $(j_1,f(j_1)),(j_2,f(j_2)), \ldots , (j_{n-J},f(j_{n-J}))$, where $J$ is such that $f(j_{n-J+1}),f(j_{n-J+2}),\ldots, f(j_n)$ are not defined.}
\KwOut{Score of an optimal semi-repeat-free segmentation minimizing the maximum segment length.}
Initialize one-dimensional search trees $\mathcal{T}$ and $\mathcal{I}$ with keys $0,1,2,\ldots,2n$, with all keys associated with values $\infty$\;
$x \gets 1$\;
$\mathtt{minmaxlength}(0) \gets 0$\;
\For{$j \gets 1$ to $n$}{
  \While{$j = f(j_{x})$}{
     $\mathcal{T}.\mathtt{Upgrade}(j_{x}+\mathtt{minmaxlength}(j_{x}),\mathtt{minmaxlength}(j_{x}))$\;
     $\mathcal{I}.\mathtt{Upgrade}(j_{x}+\mathtt{minmaxlength}(j_{x}),-j_{x})$\;
     $x \gets x+1$\;
  }
  $\mathtt{minmaxlength}(j) \gets \min(\mathcal{T}.\mathtt{RangeMin}(j+1,\infty),\mathcal{I}.\mathtt{RangeMin}(-\infty,j)+j)$\;
}
\Return{$\mathtt{minmaxlength}(n)$}\;
\caption{\label{algo:minmaxlength}
An $O(n \log n)$ time algorithm for finding an optimal semi-repeat-free segmentation minimizing the maximum segment length. Minimization over an empty set is assumed to return $\infty$. Operation $\mathtt{Upgrade}(k,v)$ sets key $k$ to value $v$ if the previous value is larger. Operation $\mathtt{RangeMin}(a,b)$ returns the smallest value associated with keys in range $[a..b]$. Both operations can be supported in $O(\log n)$ time with standard balanced search trees.}
\end{algorithm}

\begin{theorem}
    After an $O(mn)$ time preprocessing, Algorithms~\ref{algo:maxblocks} and \ref{algo:minmaxlength} compute the scores $\mathtt{maxblocks}(n)=b$ and $\mathtt{minmaxlength}(n)=\max\limits_{i:1\leq i \leq b} L(S^i)$ of optimal repeat-free segmentations $S^1,S^2,\ldots,S^b$ of gapless $\msaij{1..m}{1..n}$ in $O(n)$ and $O(n \log \log n)$ time, respectively. The produced segmentations induce repeat-free founder graphs from a gapless \msa.
\end{theorem}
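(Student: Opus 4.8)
The plan is to separate the argument into the preprocessing that produces the sorted input of both algorithms, the reduction of recurrence~\eqref{eq:valid-segmentation} to a forward-propagation form, the correctness-and-timing analysis of the two loops, and a final appeal to Lemma~\ref{lemma:charaterization}. For the preprocessing I would first compute, for every column $j$, the value $f(j)$, the smallest $y$ for which $\msaij{1..m}{j+1..y}$ is repeat-free. By Lemma~\ref{lemma:charaterization} this holds exactly when every distinct string $\msaij{i}{j+1..y}$ occurs in the rows of the \msa only at starting column $j+1$, and since a right-extension of a valid segment is again valid, the threshold $f(j)$ is well defined. I would locate all thresholds with a sliding window over the columns supported by a generalized suffix tree of the $m$ rows (a text of total length $O(mn)$, built in linear time), testing in amortized constant time whether the window's column-strings are unique to their start column; because the $f(j)$ lie in $[1..n]$, sorting the pairs $(j,f(j))$ by second component costs $O(n)$ by counting sort, giving exactly the input assumed by the algorithms.

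The same monotonicity gives the key reformulation: for a fixed right endpoint $j$, the repeat-free segments $[j'+1..j]$ are precisely those with $f(j')\le j$, so recurrence~\eqref{eq:valid-segmentation} equals $s(j)=\bigoplus_{j':f(j')\le j} w(s(j'),j',j)$, where a start point $j'$ becomes an admissible predecessor exactly when the outer index reaches $f(j')$. This is what the ``while $j=f(j_x)$'' loops realize: scanning $j$ upward and, since the input is sorted by $f$-value, activating each $j_x$ once and keeping it active thereafter. For Algorithm~\ref{algo:maxblocks} ($\bigoplus=\max$, $w(x,j',j)=x+1$) I would show by induction on $j$ that $\mathtt{maxscore}$ equals $\max\{\mathtt{maxblocks}(j_x):f(j_x)\le j\}$, hence $\mathtt{maxblocks}(j)=s(j)$ is the maximum number of blocks over repeat-free segmentations of $\msaij{1..m}{1..j}$ (with $-\infty$ propagating when none exists), so $\mathtt{maxblocks}(n)=b$. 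The outer loop and the amortized total of the while loop each advance a pointer at most $n$ times, so the running time is $O(n)$.

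The crux is Algorithm~\ref{algo:minmaxlength}, where $w(x,j',j)=\max(x,j-j')$. I would split this maximum by comparing the key $k_{j'}:=j'+\mathtt{minmaxlength}(j')$ with $j$: when $k_{j'}\ge j$ the contribution is $\mathtt{minmaxlength}(j')$, and when $k_{j'}<j$ it is $j-j'$. Accordingly $\mathcal{T}$ stores $\mathtt{minmaxlength}(j')$ at key $k_{j'}$ and the query $\mathcal{T}.\mathtt{RangeMin}(j+1,\infty)$ recovers the first-case value for $k_{j'}\ge j+1$, while $\mathcal{I}$ stores $-j'$ at the same key and $\mathcal{I}.\mathtt{RangeMin}(-\infty,j)+j$ recovers the second-case value for $k_{j'}\le j$; the split at $j$ is exact because when $k_{j'}=j$ both cases give the identical value $j-j'$. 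Taking the minimum reproduces $\min_{j':f(j')\le j}\max(\mathtt{minmaxlength}(j'),j-j')$, so $\mathtt{minmaxlength}(n)=\max_i L(S^i)$. Each $j'$ issues one $\mathtt{Upgrade}$ per tree and each $j$ two $\mathtt{RangeMin}$ queries; with balanced trees this is $O(n\log n)$, matching the caption. To reach $O(n\log\log n)$ I would note that the keys lie in $\{0,\ldots,2n\}$ and the queries are a suffix-minimum and a prefix-minimum under value-decreasing point updates, which a van Emde Boas / $y$-fast-trie-based structure supports in $O(\log\log n)$ time per operation.

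Finally, every transition used by either algorithm goes through a repeat-free segment ($f(j')\le j$), so each block of the output satisfies the per-block condition of Lemma~\ref{lemma:charaterization}, and the lemma then certifies that the whole segmentation is valid, i.e.\ induces a repeat-free founder graph. I expect the main obstacle to be the correctness of the $\max$-decomposition in Algorithm~\ref{algo:minmaxlength}: checking that the two range queries together cover every admissible predecessor with no double counting and no off-by-one error at the boundary key $k_{j'}=j$, and pinning down an integer data structure that realizes decreasing-update prefix/suffix minima in $O(\log\log n)$.
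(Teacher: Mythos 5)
Your proposal is correct and follows essentially the same route as the paper's own proof: the same activation-of-predecessors scheme (a start point $j'$ becomes admissible exactly when the outer index $j$ reaches $f(j')$, justified by right-extension monotonicity of valid segments), the identical two-tree decomposition of $\max(\mathtt{minmaxlength}(j'),\,j-j')$ around the key $j'+\mathtt{minmaxlength}(j')$ with the boundary case $j'+\mathtt{minmaxlength}(j')=j$ resolved the same way, the same counting-sort and van Emde Boas refinements, and the same appeal to Lemma~\ref{lemma:charaterization} to conclude that the output induces a repeat-free founder graph. The only under-specified step is the preprocessing: the paper delegates the computation of the values $f(j)$ to Theorem~\ref{th:vj}, whose proof (a BWT-based sliding window with bitvector bookkeeping, achieving $O(mn)$ time only in the randomized setting, and $O(mn\log\sigma)$ deterministically) is precisely the ``amortized constant time'' uniqueness test that you assert over a generalized suffix tree without justification --- checking that all occurrences of the $m$ window strings are confined to the current column is the genuinely hard part of that theorem, not a routine sliding-window detail, so your inline sketch would need to be replaced by (or fleshed out into) that argument.
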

\begin{proof}
The $O(mn)$ preprocessing algorithm is provided in Theorem~\ref{th:vj}. Let us then consider the running time of the main algorithms. In both algorithms, the sorted input can be produced in $O(n)$ time by counting sort from the output of the preprocessing algorithm. The first algorithm takes clearly linear time. The second algorithm takes clearly $O(n \log n)$ time, but this can be improved: Since the queries are semi-open intervals with keys in range $[1 \ldots 2n]$, these balanced search trees can be replaced by van Emde Boas trees to obtain $O(n\log \log n)$ time computation of all values~\cite{GBT84}. 

Correctness of Algorithms~\ref{algo:maxblocks} and \ref{algo:minmaxlength} follow from the fact that when computing the score at column $j$, all earlier segmentations that are safe to be extended with a new segment ending at $j$ are considered. We formalize this argument for Algorithm~\ref{algo:minmaxlength}, as the proof for the other is analogous and easier. Assume by induction that $\mathtt{minmaxlength}(j')$ is the score $\max\limits_{i:1\leq i \leq b} L(S^i)$ of an optimal semi-repeat-free segmentation $S^1,S^2,\ldots,S^b$ of $\msaij{1..m}{1..j'}$, for $j'<j$. Each $\mathtt{minmaxlength}(j')$ is added to the data structures when the corresponding segmentation can be considered to be appended with segment $S^{b+1}$ corresponding to $\msaij{1..m}{j'+1..j}$, for $j\geq f(j')$, so that the result is a semi-repeat-free segmentation. The minimum values from the data structures equal the definition of segmentation score $\max\limits_{i:1\leq i \leq b+1} L(S^i)$: To see this, we have two cases to consider: a) for $j'$ such that $\mathtt{minmaxlength}(j')>j-j'$ the score of the segmentation ending at $j'$ extended with $[j'+1..j]$ is $\mathtt{minmaxlength}(j')$, and b) for $j'$ such that $\mathtt{minmaxlength}(j')\leq j-j'$ the score of the segmentation ending at $j'$ extended with $[j'+1..j]$ is $j-j'$. The query intervals guarantee that the minima is returned, with the latter adjusted by $+j$ so that it gives $j-j’$ for minimum $-j$ in tree tree $\mathcal{I}$, corresponding to the cases a) and b). Initialization guarantees that score of the first segment is correctly computed. Traceback from $\mathtt{minmaxlength}(n)$ gives an optimal semi-repeat-free segmentation.
\end{proof}

\subsection{Preprocessing\label{sect:preprocessing}}

We can do the preprocessing for values $v(j)$ and $f(j)$ in $O(mn)$ time. The idea is to build a BWT index on the \msa rows, and then search all rows backward from right to left in parallel (with everything reversed for the latter values). Once we reach a column $j'$ where all suffixes have altogether exactly $m$ occurrences (their union of BWT intervals is of size $m$), then $\msaij{1..m}{j'..n}$ is a valid segment. Then we can drop the last column (do right-contract on all rows) and continue left-extensions until finding the largest $j'$ such that $\msaij{1..m}{j'..n-1}$ is a valid segment. Continuing this way, we can find for each column $j$ the value $v(j)=j'-1$. The bottleneck of the approach is the computation of the size of the union of intervals, but we can avoid a trivial computation by exploiting the repeat-free property and the order in which these intervals are computed.

\begin{theorem}\label{th:vj}
Given a multiple sequence alignment \msamn with each $\msaij{i}{j} \in [1..\sigma]$, values $v(j)$ and $f(j)$ for each $1 \leq j \leq n$ can be computed in randomized $O(mn)$ time or deterministic $O(mn\log \sigma)$ time. Here value $v(j)$ is the largest integer such that segment $\msaij{1..m}{v(j)+1..j}$ is valid, and value $f(j)$ is the smallest integer such that segment $\msaij{1..m}{j+1..f(j)}$ is valid.
\end{theorem}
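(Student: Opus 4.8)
The goal is to compute, for every column $j$, the values $v(j)$ and $f(j)$ that delimit the valid (repeat-free) segments ending or starting at $j$. I will compute $v(j)$; the values $f(j)$ follow by running the identical procedure on the reversed \msa (each row reversed and the column order reversed). The key algorithmic insight, already hinted at in the text preceding the statement, is the following reformulation of validity via BWT intervals: by the Characterization Lemma~\ref{lemma:charaterization}, a segment $\msaij{1..m}{j'..n}$ is valid precisely when none of the strings $\spell(\msaij{i}{j'..n})$ occurs in any row outside its own block interval. When the \msa is gapless so that $\spell$ is the identity, this means each such string has all of its \msa occurrences confined to starting at column $j'$. I will translate this into the condition that the union of the BWT intervals of the $m$ (reversed) row suffixes $\msaij{i}{j'..n}$ has total size exactly $m$: every occurrence of every one of these strings in the concatenation of rows must be one of the $m$ row-occurrences we started from, and no more.

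\textbf{Main computation.} First I build a bidirectional BWT index on the concatenation of the \msa rows (separated by the end marker $\mathbf 0$), using Lemma~\ref{lemma:index_construction} for the deterministic $O(mn\log\sigma)$ bound, or Lemma~\ref{lemma:randomized_index_construction} for the randomized $O(mn)$ bound. I maintain, in parallel for all $m$ rows, the current BWT interval of the suffix $\msaij{i}{c..j}$ as $c$ ranges over a decreasing ``left frontier.'' Processing $j$ from $n$ down to $1$ (for $v(j)$, the search runs right-to-left), I perform \emph{left extensions} to grow each suffix one column to the left and \emph{right contractions} to shorten it from the right, both supported in $O(1)$ (randomized index) or $O(\log\sigma)$ (deterministic index) time by the cited lemmas. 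For a fixed right endpoint $j$ I left-extend all $m$ rows simultaneously until the total size of the union of their BWT intervals drops to exactly $m$; that leftmost column $j'$ gives $v(j)=j'-1$. Then, to move the right endpoint from $j$ to $j-1$, I right-contract all rows and continue left-extending from the current frontier. The monotonicity that makes this efficient is that $v(j)$ is nondecreasing (equivalently, left-extensions of a valid segment stay valid, as noted after~(\ref{eq:valid-segmentation-vj})), so the total number of left-extension and right-contraction steps across all $j$ is $O(mn)$.

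\textbf{The main obstacle.} The genuine difficulty, flagged explicitly in the paragraph before the theorem, is computing the \emph{size of the union} of the $m$ BWT intervals without paying a union-find or sorting cost at every step, which would inflate the running time beyond $O(mn)$. A naive test requires sorting or merging intervals per column. The way around this is to exploit the repeat-free structure together with the order in which intervals are produced: once the segment becomes valid the intervals become \emph{disjoint} and partition into the $m$ starting positions, so I can track a single aggregate count. Concretely, I expect to maintain a running total of interval sizes and detect the moment the $m$ intervals become pairwise disjoint and jointly of size $m$ by monitoring when each left-extension step fails to merge distinct rows' intervals; because the frontier moves monotonically and each interval only shrinks as we extend left, the amortized bookkeeping for detecting the size-$m$ threshold is $O(1)$ per step. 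Establishing this amortized bound rigorously---arguing that the disjointness/size-$m$ test can be charged to the $O(mn)$ extension and contraction operations rather than recomputed from scratch---is the crux of the proof; the rest is assembling the index-construction costs from Lemmas~\ref{lemma:randomized_index_construction} and~\ref{lemma:index_construction} and verifying correctness against the Characterization Lemma~\ref{lemma:charaterization}.
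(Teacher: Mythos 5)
Your overall framework is the same as the paper's: build the BWT index of the concatenated rows via Lemma~\ref{lemma:randomized_index_construction} (randomized) or Lemma~\ref{lemma:index_construction} (deterministic), slide a window over each row using left extensions and right contractions, declare a segment valid exactly when the union of the $m$ BWT intervals has size $m$, and amortize the sweep using monotonicity of $v(\cdot)$; the paper also handles $f(j)$ by symmetry. But there is a genuine gap exactly where you yourself place the crux: you assert, without giving a mechanism, that one can detect in $O(1)$ amortized time per step when the intervals become ``pairwise disjoint and jointly of size $m$,'' by ``monitoring when each left-extension step fails to merge distinct rows' intervals.'' This does not work as stated. First, a backward (left) extension does not shrink an interval in place: it maps the interval of $X$ to the interval of $cX$, which lies in a different region of the suffix array, so the $m$ intervals relocate wholesale at every step and there is no local ``merge event'' to which the test could be charged. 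Second, pairwise disjointness is the wrong invariant when two rows carry identical segment strings: their intervals then coincide (and must coincide even in a valid segment), so a running total of interval sizes overcounts and never equals $m$; any correct test must operate on the set of \emph{distinct} intervals. Third, when you move the right endpoint from $j$ to $j-1$ the contractions make intervals grow again, so interval sizes are not monotone across the sweep, which removes the basis of your amortization claim.

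The paper closes this gap with a concrete device that your sketch is missing: an additional phase (``Phase 0'') run in synchronization with the $m$ row phases, which scans all rows backward in lockstep and maintains a bitvector $M$ marking exactly the BWT positions of the suffixes starting at the current column $j'+1$, together with bitvectors $B$ and $E$ marking the left and right endpoints of the active intervals. For each position $i$ newly marked in $M$ with $B[i]=0$ and $E[i]=0$ one checks locally whether $M[i-1]=1$ and $M[i+1]=1$; the segment is valid if and only if this check never fails. Crucially, the correctness of this purely local test is not generic: it relies on the fact that in the gapless case all strings of a segment have equal length, hence their BWT intervals are non-nested---any two are either disjoint or identical---so every suffix starting at column $j'+1$ lies in exactly one distinct interval and the marked positions must tile each interval contiguously precisely when no foreign suffix intrudes. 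This non-nestedness argument, together with the constant-time bitvector bookkeeping it enables, is the actual technical content of the proof of Theorem~\ref{th:vj}; it is what your ``amortized bookkeeping'' placeholder needs to be replaced by. (A minor additional slip: the monotonicity $v(j)\leq v(j+1)$ that drives the sliding window follows from validity being preserved under \emph{right} extensions, not under left extensions as you state; left-extension closure is what makes $v(j)$ a well-defined threshold in the first place.)
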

\begin{proof}
We consider values $v(j)$ as the other case is symmetric. Let us build the bidirectional BWT index \cite{BCKM20} of \msa rows concatenated into one long string with some separator symbols added between rows. We will run several phases in synchronization over this BWT index, but we explain them first as if they would be run independently.

Phase 0 searches in parallel all rows from right to left advancing each by one position at a time. Let $k$ be the number of parallel of steps done so far. We can maintain a bitvector $M$ that at the $k$-th step stores $M[i]=1$ iff $BWT[i]$ is the $k$-th last symbol of some row.  

Phase 1 uses the \emph{variable length sliding window} approach of Belazzougui and Cunial \cite{BC19} to compute values $v(j)$. Let the first row of \msa be $T[1\ltdots n]$. Search $T[1\ltdots n]$ backwards in the fully-functional bidirectional BWT index \cite{BC19}. Stop the search at $T[j'+1\ltdots n]$ such that the corresponding BWT interval $[i'\ltdots i]$ contains only suffixes originating from column $j'+1$ of the \msa, that is, spelling $\msaij{a}{j'+1\ltdots n}$ in the concatenation, for some rows $a$. Set $v^b(n)=j'$ for row $b=1$. Contract $T[n]$ from the search string and modify BWT interval accordingly \cite{BC19}. Continue the search (decreasing $j'$ by one each step) to find $T[j'+1\ltdots n-1]$ s.t. again the corresponding BWT interval $[i'\ltdots i]$ contains only suffixes originating from column $j'+1$. Update $v^b(n-1)=j'$ for row $b=1$. Continue like this throughout $T$. Repeat the process for all remaining rows $b\in [2..m]$, to compute $v^2(j),v^3(j),\ldots,v^m(j)$ for all $j$. 
Set $v(j)=\min_i v^i(j)$ for all $j$. 

Let us call the instances of Phase 1 run on the rest of the rows as Phases $2, 3, \ldots, m$. 

Let the current BWT interval in Phases $1$ to $m$ be $[j'+1\ltdots j]$. The problematic part in them is checking if the corresponding \emph{active} BWT intervals $[i'_a\ltdots i_a]$ for Phases $a \in \{1,2,\ldots, m\}$ contain only suffixes originating from column $j'+1$. To solve this, we run Phase 0 as well as Phases $1$ to $m$ in synchronization so that we are at the $k$-th step in Phase 0 when we are processing interval $[j'+1\ltdots j]$ in the rest of the Phases, for $k=n-j'$. In addition, we maintain bitvectors $B$ and $E$ such that $B[i'_a]=1$ and $E[i_a]=1$ for $a \in \{1,2,\ldots, m\}$. For each $M[i]$ that we set to 1 at step $k$ with $B[i]=0$ and $E[i]=0$, we check if $M[i-1]=1$ and $M[i+1]=1$. If and only if this check fails on any $i$, there is a suffix starting outside column $j'+1$. This follows from the fact that each suffix starting at column $j'+1$ must be contained in exactly one of the distinct intervals of the set $I=\{[i'_a\ltdots i_a]\}_{a\in \{1,2 \ldots m\}}$. This is because $I$ cannot contain nested interval pairs as all strings in segment $[j'+1\ltdots j]$ of the \msa are of equal length, and thus their BWT intervals cannot overlap except if the intervals are exactly the same.  

Finally, the running time of the algorithm is $O(mn)$ or $O(mn\log \sigma)$, using  Lemma~\ref{lemma:randomized_index_construction} or Lemma~\ref{lemma:index_construction}, respectively, and using the fact that the bitvectors are manipulated locally only on indexes that are maintained as variables during the execution.
\end{proof}

\subsection{Faster algorithm for minimizing the maximum block length\label{sect:fastermainalgorithm}}

Recall Eq.~(\ref{eq:valid-segmentation-vj}). Let us consider the score $w(x,j',j)=\max(s(j'),j-j')$ with $\bigoplus=\min$, that is, minimizing the maximum block length over valid segmentations. Algorithm~\ref{algo:minmaxlength} solved this problem in near-linear time, but now we improve this to linear using values $v(j)$ instead of $f(j)$. 
The basic observation is that $v(J)\leq v(J+1) \leq \cdots \leq v(n)$, for some $J>0$, and hence the range where the minimum is taken grows as $j$ grows.

Cazaux et al. \cite{CKMN19} considered a similar recurrence and gave a linear time solution for it. In what follows we modify that technique to work with valid ranges.

For $j$ between $1$ and $n$, we define
\[x(j) = \max \argmin_{j' \in [1\ltdots v(j)]} \max(j-j',s(j'))\]

\begin{lemma}
    For any $j\in [1\ltdots n-1]$, we have $x(j) \leq x(j+1)$.
\end{lemma}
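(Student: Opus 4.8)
The plan is to argue by contradiction using the standard ``argmin is monotone'' technique adapted to the valid ranges, exploiting only the non-decreasing property $v(j) \le v(j+1)$ (the basic observation preceding the lemma) together with the $\max$-structure of the objective. Write $g_j(j') = \max(j - j', s(j'))$, so that $x(j) = \max \argmin_{j' \in [1\ltdots v(j)]} g_j(j')$. Suppose, for contradiction, that $x(j) > x(j+1)$, and set $b = x(j)$ and $a = x(j+1)$, so $a < b$. Since $b \leq v(j) \leq v(j+1)$, we in fact have $a < b \le v(j) \le v(j+1)$, so both indices $a$ and $b$ lie in the minimization range $[1\ltdots v(j)]$ for $g_j$ and in $[1\ltdots v(j+1)]$ for $g_{j+1}$.

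Next I extract two comparisons from optimality together with the largest-minimizer tie-breaking built into $x(\cdot)$. Because $b$ minimizes $g_j$ over a range containing $a$, we get the non-strict inequality $g_j(b) \le g_j(a)$. Because $a$ is the \emph{largest} minimizer of $g_{j+1}$ over a range containing $b > a$, the index $b$ cannot itself be a minimizer of $g_{j+1}$, which gives the \emph{strict} inequality $g_{j+1}(b) > g_{j+1}(a)$.

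The heart of the argument is to unfold the strict inequality. Writing it as $\max(j+1-b, s(b)) > \max(j+1-a, s(a))$ and using $a < b$ (hence $j+1-a > j+1-b$), the right-hand side is at least $j+1-a > j+1-b$; therefore the left-hand side must be attained by its second argument, i.e.\ $g_{j+1}(b) = s(b)$, and moreover $s(b) > \max(j+1-a, s(a))$, which yields both $s(b) > s(a)$ and $s(b) > j+1-a > j-a$. Feeding these back into $g_j$: since $s(b) > j - a > j - b$ we get $g_j(b) = \max(j-b, s(b)) = s(b)$, while $g_j(a) = \max(j-a, s(a)) < s(b)$ because both $j-a$ and $s(a)$ are strictly below $s(b)$. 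Hence $g_j(b) > g_j(a)$, contradicting the non-strict inequality above; this proves $x(j) \le x(j+1)$.

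I expect the main obstacle to be the careful case bookkeeping of which argument realizes each maximum in the four quantities $g_j(a), g_j(b), g_{j+1}(a), g_{j+1}(b)$, and in particular deriving $s(b) > s(a)$ as a \emph{consequence} of the strict inequality rather than assuming any monotonicity of $s$; note that the argument never needs $s$ to be monotone, only the shift-by-one structure of the $j - j'$ term and the non-decreasing ranges. A minor boundary point to address is that under the contradiction hypothesis $a < b \le v(j)$ holds automatically, so $a \le v(j)$ and no separate treatment of the case $x(j+1) > v(j)$ is required; one only needs that $x(j)$ and $x(j+1)$ are both defined, which holds exactly when the corresponding ranges are nonempty (i.e.\ $j, j+1 \ge J$).
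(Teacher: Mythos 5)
Your proof is correct and takes essentially the same route as the paper: assume $x(j+1) < x(j)$, use $v(j) \le v(j+1)$ to place both indices in both minimization ranges, extract the non-strict inequality $\max(j-x(j),s(x(j))) \le \max(j-x(j+1),s(x(j+1)))$ from $x(j)$ being a minimizer and the strict inequality $\max(j+1-x(j),s(x(j))) > \max(j+1-x(j+1),s(x(j+1)))$ from $x(j+1)$ being the \emph{largest} minimizer, and derive a contradiction. The only difference is in the last step: where the paper resolves the resulting system (in its notation $\max(A,B) \geq \max(C,D)$, $\max(A+1,B) < \max(C+1,D)$, $A > C$) by a four-way case analysis on which argument attains each maximum, you unfold the strict inequality directly to conclude that $s(x(j))$ must attain and strictly dominate both maxima, which collapses the case analysis into a single chain of inequalities --- a slightly cleaner finish that, incidentally, never relies on the indices being integers, whereas the paper's second case (deducing $B > C+1$ from $B > A > C$) does.
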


\begin{proof}
By the definition of $x(.)$, for any $j\in [1\ltdots n]$, we have for $j' \in [1\ltdots x(j)-1]$, $\max(j-j',s(j')) \geq \max(j-x(j),s(x(j)))$ and for $j' \in [x(j)+1 \ltdots v(j)]$, $\max(j-j',s(j')) > \max(j-x(j),s(x(j)))$.

We assume that there exists $j \in [1\ltdots n-1]$, such that $x(j+1) < x(j)$. In this case, $x(j+1) \in [1\ltdots x(j)-1]$ and we have $\max(j-x(j+1),s(x(j+1))) \geq \max(j-x(j),s(x(j)))$. As $v(j+1) \geq v(j)$, $x(j) \in [x(j+1)+1\ltdots v(j+1)]$ and thus $\max(j+1-x(j+1),s(x(j+1))) < \max(j+1-x(j),s(x(j)))$. As $x(j+1) < x(j)$, we have $j-x(j+1) > j-x(j)$.
To simplify the proof, we take $A = j-x(j+1)$, $B = s(x(j+1))$, $C= j - x(j)$ and $D = s(x(j))$. Hence, we have $\max(A,B) \geq \max(C,D)$, $\max(A+1,B) < \max(C+1,D)$ and $A > C$. Now we are going to prove that this system admits no solution.
\begin{itemize}
    \item Case where $A = \max(A,B)$ and $C = \max(C,D)$. As $A> C$, we have $A+1 > C+1$ and thus $\max(A+1,B) > \max(C+1,D)$ which is impossible because $\max(A+1,B) < \max(C+1,D)$.
    \item Case where $B = \max(A,B)$ and $C = \max(C,D)$. We can assume that $B>A$ (in the other case, we take $A = \max(A,B)$) and as $A > C$, we have $B > C+1$ and thus $\max(A+1,B) > \max(C+1,D)$ which is impossible because $\max(A+1,B) < \max(C+1,D)$.
    \item Case where $A = \max(A,B)$ and $D = \max(C,D)$. We have $A > D$ and $A > C$, thus $\max(A+1,B) > \max(C+1,D)$ which is impossible because $\max(A+1,B) < \max(C+1,D)$.
    \item Case where $B = \max(A,B)$ and $D = \max(C,D)$. We have $B \geq D$ and $A > C$, thus $\max(A+1,B) \geq \max(C+1,D)$ which is impossible because $\max(A+1,B) < \max(C+1,D)$.
\end{itemize}
\end{proof}

\begin{lemma}\label{le:bound:s}
    By initialising $s(1)$ to a threshold $K$, for any $j\in [1\ltdots n]$, we have $s(j) \leq \max(j,K)$.
\end{lemma}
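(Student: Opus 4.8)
The plan is to prove the bound by a straightforward induction on $j$, relying on the recurrence of Eq.~(\ref{eq:valid-segmentation-vj}) specialised to this score, namely $s(j)=\min_{j'\in[1\ltdots v(j)]}\max(j-j',s(j'))$ (the same expression underlying the definition of $x(j)$). The intuition is that neither argument of the inner maximum can ever exceed $\max(j,K)$, so the minimum cannot either.

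First I would dispatch the base case $j=1$: since $s(1)$ is initialised to $K$, we have $s(1)=K\leq\max(1,K)$, so the claim holds. For the inductive step I would assume $s(j')\leq\max(j',K)$ for every $j'<j$ and bound $s(j)$ from above by showing that \emph{every} term entering the minimum already respects the target bound. Fix any $j'\in[1\ltdots v(j)]$. Because $v(j)\leq j-1$ by the definition of $v(\cdot)$, we have $j'\leq j-1<j$; hence the first argument satisfies $j-j'\leq j-1<j\leq\max(j,K)$, while the second satisfies, by the induction hypothesis together with $j'<j$, $s(j')\leq\max(j',K)\leq\max(j,K)$. Therefore $\max(j-j',s(j'))\leq\max(j,K)$ for each admissible $j'$, and taking the minimum over the range $[1\ltdots v(j)]$ yields $s(j)\leq\max(j,K)$, closing the induction.

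The step that requires care is the nonemptiness of the range $[1\ltdots v(j)]$: if $v(j)$ were undefined or smaller than $1$ (a short prefix admitting no valid segment ending at column $j$), the minimum would range over the empty set and $s(j)$ would be $\infty$, breaking the bound. I expect this to be the only genuine subtlety, and it is precisely what the threshold initialisation of $s(1)$ is meant to absorb: once $s(1)$ carries the finite value $K$, the index $j'=1$ remains available as a fallback whenever $v(j)\geq 1$, and the degenerate prefixes are excluded under the standing assumption that a valid segmentation exists. Beyond this bookkeeping, the argument reduces entirely to the two elementary inequalities above, so no heavy computation is involved.
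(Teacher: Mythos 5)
Your proof is correct and follows essentially the same route as the paper's: induction on $j$ with base case $s(1)=K\leq\max(1,K)$, then bounding both arguments of each term $\max(j-j',s(j'))$ by $\max(j,K)$ so that the minimum inherits the bound. Your additional remark about the nonemptiness of the range $[1\ltdots v(j)]$ is a reasonable piece of bookkeeping that the paper leaves implicit, but it does not change the substance of the argument.
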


\begin{proof}
    We are going to show this by induction. The base case is obvious because $s(1) = K \leq \max(1,K)$.
    As $s(j)=\min_{j':1\leq j'\leq v(j)} \max(j-j',s(j'))$, by using induction, $s(j) \leq \min_{j':1\leq j'\leq v(j)} \max(j,K) \leq \max(j,K)$
\end{proof}

Thanks to Lemma~\ref{le:bound:s}, by taking the threshold $K = n+1$, the values $s(j)$ are in $O(n)$ for all $j$ in $[1\ltdots n-1]$.

\begin{lemma}\label{le:j:star}
    Given $j^{\star} \in [x(j-1)+1\ltdots v(j)]$, we can compute in constant time if
    \[
    j^{\star} = \max \argmin_{j' \in [j^{\star} \ltdots v(j)]} \max(j-j',s(j')).
    \]
\end{lemma}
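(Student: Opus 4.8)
The plan is to show that deciding the displayed equality reduces to a single range–minimum comparison on the already-computed array $s(\cdot)$, which is answerable in constant time. Write $f(j')=\max(j-j',s(j'))$ for the objective, so the range in question is $[j^{\star}\ltdots v(j)]$ with left endpoint $j^{\star}$. Since $j^{\star}$ is the \emph{smallest} index of this range, it can be the \emph{largest} minimizer only if it is the unique minimizer; hence $j^{\star}=\max\argmin_{j'\in[j^{\star}\ltdots v(j)]} f(j')$ holds if and only if $f(j^{\star})<f(j')$ for every $j'\in[j^{\star}+1\ltdots v(j)]$ (vacuously true when $j^{\star}=v(j)$).

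The key observation I would make is that the length term $j-j'$ never controls this comparison for indices to the right of $j^{\star}$. Indeed, for any $j'\in[j^{\star}+1\ltdots v(j)]$ we have $j-j'\leq (j-j^{\star})-1< j-j^{\star}\leq f(j^{\star})$. Consequently $f(j')=\max(j-j',s(j'))>f(j^{\star})$ holds if and only if $s(j')>f(j^{\star})$: if $s(j')>f(j^{\star})$ then $f(j')\geq s(j')>f(j^{\star})$, while if $s(j')\leq f(j^{\star})$ then both arguments of the maximum defining $f(j')$ are at most $f(j^{\star})$, so $f(j')\leq f(j^{\star})$. Combining this with the previous paragraph yields the clean characterization
\[
j^{\star}=\max\argmin_{j'\in[j^{\star}\ltdots v(j)]}\max(j-j',s(j'))\quad\Longleftrightarrow\quad \min_{j'\in[j^{\star}+1\ltdots v(j)]} s(j') \;>\; \max(j-j^{\star},s(j^{\star})),
\]
with the convention that the minimum over an empty range is $\infty$.

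It then remains to evaluate the right-hand side in constant time. Since every index of the range satisfies $j'\leq v(j)\leq j-1$, all values $s(1),\ldots,s(v(j))$ have already been computed when column $j$ is processed, and by Lemma~\ref{le:bound:s} they are integers bounded by $O(n)$. I would therefore maintain a range-minimum-query structure over the array $s(\cdot)$ as its entries are filled in; one query on $[j^{\star}+1\ltdots v(j)]$ plus the evaluation of $\max(j-j^{\star},s(j^{\star}))$ decides the condition in $O(1)$ time.

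The step I expect to be the main obstacle is the reduction itself, namely recognizing that the $\max$–$\argmin$ test collapses to a pure range-minimum test on $s$; this is exactly where the asymmetry of the range (left endpoint fixed at $j^{\star}$) and the strict monotonicity of $j-j'$ are used. I deliberately avoid claiming that $s$ is monotone, since this can fail: an extra \msa column may turn a previously non-repeat-free short final block into a valid one, lowering the optimal score. That is precisely why a neighbour-only comparison of $f(j^{\star})$ against $f(j^{\star}+1)$ does not suffice and the range-minimum formulation is the right one.
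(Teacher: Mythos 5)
Your proof is correct and takes essentially the same approach as the paper: both reduce the test to comparing $k=\max(j-j^{\star},s(j^{\star}))$ against $\min_{j'\in[j^{\star}+1\ltdots v(j)]} s(j')$ (with $\min$ over an empty range being $\infty$), answered by a constant-time semi-dynamic range-minimum structure on the array $s(\cdot)$, for which the paper cites Cazaux et al. Your writeup is in fact slightly more careful, spelling out both directions of the equivalence, whereas the paper's ``otherwise'' case contains inequalities written as $\geq$ where $\leq$ is intended.
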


\begin{proof}
    We need just to compare $k = \max(j-j^{\star},s(j^{\star}))$ and $s(j^{\diamond})$ where $j^{\diamond}$ is in \\$\argmin_{j' \in [j^{\star}+1\ltdots v(j)]} s(j')$. If $k$ is smaller than $s(j^{\diamond})$, $k$ is smaller than all the $s(j')$ with $j' \in [j^{\star}+1\ltdots  v(j)]$ and thus for all $\max(j-j',s(j'))$. Hence we have \\$j^{\star}= \max \argmin_{j' \in [j^{\star} \ltdots v(j)]} \max(j-j',s(j'))$.

    \begin{sloppypar}
    Otherwise, $s(j^{\diamond}) \geq k$ and as $k \geq j-j^{\star}$,  $\max(j-j^{\diamond},s(j^{\diamond})) \geq k$. In this case $j^{\star} \neq \max \argmin_{j' \in [j^{\star}\ltdots  v(j)]} \max(j-j',s(j'))$.
    By using the constant time semi-dynamic range maximum query by Cazaux et al. \cite{CKMN19} on the array $s(.)$, we can obtain in constant time $j^{\diamond}$ and thus check the equality in constant time.
    \end{sloppypar}
\end{proof}

\begin{theorem}
    The values $s(j)$ of Eq.~(\ref{eq:valid-segmentation-vj}) with $w((s(j'),j',j)=\max(s(j'),j-j')$ and $\bigoplus=\min$ for all $j \in [1 \ltdots n]$, can be computed in $O(n)$ time after a randomized $O(mn)$ time or deterministic $O(mn\log \sigma)$ time preprocessing on a gapless multiple sequence alignment. The optimal segmentation defined by Eq.~(\ref{eq:valid-segmentation-vj}) yields a repeat-free founder graph.
    \label{thm:linearsegmentation}
\end{theorem}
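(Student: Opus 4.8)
The plan is to assemble the three lemmas just proved into a single non-backtracking left-to-right sweep, layered on top of the preprocessing of Theorem~\ref{th:vj}. The preprocessing bound follows immediately: Theorem~\ref{th:vj} computes every value $v(j)$ in randomized $O(mn)$ or deterministic $O(mn\log\sigma)$ time, which is exactly the cost claimed. To keep the subsequent computation within a bounded key universe, I would initialise $s(1)$ with the threshold $K=n+1$, so that by Lemma~\ref{le:bound:s} every value $s(j)$ lies in $O(n)$; this lets the semi-dynamic range structure of Cazaux et al.~\cite{CKMN19} operate over keys in $[0\ltdots n]$ and lets an undefined (infinite) score be represented by the capped value $n+1$.

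The core of the argument is the $O(n)$-time evaluation of the recurrence $s(j)=\min_{0\leq j'\leq v(j)}\max(j-j',s(j'))$, for which it suffices to locate $x(j)=\max\argmin_{j'\in[1\ltdots v(j)]}\max(j-j',s(j'))$ and then set $s(j)=\max(j-x(j),s(x(j)))$. Here I would exploit the monotonicity lemma $x(j)\leq x(j+1)$ together with $v(j)\leq v(j+1)$ (Section~\ref{sect:preprocessing}): since $x$ is non-decreasing, the argmin over $[1\ltdots v(j)]$ already lies in $[x(j-1)\ltdots v(j)]$, so restricting the search to this range leaves the minimum unchanged. I would therefore maintain a single pointer that, for each $j$, starts at $x(j-1)$ and advances only forward, testing each candidate $j^\star$ with the constant-time check of Lemma~\ref{le:j:star} (which compares $\max(j-j^\star,s(j^\star))$ against the range minimum of $s(\cdot)$ over $[j^\star+1\ltdots v(j)]$ supplied by the semi-dynamic structure). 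The newly computed values $s(j)$ are inserted into that structure as the sweep proceeds, and because both $x(\cdot)$ and $v(\cdot)$ are non-decreasing the pointer never retreats; its total advancement across all columns is bounded by $n$, giving amortised $O(1)$ work per column and $O(n)$ time overall.

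For the correctness of the induced graph I would note that the recurrence only ever appends a segment $[j'+1\ltdots j]$ with $j'\leq v(j)$, i.e.\ a \emph{valid} (repeat-free) segment, using that left-extensions of valid segments stay valid. Consequently the segmentation recovered by traceback from $s(n)$ consists entirely of repeat-free segments, and by the Characterization Lemma (Lemma~\ref{lemma:charaterization}) the graph $G(S)$ it induces is repeat-free.

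The step I expect to be the main obstacle is the amortised pointer analysis interlocked with the semi-dynamic range structure: one must argue carefully that collapsing the argmin range from $[1\ltdots v(j)]$ to $[x(j-1)\ltdots v(j)]$ is lossless (a consequence of $x(j)\geq x(j-1)$), and that the precondition $j^\star\in[x(j-1)+1\ltdots v(j)]$ of Lemma~\ref{le:j:star} is respected by the advancing pointer, with the boundary candidate $j^\star=x(j-1)$ handled by the invariant carried over from the previous column.
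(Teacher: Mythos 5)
Your proposal matches the paper's proof essentially step for step: the same preprocessing via Theorem~\ref{th:vj}, the same left-to-right sweep that locates $x(j)$ with a forward-only pointer starting at $x(j-1)$ and tested by the constant-time check of Lemma~\ref{le:j:star}, the same insertion of $s(j)=\max(j-x(j),s(x(j)))$ into the semi-dynamic range-minimum structure, and the same amortized $O(n)$ bound from the monotonicity of $x(\cdot)$. The paper's own proof is terser; the details you add (the threshold $K=n+1$ via Lemma~\ref{le:bound:s}, the boundary-candidate invariant, and the repeat-freeness of the traceback segmentation via Lemma~\ref{lemma:charaterization}) only make explicit what the paper leaves implicit.
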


\begin{proof}
    We begin by preprocessing all the values of $v(j)$ in $O(mn)$ randomized or $O(mn\log \sigma)$ deterministic time (Theorem~\ref{th:vj}).
    The idea is to compute all the values $s(j)$ by increasing order of $j$ and by using the values $x(j)$. For each $j \in [1\ltdots n]$, we check all the $j'$ from $x(j-1)$ to $v(j)$ with the equality of Lemma~\ref{le:j:star} until one is true and thus corresponds to $x(j)$.  Finally, we add $s(j) = \max(j-x(j),s(x(j)))$ to the constant time semi-dynamic range maximum query and continue with $j+1$.
\end{proof}

\section{Succinct index for repeat-free founder graphs \label{sect:compressedindexing}}

Recall the indexing solutions of Sect.~\ref{sect:repeat-freeness} and the definitions from Sect.~\ref{sect:definitions}. 

We now show that explicit tries and Aho-Corasick automaton can be replaced by some auxiliary data structures associated with the Burrows-Wheeler transformation of the concatenation $C=\prod_{i \in \{1,2,\ldots,b\}}$ $\prod_{v \in V^i, (v,w) \in E}$ $\ell(v)\ell(w)\mathbf{0}$. 

Consider interval $\mathtt{SA}[i\ltdots k]$ in the suffix array of $C$ corresponding to suffixes having $\ell(v)$ as prefix for some $v \in V$. From the repeat-free property it follows that this interval can be split into two subintervals, $\mathtt{SA}[i\ltdots j]$ and $\mathtt{SA}[j+1\ltdots k]$, such that suffixes in $\mathtt{SA}[i\ltdots j]$ start with $\ell(v)\mathbf{0}$ and suffixes in $\mathtt{SA}[j+1\ltdots k]$ start with $\ell(v)\ell(w)$, where $(v,w) \in E$. Moreover, BWT$[i\ltdots j]$ equals multiset $\{\ell(u)[|\ell(u)|-1] \mid (u,v) \in E\}$ \emph{sorted in lexicographic order}. This follows by considering the lexicographic order of suffixes $\ell(u)[|\ell(u)|-1]\ell(v)\mathbf{0}\ldots$ for $(u,v)\in E$. That is, BWT$[i\ltdots j]$ (interpreted as a set) represents the children of the root of the trie $\mathcal{R}(v)$ considered in Sect.~\ref{sect:repeat-freeness}.

We are now ready to present the search algorithm that uses only the BWT of $C$ and some small auxiliary data structures. We associate two bitvectors $B$ and $E$ to the BWT of $C$ as follows. We set $B[i]=1$ and $E[k]=1$ iff $\mathtt{SA}[i\ltdots k]$ is maximal interval with all suffixes starting with $\ell(v)$ for some $v\in V$.

Consider the backward search with query $Q[1\ltdots q]$. Let $\mathtt{SA}[j'\ltdots k']$ be the interval after matching the shortest suffix $Q[q'\ltdots q]$ such that BWT$[j']=\mathbf{0}$. Let $i=\mathtt{select}(B,\mathtt{rank}(B,j'))$ and $k=\mathtt{select}(E,\mathtt{rank}(B,j'))$.
If $i\leq j'$ and $k'\leq k$, index $j'$ lies inside an interval $\mathtt{SA}[i\ltdots k]$ where all suffixes start with $\ell(v)$ for some $v$. We modify the range into $\mathtt{SA}[i\ltdots k]$, and continue with the backward step on $Q[q'-1]$. We check the same condition in each step and expand the interval if the condition is met. Let us call this procedure \emph{expanded backward search}.

We can now strictly improve Theorem~\ref{thm:blockrepeatfree}~and~Corollary~\ref{thm:blockrepeatfreeGDstring} as follows.

\begin{theorem}
Let $G=(V,E)$ be a repeat-free founder graph (or a repeat-free degenerate string) with blocks $V^1, V^2, \ldots, V^b$ such that $V=V^1 \cup V^2 \cup \cdots \cup V^b$. We can preprocess an index structure for $G$ occupying $O(L|E|\log\sigma)$ bits in $O(L|E|)$ time, where $\{1,\ldots,\sigma\}$ is the alphabet for node labels and $L=\max_{v \in V} \ell(v)$. Given a query string $Q[1\ltdots q] \in \{1,\ldots,\sigma\}^q$, we can use expanded backward search with the index structure to find out if $Q$ occurs in $G$. This query takes $O(|Q|)$ time. 
\label{thm:blockrepeatfreeBWTindex}
\end{theorem}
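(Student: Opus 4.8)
The plan is to prove Theorem~\ref{thm:blockrepeatfreeBWTindex} by combining three ingredients already prepared in the excerpt: (i) the concatenation $C=\prod_{i}\prod_{v\in V^i,(v,w)\in E}\ell(v)\ell(w)\mathbf{0}$ of length $O(L|E|)$, whose single-directional BWT index can be constructed in $O(|C|)=O(L|E|)$ time occupying $O(L|E|\log\sigma)$ bits by Lemma~\ref{lemma:randomized_index_construction} (or the deterministic variant); (ii) the two auxiliary bitvectors $B$ and $E$ marking, for each node $v\in V$, the maximal suffix-array interval $\mathtt{SA}[i\ltdots k]$ of suffixes prefixed by $\ell(v)$, each equipped with $\mathtt{rank}/\mathtt{select}$ support in $o(|C|)$ extra bits; and (iii) the expanded backward search described immediately before the theorem statement. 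The correctness argument must show that expanded backward search decides whether $Q$ occurs in $G(S)$, and the complexity argument must show the claimed $O(|Q|)$ query time and the stated construction bounds.

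\textbf{Construction and space.} First I would set up $C$ exactly as in the proof of Theorem~\ref{thm:blockrepeatfree} and build its (unidirectional) BWT index in $O(|C|)$ time. To build $B$ and $E$, I would walk over the suffix-tree/suffix-array structure identifying, for each $v\in V$, the locus of $\ell(v)$; by the repeat-free property the suffixes prefixed by $\ell(v)$ form a single contiguous interval $\mathtt{SA}[i\ltdots k]$ that, as observed in the text, splits into the sub-interval spelling $\ell(v)\mathbf{0}$ and the sub-interval spelling $\ell(v)\ell(w)$ for out-neighbours $w$. Marking the endpoints $B[i]=E[k]=1$ for all $v$ costs $O(|C|)$ total time (each node is charged once to its interval) and $O(|C|)=O(L|E|)$ bits, with the rank/select indices adding only lower-order terms, giving the stated $O(L|E|\log\sigma)$ bits and $O(L|E|)$ construction time.

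\textbf{Correctness.} The key invariant to maintain is that after processing suffix $Q[p\ltdots q]$, the current interval either (a) equals the exact set of occurrences of $Q[p\ltdots q]$ as a substring in $C$, or (b) has been expanded to the maximal $\ell(v)$-interval $\mathtt{SA}[i\ltdots k]$ precisely when $Q[p\ltdots q]$ is a suffix of some $\ell(v)$ matched at the boundary marked by $B$. The crucial point, which I would prove by induction on the number of backward steps, is that whenever the current match ends with the end-marker $\mathbf{0}$ (so the matched suffix of $Q$ sits entirely inside a single node label or straddles a node boundary recorded in $C$), the expansion step correctly jumps to the full $\ell(v)$-interval using $i=\mathtt{select}(B,\mathtt{rank}(B,j'))$ and $k=\mathtt{select}(E,\mathtt{rank}(B,j'))$, so that subsequent left-extensions see every in-neighbour $u$ of $v$ via the multiset $\mathrm{BWT}[i\ltdots j]$ that, as established in the text, represents the children of the root of the trie $\mathcal{R}(v)$. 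This is exactly the mechanism that replaces the explicit tries $\mathcal{R}(v)$ and the Aho-Corasick automaton of the warm-up proof: a path in $G(S)$ spelling $Q$ exists if and only if the expanded backward search never empties its interval. I would argue both directions—that a nonempty final interval exhibits a witnessing path, and that any occurring $Q$ keeps the interval nonempty—appealing to the same repeat-free reasoning as in Theorem~\ref{thm:blockrepeatfree}, namely that verifying one candidate alignment of $Q$ against node labels suffices because a second distinct alignment would force a repeated node label.

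\textbf{Query time and the main obstacle.} Each backward step is a constant-time operation of Lemma~\ref{lemma:randomized_index_construction}, and each expansion is two rank and two select calls, all $O(1)$, so the total query cost is $O(|Q|)$, improving the $O(|Q|\log\sigma)$ of Theorem~\ref{thm:blockrepeatfree}. \textbf{The hard part} will be pinning down the expansion-trigger condition so that the interval is widened at exactly the right moments and never spuriously: I must verify that the test $i\le j'$ and $k'\le k$ fires precisely when the current interval lies within a genuine $\ell(v)$-interval, that the expansion does not lose occurrences that continue past the node boundary (these are recovered because $C$ explicitly stores $\ell(v)\ell(w)$ for every edge), and that no double-counting or missed in-neighbour arises when several nodes share a common label prefix—here the repeat-free property is precisely what guarantees the intervals are non-nested and the select/rank jumps are unambiguous. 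Handling this boundary bookkeeping rigorously, rather than the routine index construction, is where the real content of the proof lies.
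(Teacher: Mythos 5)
Your proposal follows essentially the same route as the paper's own proof: the same concatenation $C$ indexed by a unidirectional BWT with the bitvectors $B$ and $E$ supporting rank/select, the same reduction of correctness to the argument of Theorem~\ref{thm:blockrepeatfree} (the expansion step simulates the implicit tries $\mathcal{R}(v)$, repeat-freeness keeps node-label intervals non-nested so following one arbitrary candidate suffices, and right-hand continuations are already certified because $C$ stores $\ell(v)\ell(w)$ for every edge), and the same $O(|Q|)$ accounting with constant-time backward steps plus $O(1)$ rank/select calls per expansion. One small correction on the tooling: since only a standard unidirectional backward step is needed, the paper gets deterministic $O(L|E|)$ construction with $O(1)$-time steps directly from the index of \cite{BCKM20}, whereas your citation of Lemma~\ref{lemma:randomized_index_construction} makes the construction randomized, and falling back on ``the deterministic variant'' (Lemma~\ref{lemma:index_construction}) would cost $O(\log\sigma)$ per left extension and thus break the claimed $O(|Q|)$ query time.
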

\begin{proof}
As we  expand the search interval in BWT, it is evident that we still find all occurrences for short patterns that span at most two nodes, like in the proof of Theorem~\ref{thm:blockrepeatfree}. We need to show that a) the expansions do not yield spurious occurrences for such short patterns and b) the expansions yield exactly the occurrences for long patterns that we earlier found with the Aho-Corasick and tries approach. 
In case b), notice that after an expansion step we are indeed in an interval $\mathtt{SA}[i\ltdots k]$ where all suffixes match $\ell(v)$ and thus corresponds to a node $v \in V$. The suffix of the query processed before reaching interval $\mathtt{SA}[i\ltdots k]$ must be at least of length $|\ell(v)|$. That is, to mimic Aho-Corasick approach, we should continue with the trie $\mathcal{R}(v)$. This is identical to taking a backward step from BWT$[i\ltdots k]$, and continuing therein to follow the rest of this implicit trie. 

To conclude case b), we still need to show that we reach all the same nodes as when using Aho-Corasick, and that the search to other direction with $\mathcal{L}(v)$ can be avoided. These follow from case a), as we see.

In case a), before doing the first expansion, the search is identical to the original algorithm in the proof of Theorem~\ref{thm:blockrepeatfree}. After the expansion, all matches to be found are those of case b). That is, no spurious matches are reported. Finally, no search interval can include two distinct node labels, so the search reaches the only relevant node label, where the Aho-Corasick and trie search simulation takes place. We reach all such nodes that can yield a full match for the query, as the proof of Theorem~\ref{thm:blockrepeatfree} shows that it is sufficient to follow an arbitrary candidate match.

As we only need a standard backward step, we can use a unidirectional BWT index constructable in deterministic $O(L|E|)$ time supporting a backward step in constant time \cite{BCKM20}.
\end{proof}

\section{Conditional hardness of indexing \efgs\label{sect:hardness}}

\begin{figure*}[t]
    \centering
    \includegraphics[scale=0.9]{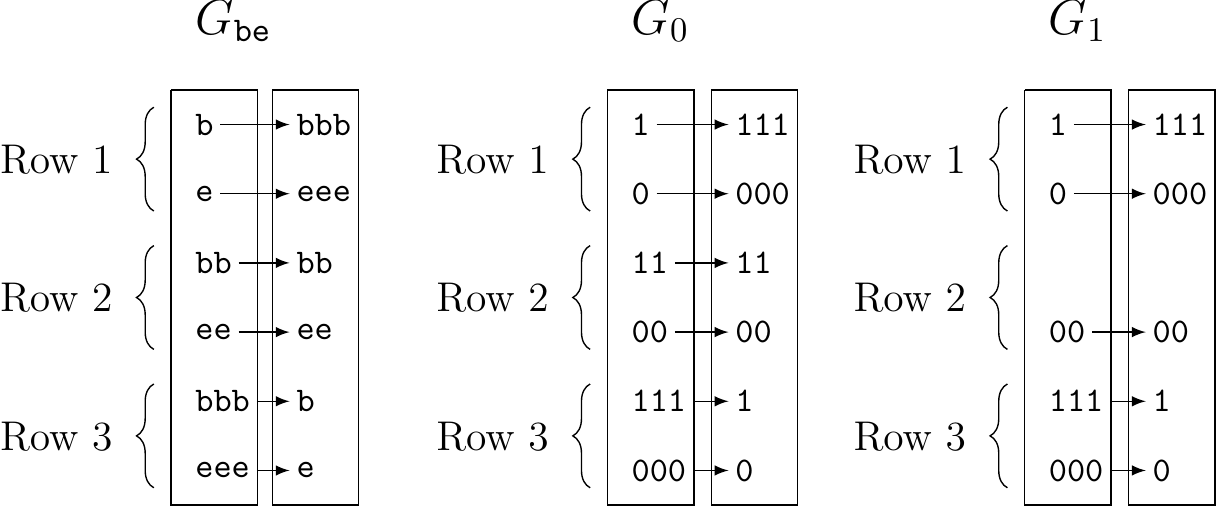}
    \caption{Gadgets $G_{\BB\E}$, $G_{\zero}$ and $G_\one$. Each gadget is organized into three rows, each row encoding a different partitioning of the strings $\BB\BB\BB\BB$, $\E\E\E\E$, $\zero\zero\zero\zero$, $\one\one\one\one$. This ensures that, when combining these gadgets in \Cref{fig:G}, edges can be controlled to go within the same row, or to the row below.}
    \label{fig:G_be01}
\end{figure*}
We now turn our attention to general \msas and elastic founder graphs induced from their segmentation. With non-elastic founder graphs, we have seen that the repeat-free property makes them indexable, but for now we have no proof that such property is necessary. For elastic founder graphs, we are able to derive such a conditional lower bound.

Namely, we show a reduction from \emph{Orthogonal Vectors} (\ov) to the problem of matching a query string in an \efg, continuing the line of research conducted on many related (degenerate) string problems \cite{CR16,EGMT19,Alzetal20,GibneySPIRE2020}. The \ov problem is to find out if there exist $x \in X$ and $y \in Y$ such that $x \cdot y = 0$, given two set $X$ and $Y$ of $n$ binary vectors each. We construct string $Q$ using $X$ and graph $G$ using $Y$. Then, we show that $Q$ has a match in $G$ if and only if $X$ and $Y$ form a ``yes''-instance of \ov. We condition our results on the following \ov hypothesis, which is implied by the \emph{Strong Exponential Time Hypothesis}~\cite{IP01}.

\begin{definition}[Orthogonal Vectors Hypothesis (\ovh)~\cite{Williams05}]
\label{def:OVH}
Let $X, Y$ be the two sets of an \ov instance, each containing $n$ binary vectors of length $d$.\footnote{In this section, keeping in line with the usual notation in the \ov problem, we use $n$ to denote the size of $X$ and $Y$, instead of the number of columns of the \msa.} For any constant $\epsilon>0$, no algorithm can solve \ov in time $O(\text{poly}(d)n^{2-\epsilon})$.
\end{definition}

\subsection{Query string} 
We build string $Q$ by combining string gadgets $Q_1, \ldots, Q_n$, one for each vector in $X$, plus some additional characters. To build string $Q_i$, first we place four \BB characters, then we scan vector $x_i \in X$ from left to right. For each entry of $x_i$, we place sub-string $Q_{i,h}$ consisting of four \zero characters if $x_i[h]=0$, or four \one characters if $x_i[h]=1$. Finally, we place four \E characters. For example, vector $x_i = 101$ results into string
\[Q_i = \BB\BB\BB\BB \, Q_{i,1} \, Q_{i,2} \, Q_{i,3} \, \E\E\E\E
\text{, where }\;
Q_{i,1} = \one\one\one\one,
\]
\[\quad Q_{i,2} = \zero\zero\zero\zero, \quad Q_{i,3} = \one\one\one\one.\]
Full string $Q$ is then the concatenation $Q = \BB\BB\BB\BB Q_1 Q_2 \ldots Q_n \E\E\E\E$.
The reason behind these specific quantities will be clear when discussing the structure of the graph.

\subsection{Elastic founder graph}
We build graph $G$ combining together three different sub-graphs: $G_L$, $G_M$, $G_R$ (for \emph{left}, \emph{middle} and \emph{right}).
Our final goal is to build a graph structured in three logical ``rows''. We denote the three rows of $G_M$ as $G_{M1}$, $G_{M2}$, $G_{M3}$, respectively. The first and the third rows of $G$, along with subgraphs $G_L$ and $G_R$ (introduced to allow slack), can match any vector. The second row matches only sub-patterns encoding vectors that are orthogonal to the vectors of set $Y$. The key is to structure the graph such that the pattern is forced to utilize the second row to obtain a full match. 
We present the full structure of the graph in \Cref{fig:G}, which shows the graph built on top of vector set $\{100,\,011,\,010\}$. In particular, $G_M$ consists of $n$ gadgets $G_M^j$, one for each vector $y_j \in Y$. The key elements of these sub-graphs are gadgets $G_{\BB\E}$, $G_{\zero}$ and $G_\one$ (see \Cref{fig:G_be01}), which allow to stack together multiple instances of strings $\BB^4$, $\E^4$, $\one^4$, $\zero^4$. The overall structure mimics the one in~\cite{EGMT19}, except for the new idea from \Cref{fig:G_be01}. 

\begin{figure*}[t]
    \begin{subfigure}[b]{\textwidth}\centering
    \includegraphics[width=0.9\textwidth]{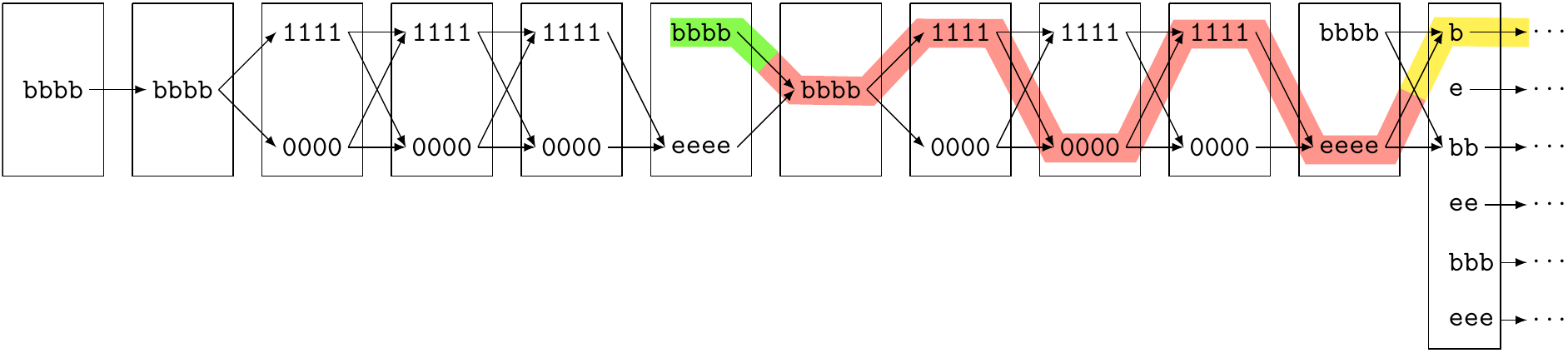}
    \caption{Sub-graph $G_L$. The last segment belongs to sub-graph $G_M$ and shows the connection.}
    \label{fig:G_L}
    \end{subfigure}
    \smallskip
    \begin{subfigure}[b]{\textwidth}
    \includegraphics[width=\textwidth]{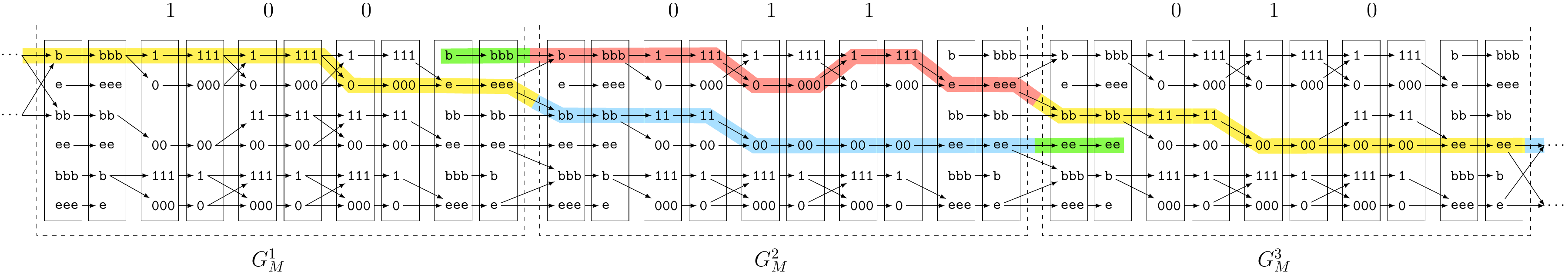}
    \caption{Sub-graph $G_M$ for vectors $y_1 = 100$, $y_2 = 011$ and  $y_3 = 010$. The dashed rectangles highlight the single $G_M^j$ gadgets.}
    \label{fig:G_M}
    \end{subfigure}
    \smallskip
    \begin{subfigure}[b]{\textwidth}
    \includegraphics[width=0.9\textwidth]{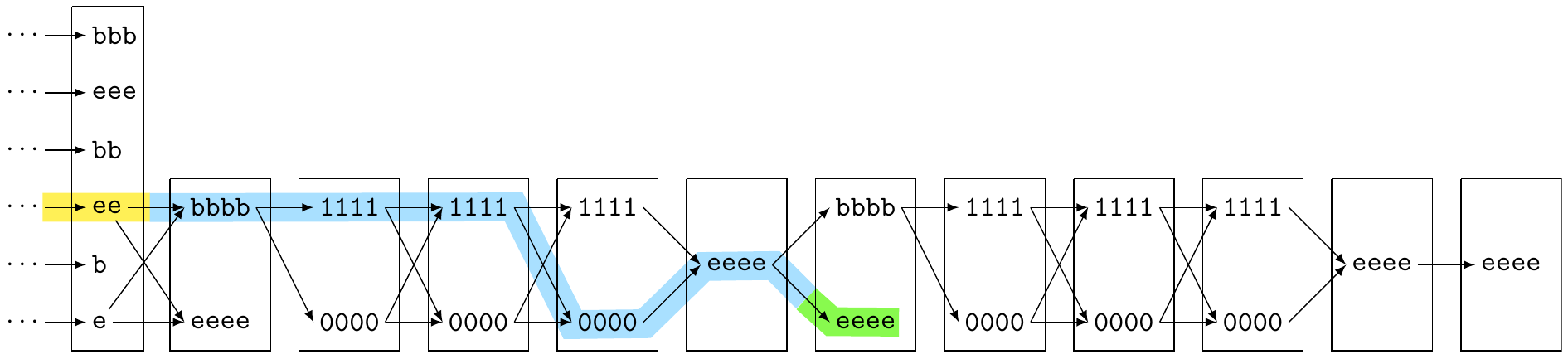}
    \caption{Sub-graph $G_R$. The first segment belongs to sub-graph $G_M$ and shows the connection.}
    \label{fig:G_R}
    \end{subfigure}
    \caption{An example of graph $G$. To visualize the entire graph, watch the three sub-figures from top to bottom and from left to right. 
    We also show two example occurrences of a query string $Q$ constructed from $x_1 = 101$, $x_2=110$, $x_3 = 100$ (left-most), and from $x_1 = 101$, $x_2 = 100$, $x_3 = 110$ (right-most), respectively. We highlight each $Q_i$ with a different color. Any such occurrence must pass through the middle row of $G_M$.
    \label{fig:G}}
\end{figure*}

\subsubsection{Detailed structure of the graph.}
\textbf{Sub-graph $G_L$} (Figure~\ref{fig:G_L}) consists of a starting segment with a single node labeled $\BB^4$, followed by $n-1$ sub-graphs $G_{L}^1, \ldots, G_{L}^{n-1}$, in this order. Each $G_{L}^i$ has $d+2$ segments, and is obtained as follows. First, we place a segment containing only one node with label $\BB^4$, then we place $d$ other segments, each one containing two nodes with labels $\one^4$ and $\zero^4$. Finally, we place a segment containing two nodes with labels $\BB^4$ and $\E^4$. 

The nodes in each segment are connected to all nodes in the next segment, with the exception of the last segment of each $G_{L}^i$: in this case, the node with label $\one^4$ and the one with label $\zero^4$ are connected only to the $\E^4$-node of the next (and last) segment of such $G_{L}^i$. 

\textbf{Sub-graph $G_R$} (Figure~\ref{fig:G_R}) is similar to sub-graph $G_L$, and it consists in $n-1$ parts $G_{R}^1, \ldots, G_{R}^{n-1}$, followed by a segment with a single node labeled $\E^4$. Part $G_{R}^i$ has $d+2$ segments, and is constructed almost identically to $G_{L}^i$. The differences are that, in the first segment of $G_{R}^i$, we place two nodes labeled $\BB^4$ and $\E^4$, while in the last segment we place only one node, which we label $\E^4$.

As in $G_L$, the nodes in each segment are connected to all nodes in the next segment, with the exception of the first segment of each $G_{R}^i$: in this case, the node labeled $\E^4$ has no outgoing edge.

\textbf{Sub-graph $G_M$} (Figure~\ref{fig:G_M}) implements the main logic of the reduction, and it uses three building blocks, $G_{\BB\E}$, $G_{\zero}$ and $G_\one$, which are organized in three rows, as shown in Figure~\ref{fig:G_be01}.

Sub-graph $G_M$ has $n$ parts, $G_M^1, \ldots, G_M^n$, one for each of the vectors $y_1, \ldots, y_n$ in set $Y$. Each $G_M^j$ is constructed, from left to right, as follows. First, we place a $G_{\BB\E}$ gadget. Then, we scan vector $y_j$ from left to right and, for each position $h \in \{1,\dots,d\}$, we place a $G_\zero$ gadget if the $h$-th entry is $y_j[h]=\zero$, or a $G_\one$ if $y_j[h]=\one$. Finally, we place another $G_{\BB\E}$ gadget. 

For the edges, we first consider each gadget $G_M^j$ separately. Let $G_h$ and $G_{h+1}$, be the gadgets encoding $y_j[h]$ and $y_j[h+1]$, respectively. We fully connect the nodes of $G_h$ to the nodes of $G_{h+1}$ row by row, respecting the structure of the segments. Then we connect, row by row, the $\BB$-nodes of the left $G_{\BB\E}$ to the leftmost $G_h$, which encodes $y_j[1]$, and the nodes of the rightmost $G_h$, which encodes $y_j[d]$, to the $\E$-nodes of the right $G_{\BB\E}$, again row by row. We repeat the same placement of the edges for every vector $G_h$, $G_{h+1}$, $1\leq h \leq d-1$; this construction is shown in Figure~\ref{fig:G_M}.

To conclude the construction of $G_M$, we need to connect all the $G_M^j$ gadgets together. Consider the right $G_{\BB\E}$ of gadget $G_M^j$, and the left $G_{\BB\E}$ of gadget $G_M^{j+1}$. The edges connecting these two gadgets are depicted in Figure~\ref{fig:G_M}, which shows that following a path we can either remain in the same row or move to the row below, but we cannot move to the row above. Moreover, sub-pattern $\BB^8$ can be matched only in the first and second row, while sub-pattern $\E^8$ only in the second and third rows.

In proving the correctness of the reduction, we will use $G_{M1}$, $G_{M2}$ and $G_{M3}$ to refer to the sub-graphs of $G_M$ consisting of only the nodes and edges of the first, second and third row, respectively. Formally, for $t \in \{1,2,3\}$, $V_{Mt} \subset V$ $V_{Mt} \subset V$ is the set of nodes placed in the $t$-th row of each $G_{\BB\E}$, $G_0$ or $G_1$ gadget belonging to sub-graph $G_M$, and \mbox{$E_{Mt} =\{(v,w) \in E \,|\, v,w \in V_{Mt}\}$}. Thus, \mbox{$G_{Mt} = (V_{Mt}, E_{Mt})$}. We will use the notation $G_{M2}^j$ to refer to the nodes belonging to both $G_M^j$ and $G_{M2}$, excluding the ones in $G_{M1}$ and $G_{M3}$, and the edges connecting them.

\textbf{Final graph $G$} is obtained by combining sub-graphs $G_L$, $G_M$ and $G_R$. To this end, we connect the nodes in the last segment of $G_L$ with the $\BB$-nodes in the first and second row of the left $G_{\BB\E}$ gadget of $G_M^1$. Finally, we connect the $\E$-nodes in the second and third row of the right $G_{\BB\E}$ gadget of $G_M^n$ with both the $\BB^4$-node and $\E^4$-node in the first segment of $G_R$. Figures~\ref{fig:G_L},~\ref{fig:G_M}~and~\ref{fig:G_R} can be visualized together, in this order, as one big picture of final graph $G$. In Figures~\ref{fig:G_L}~and~\ref{fig:G_R} we also included the adjacent segment of $G_M$ to show the connection.

\subsection{\ovh conditional hardness}
The proof of correctness is similar to the one in~\cite{EGMT19}, but with adaptations to the elastic founder graph. We prove three lemmas concerning $G_{M2}$, which are key for the correctness. The first lemma is a straightforward consequence of the structure of $G_{M2}$ and the fact that it is directed.

\begin{restatable}{lemma}{lemmasamej}
\label{lemma:samej}
If string $Q_i$ has a match in $G_{M2}$, then the path matching $Q_i$ is fully contained in $G_{M2}^j$, for some $1\leq j \leq n$. Moreover, each $Q_{i,h}$ sub-string matches a path of two nodes which belong to the $G_0$ or $G_1$ gadget encoding $y_j[h]$.
\end{restatable}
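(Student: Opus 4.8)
The plan is to lean on two structural features of the single-row graph $G_{M2}$: it is a DAG whose edges run strictly left to right through the gadget sequence, and in the second row the symbols $\BB$ and $\E$ occur \emph{only} inside $G_{\BB\E}$ gadgets, whereas $\zero$ and $\one$ occur only inside $G_\zero$ and $G_\one$ gadgets. Since $Q_i = \BB^4\, Q_{i,1}\cdots Q_{i,d}\, \E^4$, its \emph{interior} $Q_{i,1}\cdots Q_{i,d}$ contains no $\BB$ or $\E$ and has length exactly $4d$; hence every node on the subpath matching the interior must carry a $\zero$ or $\one$ label, i.e.\ it lies in a $G_\zero$ or $G_\one$ gadget.

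First I would prove the containment claim. In $G_M$ any two consecutive parts $G_M^j$ and $G_M^{j+1}$ are separated in the second row by the right $G_{\BB\E}$ of $G_M^j$ followed by the left $G_{\BB\E}$ of $G_M^{j+1}$, and every edge leaving $G_M^j$ towards $G_M^{j+1}$ passes through these two $G_{\BB\E}$ gadgets, which spell only $\BB$'s and $\E$'s. As the interior subpath never visits a $\BB$- or $\E$-node, it cannot cross such a separator, so $Q_{i,1},\ldots,Q_{i,d}$ are all matched inside the $G_\zero/G_\one$ gadgets of one single part $G_M^j$. The leading $\BB^4$ is matched by the $\BB$-nodes that feed into the first of these gadgets, and by construction only the left $G_{\BB\E}$ of $G_M^j$ has such nodes; symmetrically the trailing $\E^4$ is matched in the right $G_{\BB\E}$ of the same $G_M^j$. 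Therefore the whole matching path stays within $G_{M2}^j$.

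For the second claim I would use a length count. By construction (\Cref{fig:G_be01}) each of the $d$ gadgets of $G_M^j$ spells, in the second row, exactly four symbols along a two-node path, and the interior subpath must traverse all $d$ of them in order: it enters the first gadget immediately after the $\BB$-nodes of the left $G_{\BB\E}$ and leaves the last gadget immediately before the $\E$-nodes of the right $G_{\BB\E}$, while its $4d$ interior symbols exactly fill the $d\times 4$ symbols spelled by the gadgets. Consequently the block boundaries of $Q_{i,1}\cdots Q_{i,d}$ coincide with the gadget boundaries, and $Q_{i,h}$ matches precisely the two-node path of the $h$-th gadget, which encodes $y_j[h]$. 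The hard part is exactly this drift-free alignment: one must rule out that a length-four block $Q_{i,h}$ straddles two gadgets. This is what forces the bookkeeping that the interior starts at the first node of the first gadget and that both blocks and gadgets share the common length four, so that no shift is possible; the containment of the first part is comparatively routine once the directedness of $G_{M2}$ is invoked, matching the style of the analogous reduction in~\cite{EGMT19}.
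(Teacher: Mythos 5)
Your proof is correct and takes essentially the same route as the paper's (much terser) argument: the paper's single observation that $\BB^4\zero^4$/$\BB^4\one^4$ and $\zero^4\E^4$/$\one^4\E^4$ occur only at the beginning and end of a $G_{M2}^j$ is exactly your combination of symbol-type segregation, the $\BB/\E$ separator between consecutive parts, and the forced four-symbol alignment. You simply spell out the ``follows by construction'' details that the paper leaves implicit.
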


\begin{proof}
The claim follows by construction, since $Q_i$ starts with $\BB^4\zero^4$ or $\BB^4\one^4$ (which are found only at the beginning of a $G_{M2}^j$ gadget), and ends with $\zero^4\E^4$ or $\one^4\E^4$ (which are found only at the end of a $G_{M2}^j$).
\end{proof}

\begin{restatable}{lemma}{lemmamatchingGW}
\label{lemma:matchingGW}
String $Q_i$ has a match in $G_{M2}$ if and only if there exists $y_j \in Y$ such that $x_i \cdot y_j = 0$.
\end{restatable}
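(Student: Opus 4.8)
The plan is to prove both directions of the biconditional by explicitly relating the choice of $h$-th gadget traversals in $G_{M2}^j$ to the orthogonality condition $x_i \cdot y_j = 0$. By Lemma~\ref{lemma:samej}, I may assume the entire match of $Q_i$ sits inside a single $G_{M2}^j$, and that each sub-string $Q_{i,h}$ is matched against the two-node path of the $G_\zero$ or $G_\one$ gadget encoding $y_j[h]$. The crux is that the gadget encoding $y_j[h]$ in the middle row $G_{M2}$ can spell $Q_{i,h}$ only in the ``compatible'' cases. Concretely, $Q_{i,h}$ equals $\zero^4$ when $x_i[h]=0$ and $\one^4$ when $x_i[h]=1$; since $G_{M2}$ contains at position $h$ exactly one type of gadget ($G_\zero$ if $y_j[h]=0$, $G_\one$ if $y_j[h]=1$), the match of $Q_{i,h}$ through the middle row is possible precisely when the label of that gadget row agrees with $Q_{i,h}$, or when $x_i[h]=0$ (a $\zero$-entry in $x_i$ imposes no constraint, matching either gadget row suitably).

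First I would carry out the forward direction ($\Rightarrow$). Suppose $Q_i$ matches a path in $G_{M2}$. By Lemma~\ref{lemma:samej} this path lies in some $G_{M2}^j$ and each $Q_{i,h}$ matches the gadget encoding $y_j[h]$. I would then argue by contradiction: if $x_i \cdot y_j \neq 0$, there is a coordinate $h$ with $x_i[h] = y_j[h] = 1$. At such $h$, the query forces $Q_{i,h} = \one^4$ while the only available gadget in $G_{M2}$ at that position encodes $y_j[h]=1$; I would inspect Figure~\ref{fig:G_be01} to confirm that the middle-row traversal of $G_\one$ cannot spell $\one^4$ in the way required to both enter and exit the gadget while staying in $G_{M2}$, yielding the contradiction. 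Hence $x_i \cdot y_j = 0$.

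For the reverse direction ($\Leftarrow$), given $y_j$ with $x_i \cdot y_j = 0$, I would construct the matching path explicitly. Start at the $\BB$-nodes of the left $G_{\BB\E}$ of $G_{M2}^j$ (matching the leading $\BB^4$), then for each $h$ traverse the gadget encoding $y_j[h]$: orthogonality guarantees that whenever $x_i[h]=1$ we have $y_j[h]=0$, so the gadget at position $h$ is a $G_\zero$, which I must check can spell $Q_{i,h}=\one^4$ through its middle-row path — here the four-fold repetition and the three-partition structure of Figure~\ref{fig:G_be01} are exactly what make the labels flexible enough to absorb the query substring. When $x_i[h]=0$, $Q_{i,h}=\zero^4$ matches either gadget row directly. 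Concluding at the right $G_{\BB\E}$ spells the trailing $\E^4$, completing a match of $Q_i$ inside $G_{M2}^j$.

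The main obstacle I anticipate is the careful verification, at the level of individual nodes and their labels in Figure~\ref{fig:G_be01}, that the $\zero^4$/$\one^4$ query blocks align with the two-node gadget paths in precisely the compatible cases and fail in the incompatible ones. In particular, the subtlety is why the length-four repetitions $\zero^4, \one^4, \BB^4, \E^4$ and the specific row-to-row connectivity (allowing moves to the row below but not above) rule out ``cheating'' matches that would spell $Q_{i,h}$ using a gadget of the wrong type or by illegitimately crossing rows. I would handle this by a case analysis on $y_j[h]$ versus $x_i[h]$, leaning on the structural invariants already established for $G_{M2}$ and on Lemma~\ref{lemma:samej} to keep the path confined to a single $G_{M2}^j$.
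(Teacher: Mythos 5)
Your proposal is correct and takes essentially the same approach as the paper: confine the match to a single $G_{M2}^j$ via Lemma~\ref{lemma:samej}, then argue coordinate-wise from the gadget semantics that $\one^4$ can be spelled in the middle row only within a $G_\zero$ gadget while $\zero^4$ can be spelled in either, which gives orthogonality in both directions (your forward direction is by contradiction, the paper's is direct --- a cosmetic difference). One phrasing slip worth fixing: your summary sentence claiming the match of $Q_{i,h}$ is possible ``precisely when the label of that gadget row agrees with $Q_{i,h}$, or when $x_i[h]=0$'' literally inverts the gadget semantics (it would allow $\one^4$ to match $G_\one$ and forbid $\one^4$ from matching $G_\zero$), but the detailed case analyses you give in both directions state the correct facts, so this is a slip of wording rather than a substantive gap.
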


\begin{proof}
Recall that, by construction, the $h$-th $G_*$, $* \in \{0,1\}$ gadget in $G_{M2}^j$ if a $G_0$ gadget if and only if $y_j[h] = 0$, while it is a $G_1$ gadget if and only if $y_j[h] = 1$. We handle the two implications of the statement individually.

($\Rightarrow$) By Lemma~\ref{lemma:samej}, we can focus on the $d$ distinct and consecutive nodes of $G_{M2}^j$ that match $Q_i$. In particular we know that each sub-string $Q_{i,h}$ matches in the second row of either the $h$-th gadget $G_0$ or the $h$-th gadget $G_1$. Consider vectors $x_i \in X$ and $y_j \in Y$. If $Q_{i,h} = \one^4$ has a match in $G_{M2}^j$ it means that its $h$-th gadget is a $G_0$, and hence $y_j[h]=0$, implying $x_i[h] \cdot y_j[h] = 0$. If $Q_{i,h} = \zero$, by construction we know that $x_i[h]=0$, and $Q_{i,h}$ can have a match in $G_{M2}^j$ no matter whether the $h$-th gadget is a $G_0$ or a $G_1$. Thus, it clearly holds that $x_i[h] \cdot y_j[h] = 0$. At this point, we can conclude that $x_i[h] \cdot y_j[h] = 0$ for every $1 \leq h \leq d$, thus $x_i \cdot y_j = 0$.

($\Leftarrow$) Consider vectors $x_i \in X$ and $y_j \in Y$ that are such that $x_i \cdot y_j = 0$. For $h = 1, 2, \ldots, d$, if $y_j[h] = 0$ then the $h$-th gadget of $G_{M2}^j$ is a $G_0$ gadget, and $Q_{i,h}$ can surely match it. If $y_j[h] = 1$ it must hold that $x_i[h] = 0$, since $x_i \cdot y_j = 0$. Thus $Q_{i,h} = \zero^4$, and it can have a match in the $h$-th gadget of $G_{M2}^j$, no matter if it is a $G_0$ or $G_1$ gadget. Finally, sub-strings $\BB^4$ and $\E^4$ can have a match in the $G_{\BB\E}$ gadgets at the beginning and end of $G_{M2}^j$, respectively. All characters of $Q_i$ have now a matching node and the definition of the edges allows to visit all such nodes via a matching path starting in the left $G_{\BB\E}$ gadget of $G_{M2}^j$ and ending in the right $G_{\BB\E}$ gadget of $G_{M2}^j$.
\end{proof}

\begin{restatable}{lemma}{lemmapatternsubpattern}
\label{lemma:patternsubpattern}
String $Q$ has a match in $G$ if and only if a sub-string $Q_i$ of $Q$ has a match in the underlying sub-graph $G_{M2}$ of $G_M$.
\end{restatable}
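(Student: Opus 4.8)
The plan is to prove the two directions separately, treating the statement as a purely structural claim that is then combined with \Cref{lemma:samej} to localize matches inside a single middle-row gadget. Throughout I will use two facts guaranteed by the construction: first, that a path through $G_M$ enters (from $G_L$) only in rows $1$ or $2$ and leaves (into $G_R$) only in rows $2$ or $3$, and can move between consecutive gadgets $G_M^j, G_M^{j+1}$ only by staying in the same row or descending by one (never ascending); second, that the rows $G_{M1}$ and $G_{M3}$, together with $G_L$ and $G_R$, can spell \emph{any} vector block, whereas a block is spelled in $G_{M2}$ only where it fits. Since all node labels have length $4$ and the symbols \BB and \E occur in the graph only inside $G_{\BB\E}$ gadgets (at gadget boundaries and the slack parts), any occurrence of $Q$ must align the block-boundary markers $\BB^4$ and $\E^4$ of $Q$ with gadget boundaries; hence each query block $Q_i$ is forced to align with exactly one full graph gadget of $G_L$, $G_M$, or $G_R$.

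For the direction ($\Leftarrow$), suppose some $Q_i$ matches in $G_{M2}$; by \Cref{lemma:samej} this match lies inside a single $G_{M2}^j$. I would exhibit a path whose label contains all of $Q$ as a substring as follows: match the leading $\BB^4$ and $Q_1,\ldots,Q_{i-1}$ along $G_L$ followed by $G_{M1}^1,\ldots,G_{M1}^{j-1}$ (all in row $1$), descend to row $2$ at the boundary $G_M^{j-1}\to G_M^{j}$ to match $Q_i$ inside $G_{M2}^j$, then descend to row $3$ at $G_M^{j}\to G_M^{j+1}$ to match $Q_{i+1},\ldots,Q_n$ and the trailing $\E^4$ along $G_{M3}^{j+1},\ldots,G_{M3}^n$ followed by $G_R$. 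The two descents occur at gadget boundaries and are therefore legal, the entry into $G_M^1$ in row $1$ and the exit from $G_M^n$ in row $3$ are both permitted, and a short counting check shows the slack suffices: the $i-1$ prefix blocks are absorbed by the $(j-1)$ row-$1$ gadgets of $G_M$ plus the $n-1$ parts of $G_L$, and symmetrically for the $n-i$ suffix blocks using $G_{M3}$ and $G_R$, so the required alignment always exists.

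For the direction ($\Rightarrow$), suppose $Q$ occurs along a path $P$. Its occurrence spans $n$ consecutive vector-gadgets, and since $G_L$ and $G_R$ each contain only $n-1$ gadgets, the occurrence cannot be contained in either slack region and must overlap $G_M$. I would then track the row index of $P$ across the gadgets of $G_M$ that the occurrence visits: by the no-ascend rule this sequence is non-decreasing. Using the restrictions that $\BB^8$ (the start of $Q$) is spellable in $G_M$ only in rows $1$ or $2$ and $\E^8$ (the end of $Q$) only in rows $2$ or $3$, together with the entry/exit constraints, I would argue that the row sequence begins in $\{1,2\}$ and ends in $\{2,3\}$; being non-decreasing with unit steps, it must take the value $2$ at some gadget $G_M^{j^\star}$. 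Because the row never changes within a gadget and block boundaries align with gadget boundaries, $P$ stays in row $2$ throughout $G_M^{j^\star}$, so the full query block aligned there, say $Q_i$, is matched entirely inside $G_{M2}^{j^\star}\subseteq G_{M2}$.

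The main obstacle is the ($\Rightarrow$) direction, and within it two points require care: rigorously justifying the forced alignment of the $\BB^4/\E^4$ markers with gadget boundaries (so that a complete $Q_i$, rather than a straddling fragment, lands in row $2$), and verifying in each window placement — occurrence starting in $G_L$, ending in $G_R$, or exactly coinciding with $G_M$ — that the $\BB^8$ and $\E^8$ endpoints indeed impose the claimed row memberships, so that the monotone row sequence is genuinely pinned to pass through row $2$. Once these are in place, the conclusion follows directly from \Cref{lemma:samej}, and chaining with \Cref{lemma:matchingGW} would then yield the reduction's correctness.
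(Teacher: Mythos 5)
Your proposal is correct and takes essentially the same route as the paper's proof: the forward direction pins the start of any occurrence (via $\BB^8$) to $G_L$/$G_{M1}$/$G_{M2}$ and its end (via $\E^8$) to $G_{M2}$/$G_{M3}$/$G_R$, then uses the descend-only structure to force a complete block $Q_i$ through $G_{M2}$, while the backward direction builds the explicit row-1/row-2/row-3 path using the slack in $G_L$ and $G_R$. Your monotone row-index ("intermediate value") formalization, and the two points you flag as needing care, are simply a more explicit rendering of what the paper asserts with "because of the structure of the graph."
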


\begin{proof}
For the $(\Rightarrow)$ implication, because of the directed $\E^4\BB^4$-edges, each distinct sub-string $Q_i$ matches a path from a distinct portion of either $G_L$, $G_M$ and $G_R$. Moreover, each occurrence of $P$ must begin with $\BB^8$ and end with $\E^8$. String $\BB^8$ can be matched only in $G_L$, in $G_{M1}$ or in $G_{M2}$, hence the match must start here. On the other hand, string $\E^8$ is found either in $G_{M2}$, $G_{M3}$ or in $G_R$. Observe that, by construction, once a match for pattern $Q$ is started in $G_L$, in $G_{M1}$ or in $G_{M2}$, the only way to successfully conclude it is either by matching $\E^8$ within $G_{M2}$, or by matching also a portion of $G_{M3}$ and/or $G_R$ and then $\E^8$.
Because of the structure of the graph, in both cases a sub-string $Q_i$ of $Q$ must match one of the gadget $G_{M2}^j$ that are present in $G_{M2}$.

The $(\Leftarrow)$ implication is trivial. In fact, if $Q_i$ has a match in one gadget $G_{M2}^j$, then by construction we can match $\BB^4 Q_1 \ldots Q_{i-1}$ possibly in $G_L$, then possibly in $G_{M1}$. We can then match $Q_{i+1} \ldots Q_n\E^4$ possibly in $G_{M3}$, then possibly in $G_R$, and thus have a full match for $Q$ in $G$.
\end{proof}

Our first lower bound is on matching a query string in an \efg without indexing.
\begin{restatable}{theorem}{thmefgraphsonlinelb}
\label{thm:efgraphs_onlinelb}
For any constant $\epsilon > 0$, it is not possible to find a match for a query string $Q$ into an \efg $G = (V,E,\ell)$ in either $O(|E|^{1-\epsilon} \, |Q|)$ or $O(|E| \, |Q|^{1-\epsilon})$ time, unless \ovh fails. This holds even if restricted to an alphabet of size $4$.
\end{restatable}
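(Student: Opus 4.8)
The plan is to compose the three structural lemmas into one clean equivalence between string matching in $G$ and \ov, and then to read off the lower bound by a size calculation. First I would chain \Cref{lemma:patternsubpattern} and \Cref{lemma:matchingGW}: the former states that $Q$ occurs in $G$ if and only if some substring $Q_i$ occurs in $G_{M2}$, and the latter that $Q_i$ occurs in $G_{M2}$ if and only if there is a $y_j \in Y$ with $x_i \cdot y_j = 0$ (with \Cref{lemma:samej} already absorbed into the proof of the latter). Composing them yields
\[
Q \text{ occurs in } G \quad\Longleftrightarrow\quad \exists\, x_i \in X,\, y_j \in Y \text{ with } x_i \cdot y_j = 0,
\]
so that deciding whether $Q$ matches $G$ is exactly deciding the \ov instance $(X,Y)$.

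Next I would do the size and construction-time bookkeeping, which is the quantitative heart of the argument. Each $Q_i$ has length $4d+8$, hence $|Q| = 8 + n(4d+8) = O(nd)$. For the graph, $G_L$ and $G_R$ each comprise $n-1$ parts of $d+2$ segments, where every segment holds at most two nodes and contributes $O(1)$ edges to the following segment, for $O(nd)$ edges apiece; $G_M$ consists of $n$ parts $G_M^j$, each assembled from $d+2$ constant-size gadgets (\Cref{fig:G_be01}) joined row-by-row by $O(1)$ edges, again $O(nd)$ edges overall. Thus $|E| = O(nd)$, the reduction runs in $O(nd)$ time, and everything is built over the four-letter alphabet $\{\BB,\E,\zero,\one\}$, which accounts for the alphabet-size-$4$ restriction.

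To finish, I would argue by contradiction. Fix $\epsilon>0$; we may assume $\epsilon<1$, since an algorithm meeting the bound for some $\epsilon\geq 1$ also meets it for every smaller exponent. Suppose a procedure finds a match in $O(|E|^{1-\epsilon}\,|Q|)$ time. Running it on $G$ and $Q$ decides \ov, and substituting $|E|,|Q|=O(nd)$ gives total time
\[
O\!\big((nd)^{1-\epsilon}\cdot nd\big) + O(nd) = O\!\big((nd)^{2-\epsilon}\big) = O\!\big(\mathrm{poly}(d)\,n^{2-\epsilon}\big),
\]
contradicting \ovh; the symmetric hypothesis $O(|E|\,|Q|^{1-\epsilon})$ yields the identical estimate and the same contradiction.

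I expect the main obstacle to be the edge count rather than the logical equivalence, which the lemmas already supply: one must confirm that each inter-segment connection in $G_L,G_M,G_R$, together with the gadget-internal and inter-gadget edges of \Cref{fig:G_be01} and \Cref{fig:G}, contributes only $O(1)$ edges, so that both $|E|$ and $|Q|$ stay $\Theta(nd)$ and their product collapses precisely onto the quadratic \ov barrier.
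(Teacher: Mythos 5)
Your proposal is correct and follows essentially the same route as the paper: correctness of the reduction via chaining Lemma~\ref{lemma:patternsubpattern} and Lemma~\ref{lemma:matchingGW}, the $O(nd)$ bound on $|Q|$, $|E|$, and construction time, and then the substitution into the hypothetical running time to contradict \ovh. Your explicit length count $|Q| = 8 + n(4d+8)$ and the remark that one may assume $\epsilon < 1$ are slightly more detailed than the paper's bookkeeping, but they are refinements of the identical argument rather than a different approach.
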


\begin{proof}
First, notice that the reduction that we presented for query string $Q$ and \efg $G$ is correct. Indeed, Lemma~\ref{lemma:patternsubpattern} guarantees that $Q$ has a match in $G$ if and only if a sub-string $Q_i$ has a match in $G_M$, and this holds, by Lemma~\ref{lemma:matchingGW}, if and only if $x_i \cdot y_j = 0$. Thus, string $Q$ has a match in $G$ if and only if there exist vectors $x_i \in X$ and $y_j \in Y$ which are orthogonal. 

The reduction requires linear time and space in the size $O(nd)$ of the \ov problem, and this is because of the construction of string $Q$ and graph $G$. On one hand, when we define string $Q$, we place a constant number of characters for each entry of each vector, thus $|Q| = O(nd)$. On the other hand, sub-graphs $G_L$, $G_{M1}$, $G_{M2}$, $G_{M3}$ and $G_R$ all consist of $O(n)$ structures, each one containing $O(d)$ nodes, and a constant number of edges for each node, for an overall size of 
$O(nd)$.

Hence, given two sets of vectors $X$ and $Y$, we can perform our reduction obtaining string $Q$ and \efg $G = (V,E,\ell)$ in $O(nd)$ time, while observing that $|E| = O(nd)$ and $|Q| = O(nd)$. If we can find a match for $Q$ in $G$ in $O(|E|^{1-\epsilon} |Q|)$ or $O(|E| \, |Q|^{1-\epsilon})$ time, then we can decide if there exists a pair of orthogonal vectors between $X$ and $Y$ in $O(nd \cdot (nd)^{1-\epsilon}) = O(n^{2-\epsilon}\text{poly}(d))$ time, which contradicts \ovh.
\end{proof}

We obtain the indexing lower bound by proving that the above reduction is a \emph{linear independent-components} (\emph{lic}) reduction, as defined by \cite[Definition~$3$]{EMT21}. 

\begin{restatable}{theorem}{thmefgraphsindexinglb}
\label{thm:efgraphs_indexinglb}
For any $\alpha,\beta,\delta>0$ such that $\beta+\delta<2$, there is no algorithm preprocessing an \efg $G = (V,E,\ell)$ in time $O(|E|^\alpha)$ such that for any query string $Q$ we can find a match for $Q$ in $G$ in time $O(|Q|+|E|^\delta|Q|^\beta)$, unless \ovh is false. This holds even if restricted to an alphabet of size $4$.
\end{restatable}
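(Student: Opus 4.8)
The plan is to invoke the general framework of Equi, Mäkinen, and Tomescu \cite{EMT21}, which shows that any \emph{linear independent-components} (\emph{lic}) reduction from \ov to an online matching problem automatically yields an indexing lower bound of the claimed shape. The work of the previous theorem, \Cref{thm:efgraphs_onlinelb}, has already established the core reduction and its correctness; what remains is to verify that this reduction satisfies the structural requirements of a \emph{lic} reduction as stated in \cite[Definition~3]{EMT21}, and then simply cite their meta-theorem.

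First I would recall the defining properties of a \emph{lic} reduction. Informally, the reduction must construct the graph $G$ \emph{independently} from the set $X$ (using only $Y$) and the query string $Q$ \emph{independently} from $Y$ (using only $X$), with both constructions taking linear time in the size $O(nd)$ of the \ov instance and producing outputs of linear size. Crucially, the two ``components'' — the graph side and the query side — must not share information beyond the fixed alphabet and the reduction template. Inspecting our construction, this is exactly what happens: the graph $G = G_L \cup G_M \cup G_R$ is built purely from the vectors $y_1,\dots,y_n \in Y$ (each $G_M^j$ encodes one $y_j$, and $G_L,G_R$ depend only on the parameters $n$ and $d$), while the query $Q = \BB^4 Q_1 \cdots Q_n \E^4$ is built purely from the vectors $x_1,\dots,x_n \in X$. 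Neither construction consults the other set. I would state these independence observations explicitly, pointing to the relevant paragraphs of the construction.

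Next I would confirm the quantitative bounds required by the framework. As argued in the proof of \Cref{thm:efgraphs_onlinelb}, we have $|E| = O(nd)$ and $|Q| = O(nd)$, and both $G$ and $Q$ are constructed in $O(nd)$ time. These linear-size and linear-time guarantees are precisely the numerical conditions a \emph{lic} reduction demands, so together with the independence of the two components the reduction qualifies as \emph{lic}. I would then apply \cite[Theorem~(the indexing meta-theorem)]{EMT21}: given a \emph{lic} reduction from \ov to online string matching in \efgs with the above parameters, the existence of an algorithm that preprocesses $G$ in time $O(|E|^\alpha)$ and answers any query $Q$ in time $O(|Q| + |E|^\delta |Q|^\beta)$ with $\beta + \delta < 2$ would yield an \ov algorithm running in strongly subquadratic time $O(\mathrm{poly}(d)\, n^{2-\epsilon})$, contradicting \ovh. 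Since the alphabet of the construction has size $4$, the lower bound inherits this restriction.

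The main obstacle I anticipate is not any hard calculation but rather the bookkeeping of matching our construction against the precise formal definition of a \emph{lic} reduction in \cite{EMT21}: one must check that the notion of ``components'' in their definition aligns with our split into the $X$-side (query) and the $Y$-side (graph), that the query string and graph play the roles their definition assigns to the two independent inputs, and that the size and time parameters feed correctly into their theorem's conclusion to produce exactly the exponents $\alpha$, $\beta$, $\delta$ with the constraint $\beta+\delta<2$. Once this correspondence is pinned down, the theorem follows immediately from their result, so the proof is essentially a verification-and-invocation argument rather than a fresh combinatorial construction.
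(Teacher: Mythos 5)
Your proposal is correct and follows essentially the same route as the paper: verify that the reduction underlying \Cref{thm:efgraphs_onlinelb} is a \emph{lic} reduction in the sense of \cite[Definition~3]{EMT21} — linear time/space construction, query $Q$ depending only on $X$, graph $G$ depending only on $Y$ — and then invoke the meta-result of \cite{EMT21} (Corollary~1 there) to convert the online lower bound into the stated indexing lower bound. The only difference is cosmetic: the paper cites Corollary~1 explicitly, while you refer to it as the indexing meta-theorem.
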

\begin{proof}
It is enough to notice that the reduction from \ov that we presented is a \emph{lic} reduction. Namely, (1) the reduction is correct and can be performed in linear time and space $O(nd)$ (recall the proof of Theorem~\ref{thm:efgraphs_onlinelb}), and (2) query string $|Q|$ is defined using only vector set $X$ and it is independent from vector set $Y$, while elastic founder graph $G$ is built using only vector set $Y$ and it is independent from vector set $X$. Hence, Corollary~$1$ in~\cite{EMT21} can be applied, proving our thesis.
\end{proof}

\section{Indexing \efgs\label{sect:gaps}}

Let us now consider how to extend the indexing results to the general case of \msas with gaps.
The idea is that gaps are only used in the segmentation algorithm to define the valid ranges, and that is the only place where special attention needs to be taken; elsewhere, whenever a substring from \msa rows is read, gaps are treated as empty strings. That is, A-GC-TA- becomes AGCTA.  

As we later see, segmentation becomes more difficult with gaps, and we need to consider a relaxed variant of prefix-free property for obtaining efficient algorithms. Recall that in a semi-repeat-free \efg a node label can appear as a prefix of another node label inside the same block.

\subsection{Repeat-free case}

As the reader can check, the indexing solutions in Sections~\ref{sect:repeat-freeness}~and~\ref{sect:compressedindexing} work verbatim with the repeat-free elastic founder graphs; the property of having equal-length of strings inside the blocks is not exploited in the algorithms.

\subsection{Semi-repeat-free case}

The case of semi-repeat-free elastic founder graphs is slightly more complex, and we need to combine and extend the previous solutions. The following lemma is the key property needed for the solution.

\begin{lemma}\label{lemma:keyproperty}
Consider a semi-repeat free \efg $G=(V,E)$. 
String $\ell(v)\ell(w)$, where $(v,e)\in E$, can only appear in $G$ as a prefix of paths starting with $v$.
\end{lemma}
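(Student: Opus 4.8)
The plan is to prove the contrapositive by leveraging the semi-repeat-free property already established for individual node labels. Suppose the string $\ell(v)\ell(w)$, where $(v,w) \in E$, occurs somewhere in $G$ as a substring of a path label; I want to show that this occurrence must be a prefix of a path that starts at $v$. The starting observation is that $\ell(v)\ell(w)$ begins with $\ell(v)$, so the semi-repeat-free property (Definition of semi-repeat-free) immediately tells me that any occurrence of $\ell(v)$ in $G$ appears only as a prefix of a path starting at some node $u$ in the same block as $v$. Hence any occurrence of the longer string $\ell(v)\ell(w)$ must begin at such a node $u$, with $\ell(u)$ being a prefix of the portion of the occurrence.

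First I would fix an arbitrary occurrence of $\ell(v)\ell(w)$ on some path $P$ and locate where $\ell(v)$ sits inside it. By the semi-repeat-free property applied to $\ell(v)$, this occurrence of $\ell(v)$ starts at the first node $u$ of a suffix of $P$, where $u$ lies in the same block as $v$. The key case analysis then compares $|\ell(u)|$ with $|\ell(v)|$. Since $u$ and $v$ are in the same block, and since in an elastic block graph distinct nodes in the same block have distinct labels (third block-graph condition, as relaxed for elastic graphs still forbids $\ell(u)=\ell(v)$ when $u \neq v$), I expect that $\ell(u)$ being a prefix of $\ell(v)$ together with $\ell(v)$ being a prefix of $\ell(u)\ell(\cdot)$ forces $u = v$. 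Concretely, $\ell(u)$ is a prefix of the occurrence and $\ell(v)$ is also a prefix of the occurrence, so one of $\ell(u), \ell(v)$ is a prefix of the other; I would argue that the case $|\ell(u)| < |\ell(v)|$ (strict prefix) cannot arise, and $|\ell(u)| = |\ell(v)|$ forces $\ell(u) = \ell(v)$ and hence $u = v$ by distinctness of labels within a block.

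The main obstacle I anticipate is the case where $\ell(u)$ is a \emph{strict} prefix of $\ell(v)$, i.e.~$|\ell(u)| < |\ell(v)|$: here $u \neq v$, and I must rule this out. The worry is that the occurrence of $\ell(v)\ell(w)$ could start at a shorter-labeled node $u$ in $v$'s block, then continue across block boundaries in a way that spells $\ell(v)$ starting inside $u$ and spilling into $u$'s successors. To handle this, I would use the semi-repeat-free property a second time, now applied to $\ell(v)$ as it is read starting at $u$: the string $\ell(v)$ occurs on the path beginning at $u$, but this is a path occurrence that begins at the \emph{first symbol} of $u$'s label, i.e.~$\ell(v)$ occurs in $G$ as a prefix of a path starting at $u \neq v$ where $u$ is in $v$'s own block. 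Since $\ell(v)$ may appear as a prefix of another node's label only within the same block, this is consistent only if $\ell(v)$ coincides with $\ell(u)$ extended exactly through the rest of $u$ — but a careful length comparison shows the portion after $\ell(u)$ within the first block would have to be empty because all continuations leave the block, contradicting $|\ell(u)| < |\ell(v)|$ unless the labels coincide. Thus the strict-prefix case collapses, leaving $u = v$.

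Having established $u = v$, the occurrence of $\ell(v)\ell(w)$ starts exactly at node $v$ and its first $|\ell(v)|$ symbols are spelled by $v$ itself; the remaining $|\ell(w)|$ symbols are spelled along the path continuing from $v$. Therefore the occurrence is a prefix of a path starting at $v$, which is precisely the claim. I would close by remarking that this argument is essentially the semi-repeat-free property "lifted" by one node: the hypothesis controls single node labels, and the edge $(v,w) \in E$ guarantees $v$ has a successor spelling $\ell(w)$, so no genuinely new occurrence of $\ell(v)\ell(w)$ can be created beyond those forced to start at $v$.
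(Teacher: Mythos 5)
Your proof has a genuine gap: your case analysis on the same-block node $u$ at which the occurrence starts covers $|\ell(u)|<|\ell(v)|$ and $|\ell(u)|=|\ell(v)|$ but omits the case $|\ell(u)|>|\ell(v)|$, i.e.\ $\ell(v)$ a \emph{proper prefix} of $\ell(u)$. This is not a corner case but the heart of the lemma: the semi-repeat-free definition explicitly allows the bare label $\ell(v)$ to occur as a prefix of a path starting at such a $u\neq v$, so nothing about $\ell(v)$ alone can force $u=v$. The only way to kill this case is to look at what comes next: since the occurrence is of $\ell(v)\ell(w)$ and $|\ell(v)|<|\ell(u)|$, the copy of $\ell(w)$ starts at position $|\ell(v)|+1\leq|\ell(u)|$, i.e.\ \emph{strictly inside} the label of $u$, contradicting the semi-repeat-free property applied to $\ell(w)$ (occurrences of $\ell(w)$ must start at the first symbol of some node). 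This is exactly how the paper argues, and it is the one step your proposal never performs: nowhere do you apply the semi-repeat-free property to $\ell(w)$; your closing remark that the edge $(v,w)$ prevents ``genuinely new'' occurrences is an assertion, not an argument.

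Your treatment of the spill-over case $|\ell(u)|<|\ell(v)|$ is also not sound as written. The configuration you describe --- $\ell(v)$ occurring as a prefix of a path that starts with $u$, where $u$ is in $v$'s block --- is precisely what the definition of semi-repeat-freeness \emph{permits}, so no contradiction follows from restating it; and the claim that ``the portion after $\ell(u)$ within the first block would have to be empty because all continuations leave the block'' has no basis, since an occurrence may legitimately span several blocks. To rule this case out you must again go one node further: write $\ell(v)=\ell(u)\beta$ with $\beta$ non-empty and a prefix of the path from the successor $u'$ of $u$ on the occurrence path; then either $|\beta|\geq|\ell(u')|$, in which case $\ell(u')$ occurs starting strictly inside the label $\ell(v)$ of node $v$, or $|\beta|<|\ell(u')|$, in which case $\ell(w)$ occurs starting strictly inside $\ell(u')$; both contradict semi-repeat-freeness. (The paper compresses this case into ``otherwise $G$ would not be semi-repeat-free,'' but some such argument is what backs that phrase.) So your overall skeleton --- apply the property to $\ell(v)$ to pin the start of the occurrence to a same-block node, then force $u=v$ --- matches the paper's, but both nontrivial cases are resolved only by applying the property to labels \emph{other than} $\ell(v)$, and that idea is missing from your write-up.
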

\begin{proof}
Assume for contradiction that $\ell(v)\ell(w)$ is a prefix of a path starting inside the label of some $v'\in V$, $v'\neq v$. Then $\ell(v)$ is a prefix of such path, and this is only possible if $v'$ is in the same block as $v$ and $\ell(v)$ is a proper prefix of $\ell(v')$: otherwise $G$ would not be semi-repeat free. Then $|\ell(v)|<|\ell(v')|$ and $\ell(w)$ has an occurrence in a path starting inside the label $\ell(v')$. This is a contradiction on the fact that $G$ is semi-repeat free. 
\end{proof}

Consider the suffix tree of the concatenation $D=\prod_{v,w,u: (v,w)\in E, (w,u)\in E} (\ell(v)\ell(w)\ell(u))^{-1}\mathbf{0}$.
For suffixes of type $\alpha \ell(w)^{-1} \ell(v)^{-1}\mathbf{0}$, where $\alpha$ is a (possibly empty) string, we store node identifier $v$ in the corresponding leaf of the tree. Clearly, queries spanning less than three nodes can be located from this suffix tree. Consider a longer query $Q[1..q]$ whose suffix spans at least three nodes in the graph. We search it backwards in the suffix tree until reaching a locus after which we cannot proceed with $Q[i]$, but could continue with $\mathbf{0}$. Then we know that $Q[i+1..q]$ matches a path starting with $\ell(v)\ell(w)$ in $G$. Due to Lemma~\ref{lemma:keyproperty}, any leaf in the subtree rooted at the current locus in the suffix tree (which is spelling $Q[i+1..q]^{-1}$) stores $v$. Since we cannot know in advance if $Q$ is a longer query, we have stored identifiers $v$ only when this case applies. Once we have identified $v$, we only need to check if we can read $Q[1..i]$ following a path to the left from $v$, which is exactly what we did in Sect.~\ref{sect:repeat-freeness} using tries $\mathcal{R}(v)$ storing $\{\ell(u)^{-1} \mid (u,v)\in E\}$: The semi-repeat-free property guarantees that no node label can be a suffix of another node label (even inside the same block), and hence the leaves of the tries $\mathcal{R}(v)$ correspond to exactly one row each. The left-extensions are hence not branching, as the search always narrows down to one row (leaf), before continuing on the next trie (see Sect.~\ref{sect:repeat-freeness}). 

\begin{theorem}
A (semi-)repeat-free founder/block graph $G=(V,E)$ or a (semi-)repeat-free elastic degenerate string can be indexed in polynomial time into a data structure occupying $O(|D| \log |D|)$ bits of space, where $|D|=O(NH^2)$, $N$ is the total length of the node labels, and $H$ is the height of $G$. Later, one can find out in $O(|Q|)$ time if a given query string $Q$ occurs in $G$. 
\label{thm:indexing}
\end{theorem}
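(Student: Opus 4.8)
The plan is to establish the two claims of the theorem separately: the space bound $O(|D|\log|D|)$ bits and the query time $O(|Q|)$, relying on the construction and the combinatorial property already developed in the discussion preceding the statement. First I would verify the bound $|D|=O(NH^2)$ by counting: the concatenation $D$ ranges over all pairs of consecutive edges $(v,w),(w,u)\in E$, and for each such pair it stores the reversed string $(\ell(v)\ell(w)\ell(u))^{-1}$ together with a separator. The total contribution is bounded by summing $|\ell(v)|+|\ell(w)|+|\ell(u)|$ over all such triples. Since each node $w$ has at most $H$ incoming and $H$ outgoing edges (heights of adjacent blocks are bounded by $H$), the number of length-two paths through $w$ is $O(H^2)$, and each contributes $O(|\ell(w)|+\cdots)$; amortizing over all nodes this gives $O(NH^2)$, so the suffix tree on $D$ takes $O(|D|\log|D|)$ bits. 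This part should be routine once the charging argument is set up carefully, charging each triple to its middle node.

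For the query algorithm I would assemble three components, all already sketched in the text. First, short queries spanning fewer than three nodes are located directly in the suffix tree of $D$, since every such substring appears (reversed) as a suffix inside some stored triple $(\ell(v)\ell(w)\ell(u))^{-1}$. Second, for a long query $Q[1\ltdots q]$ I would run the backward search (i.e.\ searching $Q$ reversed in the suffix tree of $D$) until reaching the locus after which the next character $Q[i]$ cannot be matched but the separator $\mathbf{0}$ can. By Lemma~\ref{lemma:keyproperty}, the matched suffix $Q[i+1\ltdots q]$ begins with $\ell(v)\ell(w)$ for a uniquely determined node $v$, which we recover from the identifier stored at any leaf in the current subtree. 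Third, I would complete the match to the left using the tries $\mathcal{R}(v)$ from Section~\ref{sect:repeat-freeness}, exploiting the fact that under the semi-repeat-free property no node label is a suffix of another, so the backward trie traversal is non-branching and advances one row (one leaf) at a time before moving to the next trie.

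The correctness argument reduces to two points: that the stopping condition of the backward search correctly detects a node boundary, and that the recovered $v$ is the unique node whose path continuation can realize the occurrence. The first follows because separators $\mathbf{0}$ occur in $D$ exactly at the reversed boundaries of the outermost node $\ell(v)$ of each triple, so inability to extend past a $\mathbf{0}$ signals that we have consumed a full $\ell(v)\ell(w)$ suffix. The second is exactly the content of Lemma~\ref{lemma:keyproperty}: the string $\ell(v)\ell(w)$ can only occur as a prefix of paths starting at $v$, so all leaves in the relevant subtree store the same $v$, and the subsequent left-extension through $\mathcal{R}(v)$ verifies the rest. I expect the main obstacle to be the bookkeeping for the time bound $O(|Q|)$: the backward search and each trie step must be shown to cost amortized constant time, which requires the succinct suffix-tree navigation and constant-time leaf-to-identifier lookups (via the suffix-array/suffix-tree correspondence noted in Section~\ref{sect:definitions}), together with the observation that the non-branching left-extensions traverse a total of $O(|Q|)$ trie edges across all blocks. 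Handling the alphabet factor $\log\sigma$ versus the claimed $O(|Q|)$ would also need care, presumably by using the succinct BWT-based machinery rather than explicit tries.
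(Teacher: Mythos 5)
Your proposal is correct and follows essentially the same route as the paper: the same counting argument (each node label participates in $O(H^2)$ triples, giving $|D|=O(NH^2)$), the same query scheme built from the pre-theorem discussion (short queries answered directly in the suffix tree of $D$, long queries via backward search stopped at the $\mathbf{0}$-boundary, node recovery through Lemma~\ref{lemma:keyproperty}, and non-branching left-extension through the tries $\mathcal{R}(v)$). The one point you leave open---removing the $\log\sigma$ factor to get $O(|Q|)$ query time---is resolved in the paper more simply than your BWT suggestion: since preprocessing may take polynomial time, the suffix tree (and trie) nodes are equipped with perfect hash functions so that every downward step costs constant time.
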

\begin{proof}
Each node label $\ell(v)$ is added to $D$ at most $3 H^2$ times, as $H$ is an upper bound for the number of edges from and to $v$.
The length of $D$ is then bounded by $O(NH^2)$.
Construction of suffix tree on $D$ can be done in linear time~\cite{Farach97}. In polynomial time, the nodes of the suffix tree can be preprocessed with perfect hash functions, such that following a downward path takes constant time per step. 
\end{proof}

We note that the index can be modified to report only matches that are (gap-oblivious) substrings of the \msa rows: Short patterns spanning only one edge are already such. Longer patterns can have only one occurrence in $G$, and it suffices to verify them with a regular string index on the \msa. Such modified scheme makes the approach functionality equivalent with wide range of indexes designed for repetitive collections \cite{MNSV09jcb,Naetal13a,Naetal13b,Naetal16a,Naetal16b,GN19,GNP20} and shares the benefit of alignment-based indexes of Na et al. \cite{Naetal13a,Naetal13b,Naetal16a,Naetal16b} in reporting the aligned matches only once, where e.g. r-index \cite{GNP20} needs to report all occurrences.     

Using compressed suffix trees, different space-time tradeoffs can be achieved. In Section~\ref{sect:wheeler}, we develop an alternative compressed indexing scheme for the repeat-free case using Wheeler graphs. 

\section{Construction of (semi-)repeat-free \efgs\label{sect:efgconstruction}}

Now that we have seen that (semi-)repeat-free \efgs are easy to index, it remains to consider their construction. First, we observe that the algorithms in Sect.~\ref{sect:construction} do not work verbatim: Theorem~\ref{thm:linearsegmentation} is based on Eq.~(\ref{eq:valid-segmentation-vj}), but now this recurrence is no longer valid, as left-extension of a valid block may not be a valid block. A counterexample is shown in Table~\ref{table:elasticvalid}.
On the other hand, Algorithms~\ref{algo:maxblocks}~and~\ref{algo:minmaxlength} use the right-extension property of a valid block, and this holds even with general \msas.

\begin{observation}
If segment $\msaij{1..m}{j+1..f(j)}$ is (semi-)repeat-free, then segment $\msaij{1..m}{j+1..j'}$ is (semi-)repeat-free for all $j'$ such that $f(j)<j'\leq n$.
\label{obs:right-extension}
\end{observation}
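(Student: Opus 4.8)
The plan is to reduce (semi-)repeat-freeness of a single segment to a condition on occurrences in the spelled rows of the \msa, and then observe that this condition can only become \emph{easier} to satisfy as the segment grows to the right. The asymmetry with the left-extension recurrence of Eq.~(\ref{eq:valid-segmentation-vj}), which does fail under gaps, comes precisely from the fact that right-extension keeps the block's starting column fixed.

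First I would recall the characterization, in the spirit of Lemma~\ref{lemma:charaterization} and its semi-repeat-free analog: a segment $[a \ltdots b]$ is (semi-)repeat-free exactly when no node label $\ell = \spell(\msaij{i}{a \ltdots b})$ has an occurrence in some spelled row $\spell(\msaij{t}{1 \ltdots n})$ starting at a \emph{non-home} position, i.e.\ at a column $c \neq a$. Occurrences starting at the home column $a$ are always permissible: in the semi-repeat-free case they are, by definition, prefixes of paths leaving a node of the same block, and in the gapless repeat-free case all block labels have equal length, so no such occurrence can be a proper prefix of another. The reduction from arbitrary occurrences in $G(S)$ to occurrences in spelled rows is exactly the argument of Lemma~\ref{lemma:charaterization}: every substring of a path label either occurs verbatim in some row or strictly contains a full node label.

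The key step is then a monotonicity observation. Fix $j$ with $[j+1 \ltdots f(j)]$ (semi-)repeat-free and take $j' > f(j)$. Since $\spell$ distributes over a split of adjacent column ranges, for every row $i$ we have
\[
\spell(\msaij{i}{j+1 \ltdots j'}) = \spell(\msaij{i}{j+1 \ltdots f(j)})\, \spell(\msaij{i}{f(j)+1 \ltdots j'}),
\]
so the label of row $i$ in the short segment is a \emph{prefix} of its label in the long segment. Consequently, every occurrence of a long label starting at some column $c$ contains, at the same starting column $c$, an occurrence of the corresponding short label. Both segments share the home column $j+1$, hence a non-home occurrence ($c \neq j+1$) of a long label would force a non-home occurrence of the short label, contradicting the assumed (semi-)repeat-freeness of $[j+1 \ltdots f(j)]$. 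Therefore no long label has a non-home occurrence, and by the characterization $[j+1 \ltdots j']$ is (semi-)repeat-free.

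The main obstacle I expect lies entirely in the bookkeeping with gaps: stating the ``home column'' argument precisely when the first matched character of an occurrence need not align to a column boundary, and keeping the repeat-free and semi-repeat-free cases cleanly separated, since it is exactly the permissive treatment of home-column occurrences (prefixes within the same block) that distinguishes them and that survives right-extension. The monotonicity itself is immediate once the characterization is phrased so that the prefix-at-same-start argument transfers from paths in $G(S)$ to the spelled rows of the \msa.
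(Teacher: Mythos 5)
Your proof is correct, and it supplies precisely the argument the paper leaves implicit --- Observation~\ref{obs:right-extension} is stated without proof, and the intended justification is exactly your prefix-monotonicity step: right-extension keeps the home column $j+1$ fixed, each short label $\spell(\msaij{i}{j+1\ltdots f(j)})$ is a prefix of the corresponding long label $\spell(\msaij{i}{j+1\ltdots j'})$, so a non-home occurrence of the latter would contain a non-home occurrence of the former.

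One remark, because the hedge in your characterization about which home occurrences are permissible is doing more work than your closing paragraph (``the main obstacle \ldots lies entirely in the bookkeeping'') suggests: it delimits exactly where the statement is true. You justify the permissibility of home occurrences only (i) for semi-repeat-free segments and (ii) for repeat-free segments of gapless \msas, and these are precisely the instantiations your argument establishes. For \emph{elastic} repeat-free segments the observation, read literally with its ``(semi-)'' parenthetical, is false: take the two rows \texttt{AB--} and \texttt{ABCD}. Segments $[1\ltdots 1]$ and $[1\ltdots 2]$ are repeat-free (each carries a single label), yet every segment $[1\ltdots k]$ with $k\geq 3$ is only semi-repeat-free, because the first row's label \texttt{AB} becomes a proper prefix of the second row's label (\texttt{ABC} or \texttt{ABCD}), i.e., it occurs as a prefix of a path starting at a \emph{different} node of the same block. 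This is exactly the failure mode excluded in the gapless case by the equal-length condition you invoke, and it cannot be excluded once gaps make the labels of a block diverge in length. So no proof can cover the full literal statement, and your implicit restriction is the right one; it also matches how the paper actually uses the observation: Lemma~\ref{lemma:non-monotonic-preprocessing} and the corollary following it feed Algorithms~\ref{algo:maxblocks} and~\ref{algo:minmaxlength} with values $f(j)$ defined for semi-repeat-free segments only, while the elastic repeat-free case is handled separately through Eq.~(\ref{eq:repeatfreeelasticscore}).
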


However, these algorithms assume we have precomputed for each $j$ the smallest integer $f(j)$ such that $\msaij{1..m}{j+1..f(j)}$ is a (semi-)repeat-free segment. The earlier sliding windows preprocessing algorithm for values $f(j)$ inherently assumes the values are monotonic (as is the case with gapless \msas), but this does not hold in the general case: Consider again Table~\ref{table:elasticvalid}. Let $j$ be the column just before the longer segment of \msa. Then $f(j)>|X|+1$ and $f(j+1)=|X|$.

\begin{table}
\centering
\caption{Semi-repeat-free segment and its extension (to the left) into a non-valid segment. Here the distinct strings $X$, $Y$, and $Z$, $|X|=|Y|$ do not appear elsewhere in \msa, except $Z$ is a prefix of $Y$. The longer segment is non-valid, because $X$ is not a prefix but a suffix of \texttt{A}$X$. Reversing the definition does not help, as the same segment contains \texttt{A}$Z$ as prefix of \texttt{A}$Y$. \label{table:elasticvalid}}
\begin{tabular}{c|l}
segment of \msa & longer segment of \msa \\
$X$ & \texttt{A}$X$\\
$X$ & \texttt{-}$X$\\
$Y$ & \texttt{A}$Y$\\
$Z$\texttt{-}$^{|Y|-|Z|}$ & \texttt{A}$Z$\texttt{-}$^{|Y|-|Z|}$\\
\end{tabular}
\end{table}

In order to be able to use Algorithms~\ref{algo:maxblocks}~and~\ref{algo:minmaxlength}, we derive a new preprocessing algorithm for values $f(j)$ that does not assume monotonicity.

\subsection{Preprocessing for the non-monotonic case}

\begin{figure*}
    \centering
    \includegraphics[scale=0.4]{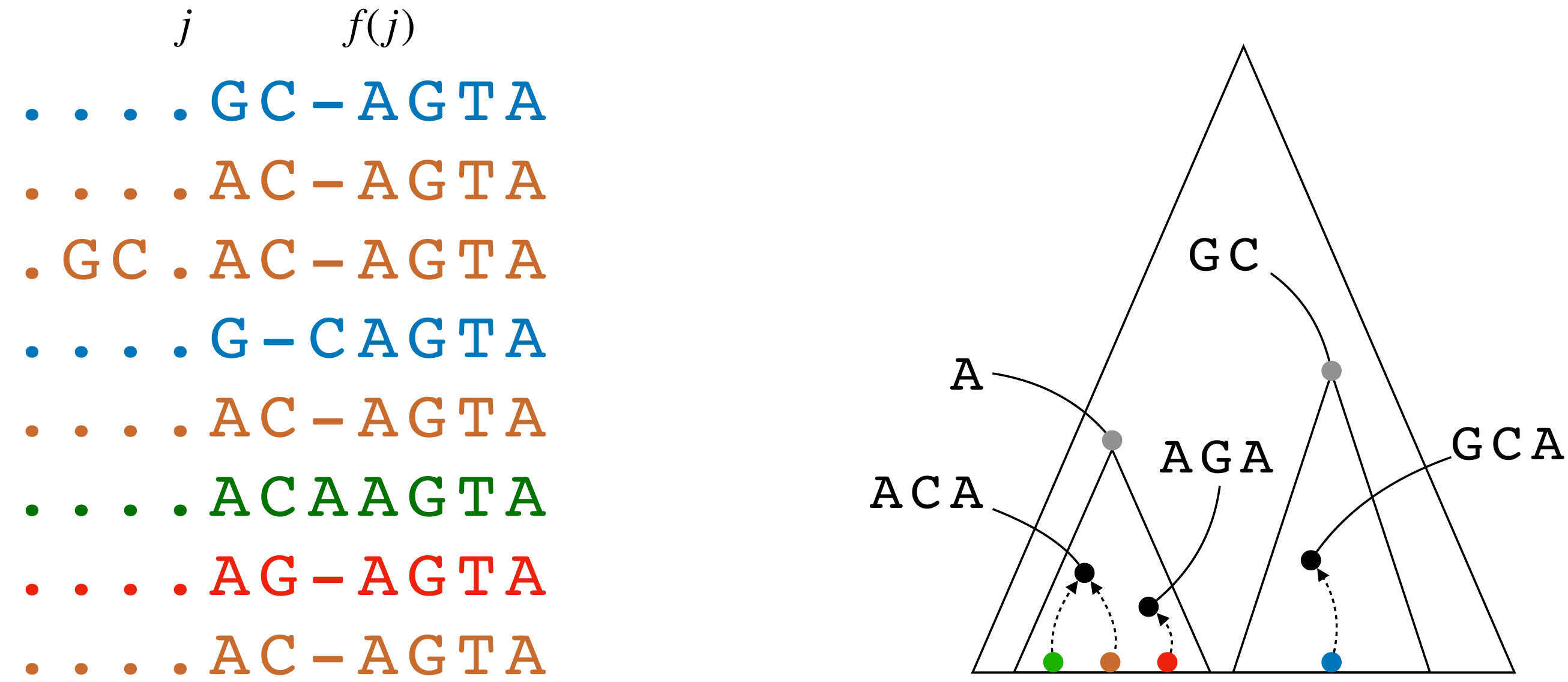}
    \caption{Illustrating the $O(m \log m)$ time algorithm to compute value $f(j)$ for a given $j$. Node labels correspond to the string spelled from the root to the node. We assume \texttt{ACA}, \texttt{AGA}, and \texttt{GCA} only appear in the region of the \msa visualized, while \texttt{GC} and \texttt{A} appear also elsewhere.}
    \label{fig:example-step}
\end{figure*}

As an internal part of the algorithm we need an efficient data structure to maintain a dynamic set of non-overlapping intervals. Let $I$ be a set of integer intervals, i.e., $I=\{[a_1..b_1],[a_2..b_2],\ldots, [a_m..b_m]\}$, where $a_i,b_i \in [1..n]$ for all $i$. We say $I$ is \emph{non-overlapping} if for all pairs $[a_i..b_i],[a_j..b_j] \in I$ holds $[a_i..b_i] \cap [a_j..b_j] = \emptyset$. 

\begin{lemma}
There is a data structure to maintain a non-overlapping set of intervals $I$ supporting insertions and deletions of intervals in $O(\log |I|)$ time. The data structure also supports in $O(\log |I|)$ time a query $\mathtt{span}([a..b])$ that returns $|\cup\{[a_i..b_i] \in I \mid a\leq a_i\leq b_i\leq b\}|$.
\label{lemma:dynamic-non-overlapping-intervals}
\end{lemma}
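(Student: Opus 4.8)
The plan is to maintain the non-overlapping set $I$ as a balanced binary search tree keyed on the left endpoints $a_i$ of the intervals. Since $I$ is non-overlapping, ordering by left endpoint coincides with ordering by right endpoint, so the tree stores the intervals in a well-defined linear order and supports the usual $O(\log|I|)$ predecessor/successor navigation. Insertions and deletions of an interval are then just ordinary BST updates, costing $O(\log|I|)$, provided we augment each node with enough summary information to answer the $\mathtt{span}$ query; the only real work is choosing that augmentation.

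For the query I would augment each node of the tree with the subtree aggregate $\Sigma = \sum (b_i - a_i + 1)$, i.e.\ the total length of the intervals stored in its subtree. This is a standard order-statistic-style augmentation: it can be recomputed from a node's own interval length plus its two children's aggregates in $O(1)$, so it is maintained under rotations and updates without affecting the $O(\log|I|)$ bounds. To answer $\mathtt{span}([a..b])$, note that because the intervals are non-overlapping and we only count those \emph{fully contained} in $[a..b]$ (the condition $a\le a_i \le b_i \le b$), the qualifying intervals form a contiguous range in the sorted order. I would locate the first interval with $a_i \ge a$ and the last interval with $b_i \le b$ by two ordinary searches, and then read off the sum of lengths over that contiguous block as a difference of two prefix-sum queries on the $\Sigma$ field; each such prefix sum is computed in $O(\log|I|)$ by walking from the root to the boundary node and accumulating left-subtree aggregates. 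Since the surviving intervals are pairwise disjoint, the sum of their lengths equals $|\cup\{[a_i..b_i]\}|$, which is exactly the quantity $\mathtt{span}$ must return.

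\textbf{The main obstacle} I expect is the boundary handling of the query rather than the augmentation itself: one must argue carefully that an interval $[a_i..b_i]$ qualifies \emph{iff} it lies in the contiguous sorted block delimited by the two searches, which uses the non-overlapping assumption to rule out any interval that straddles $a$ or $b$ (such an interval would either be counted spuriously or missed if the intervals could overlap). Once disjointness guarantees that partial containment cannot happen for intervals between the two boundaries, correctness follows, and the identity $\sum(b_i-a_i+1) = |\cup[a_i..b_i]|$ is immediate from disjointness. All operations reduce to a constant number of root-to-leaf traversals and $O(1)$-time aggregate recomputations, giving the claimed $O(\log|I|)$ bounds; I would state this as the conclusion of the proof.
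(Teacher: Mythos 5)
Your proposal is correct and takes essentially the same approach as the paper: both maintain a balanced binary search tree over the intervals ordered by left endpoint (which, by disjointness, coincides with the order by right endpoint), augment each node with the sum of interval lengths $b_i-a_i+1$ in its subtree (an aggregate recomputable in $O(1)$ per affected node, so updates and rebalancing stay $O(\log|I|)$), and answer $\mathtt{span}([a..b])$ as a range sum over $O(\log |I|)$ subtrees, using disjointness to identify the sum of lengths with the size of the union. If anything, your explicit second search for the last interval with $b_i \le b$ treats the containment condition $a \le a_i \le b_i \le b$ more carefully than the paper's proof, which describes the range query only in terms of the keys $a_i$ lying in $[a..b]$.
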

\begin{proof}
Consider a balanced binary search tree with leaves corresponding to intervals of $I$, sorted by values $a_i$ for $[a_i..b_i] \in I$. Leaf $i$ stores $a_i$ as \emph{key} value and $b_i-a_i+1$ as \emph{span} value. Each internal node $v$ stores the maximum  key and sum of span values of the leaves in its subtree. Assuming the data structure has been maintained with these values, answering query $\mathtt{span}([a..b])$ can be done as follows: Locate the $O(\log |I|)$ internal nodes that form a non-overlapping cover on the keys in the range $[a..b]$ (by searching keys $a$ and $b$ and picking the subtrees bypassed and within the interval). Return the sum of span values stored in those nodes.

It remains to consider how the values can be maintained during insertions and deletions, and during the resulting rebalancing operations. On each such operation, there are $O(\log |I|)$ internal nodes affected on the upward path from the the leaf to the root. It is sufficient to consider one such affected node $v$ assuming its left child $\ell$ and right child $r$ have already been updated accordingly. We set $\mathtt{key}(v)=\mathtt{key}(r)$ and $\mathtt{span}(v)=\mathtt{span}(\ell)+\mathtt{span}(r)$.
\end{proof}

\begin{restatable}{lemma}{fjlemma}
Let $f(j)$ be the smallest integer such that $\msaij{1..m}{j+1..f(j)}$ is a semi-repeat-free segment. We can compute all values $f(j)$ in $O(mn \log m)$ time. 
\label{lemma:non-monotonic-preprocessing}
\end{restatable}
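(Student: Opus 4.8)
The plan is to reduce the computation of each $f(j)$ to a ``clean subtree'' query on a suffix tree of the gap-removed rows, and then to process all columns $j$ within the stated budget by maintaining the block-aligned positions dynamically with the interval data structure of Lemma~\ref{lemma:dynamic-non-overlapping-intervals}.

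First I would fix the characterization in suffix-tree terms. Build once, in $O(mn)$ time, the generalized suffix tree $ST$ of the strings $\spell(\msaij{i}{1..n})$ for $1\le i\le m$ (separated by distinct end markers), of total size $O(mn)$. For a fixed column $j$, call a gap-removed position \emph{block-aligned} if it is the first non-gap position of some row at or after column $j+1$; there are at most $m$ such positions, and I mark the corresponding $m$ leaves of $ST$. Reading the semi-repeat-free definition through the rows (the gapped analogue of Lemma~\ref{lemma:charaterization}), the segment $\msaij{1..m}{j+1..e}$ is semi-repeat-free exactly when, for every row $i$, the prefix of its block-aligned suffix spelled by columns $[j+1..e]$ labels a locus of $ST$ whose subtree contains only marked leaves. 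The case where a block string is a proper prefix of a longer block string of the same segment is automatically allowed, since the latter's block-start leaf is itself marked and lies in the subtree; this is precisely the prefix slack permitted by semi-repeat-freeness. By Observation~\ref{obs:right-extension} cleanliness is monotone as $e$ grows, so for each row the smallest valid $e$ is well defined.

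For a fixed $j$ I would then compute, for each marked leaf $x_i$, the shortest prefix of its suffix whose locus is \emph{clean}. By monotonicity this prefix ends just below the deepest ancestor of $x_i$ whose subtree still contains an unmarked leaf, and that ancestor is $\mathrm{LCA}(x_i,z)$ for $z$ the nearest unmarked leaf of $ST$ to either side of $x_i$ in suffix-array order (nearest in rank maximises the common prefix, hence the \textsf{LCA} depth). Grouping the marked ranks into maximal runs of consecutive suffix-array positions converts ``nearest unmarked neighbour'' into ``run boundary'', which is where the interval data structure enters: I maintain the marked ranks as a non-overlapping set of runs, use it to locate the run containing each $x_i$ together with its two boundary positions, and use $\mathtt{span}$ to certify that a candidate locus range is fully covered by marked positions. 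Each minimal clean length is mapped back to the column $e_i$ at which row $i$ has accumulated that many non-gap symbols from column $j+1$, and I set $f(j)=\max_i e_i$ (declaring $f(j)$ undefined when some row's entire suffix fails to be clean). Since every query and update costs $O(\log m)$ and, summed over all $j$, the number of marked positions inserted and deleted equals the number of non-gap characters, namely $O(mn)$, the total running time is $O(mn\log m)$.

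I expect the main obstacle to be the gap bookkeeping rather than the tree machinery. Unlike the gapless $v(j)$ computation of Theorem~\ref{th:vj}, the values $f(j)$ need not be monotone, as Table~\ref{table:elasticvalid} witnesses, so the variable-length sliding window cannot be reused and the block-aligned sets must be rebuilt or dynamically maintained per column; this is exactly what motivates Lemma~\ref{lemma:dynamic-non-overlapping-intervals}. The delicate part is to translate faithfully between the column indexing in which $f(j)$ is defined and the gap-removed string indexing in which $ST$, the marked leaves, and the interval structure live, and to verify that the clean-subtree test captures the semi-repeat-free property with its prefix exception in every gap configuration. Once this correspondence is pinned down, the \textsf{LCA} preprocessing, the run maintenance, and the length-to-column map are routine.
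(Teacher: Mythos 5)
Your proposal is correct, but it reaches the $O(mn\log m)$ bound by a genuinely different mechanism than the paper. The paper also works column by column on the generalized suffix tree of the gap-removed rows, but it computes $f(j)$ by an iterative shrinking process: it starts from the loci of the full block-aligned suffixes and repeatedly tries to replace a locus by its parent, accepting the replacement exactly when the union of leaves below the current loci still has size $m$ (maintained via the $\mathtt{span}$ query of Lemma~\ref{lemma:dynamic-non-overlapping-intervals}); rows whose loci become nested inside an accepted parent are set aside as ``redundant'' in a forest and receive their values $f^i(j)$ afterwards by length-matching against the root of their redundancy tree. You replace this whole refinement loop with a closed-form answer per row: the minimal clean prefix length of the block-aligned suffix $x_i$ is one plus the larger of the LCPs between $x_i$ and the nearest \emph{unmarked} leaves on either side in suffix-array order, extracted from the boundaries of the maximal marked run containing $x_i$ plus two static LCP range-minimum queries; the dynamic structure is demoted to merely maintaining the marked set under the $O(mn)$ updates incurred as $j$ advances (your $\mathtt{span}$ certification step is in fact never needed --- predecessor/successor over marked ranks suffices). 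The two tests are equivalent, since each row's subtree always contains its own marked leaf, so the union of the $m$ subtrees has exactly $m$ leaves iff every subtree is clean in your sense. What your route buys is the elimination of the paper's redundancy bookkeeping --- a nested prefix inside the same block is handled automatically because the deeper row's marked leaf lies in the shallower row's subtree --- at the price of inverse-SA/LCP/RMQ machinery that the paper's parent-walk avoids. The loose ends you flag are shared with the paper's own proof: the row-wise (gapped) analogue of Lemma~\ref{lemma:charaterization} is invoked without proof there as well, and you should state explicitly that $f(j)$ is also undefined when some row has no non-gap symbol after column $j$, so that no marked leaf exists for it.
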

\begin{proof}
Figure \ref{fig:example-step} illustrates the algorithm to be described in the following.
Consider the generalized suffix tree of $\{\spell(\msaij{i}{1..n}) \mid 1\leq i\leq m\}$. For each $j$, locate the subset $W$ of (implicit) suffix tree nodes corresponding to $\{\spell(\msaij{i}{j+1..n}) \mid 1\leq i\leq m\}$; ; these are the colored nodes in Fig.~\ref{fig:example-step}. If the number of leaves covered by the subtrees rooted at $W$ is greater than $m$, $f(j)$ remains undefined. 

Otherwise, we know that $f(j)\leq n$, and our aim is to decrease the right boundary, starting with $n$, until we have reached column $f(j)$. We do this one row at a time, recording values $f^i(j)$ such that in the end $f(j)=\max_i f^i(j)$. We initialize $f^i(j)=n$ for all $i$. Suffix tree nodes corresponding to rows whose values $f^i(j)$ are final are stored in set $F$, initially empty. Suffix tree nodes corresponding to rows whose values $f^i(j)$ are redundant (to be detailed later) are stored in set $R$, initially empty. 

To start this process, a) pick an (implicit) suffix tree node $v \in W$ corresponding to $\spell(\msaij{i}{j+1..f^i(j)})$ for some $i$. Let $w$ be the parent of $v$. It corresponds to string $\spell(\msaij{i}{j+1..j'})$ for some $j'$. Then b) consider replacing $v$ with $w$ in $W$. If the number of leaves covered by the subtrees rooted at $W \cup F \cup R$ is still $m$, then this replacement is safe, and we can set $f^i(j)=j'$. Safe replacements are shown as black nodes in Fig.~\ref{fig:example-step}, while the gray nodes are unsafe replacements. We also move from $W$ to $R$ all $w'\in W$ that are located in the subtree rooted at $w$ (not including $w$), as these nodes are now redundant; we will consider later how to compute their values $f^i(j)$. Otherwise, instead of replacement, move $v$ from $W$ to $F$, as we have found the minimum valid range with regards to row $i$: Consider string $\spell(\msaij{i}{j+1..j'+k})$, where $\msaij{i}{j'+k}$ is the first non-gap symbol at row $i$ after $\msaij{i}{j'}$. This string is spelled by reading the path from the root to $w$ and then reading one symbol on the edge $(w,v)$. This string is thus the shortest string having the same occurrences as $\spell(\msaij{i}{j+1..f^i(j)})$, and we can safely assign as final value $f^i(j)=j'+k$. 
Repeat these steps a) and b) until $V'$ is empty. At that point, decreasing of the right boundary is no longer possible on any row. 
However, we only have computed $f^i(j)$ for $i$ such that there is $v\in F$ corresponding to $\spell(\msaij{i}{j+1..f^i(j)})$.
We also need to compute values $f^{i'}(j)$ for $i'$ such that there is $v'\in R$ corresponding to $\spell(\msaij{i'}{j+1..f^{i'}(j)})$. Note that each $v'\in R$ was made redundant by another node, which in turn, may have been made redundant on its turn. We can store these relationships as a forest of trees. Root of each tree corresponds to some $v \in F$ and rest of the nodes are from $R$. Now, consider a root $v\in F$ of some of the trees corresponding to $\spell(\msaij{i}{j+1..f^i(j)})$ and a node $v' \in R$ of the same tree. We can assign $f^{i'}(j)=j'$ for smallest $j'$ such that $|\spell(\msaij{i'}{j+1..j'})|=|\spell(\msaij{i}{j+1..f^i(j)})|$. E.g.~for row $i=2$ in Fig.~\ref{fig:example-step}, we have $f^{2}(j)=j+3$, as $|\spell(\mathtt{AC-A})|=3=|\mathtt{ACA}|$.
Then we can set set $f(j)=\max_i f^i(j)$.

To achieve the claimed running time, we use backward searching on the unidirectional BWT index~\cite{BCKM20} on the concatenation of strings  $\{\spell(\msa[i,$ $1..n])$ $ \mid 1\leq i\leq m\}$ (with special markers added between) to find all the suffix array intervals corresponding to sets $\{\spell(\msaij{i}{j+1..n}) \mid 1\leq i\leq m\}$ for all $j$. This takes $O(mn)$ time. To find the largest $j$ for which the union of suffix array intervals is of size $m$, we can sort the intervals at each column and compute the size of the union by a simple scanning. This takes $O(mn \log m)$ time overall. 

Now we need to show that the process of reducing the right boundary for a fixed column can be done in $O(m \log m)$ time. Mapping from a suffix array interval to the (compressed) suffix tree node takes constant time~\cite{Sad07}. Steps a) and b) are repeated at most $m$ times at any column $j$: Either some row $i$ gets completed, or at least one row becomes redundant. In both cases, size of $W$ decreases at each step. The most time consuming part in this process is to compute the number of leaves in the union of subtrees. We can do this in $O(m \log m)$ time, by mapping the nodes back to suffix array intervals, and then computing the size of the union of intervals as above. However, we can only afford to do this at the first step of the process. For the rest of the steps we use Lemma~\ref{lemma:dynamic-non-overlapping-intervals}: To be able to use the lemma, we need to ensure only non-overlapping intervals are stored in the data structure. Thus, at the first step we remove duplicates and intervals that are nested in another one in $O(m \log m)$ time, and store the remaining intervals to the structure of Lemma~\ref{lemma:dynamic-non-overlapping-intervals}. While doing so, we move the suffix tree nodes corresponding to these removed intervals to the set $R$ of redundant nodes. This is safe, as initially the union of the intervals is of size $m$ (no extra occurrences in the intervals), and hence the steps a) and b) would anyway move the suffix tree nodes corresponding to those intervals to $R$ at some point. Consider now step a) with $w$ being parent of suffix tree node $v$. Let $[a..b]$ be the suffix array interval corresponding to $w$. We can query $\mathtt{span}([a..b])$ from the data structure, and if the answer is $m$, we remove the intervals in the query range, and insert $[a..b]$ in their place. 

It remains to consider how to find the first non-gap symbol $\msaij{i}{j'+k}$ at row $i$ after $\msaij{i}{j'}$, and how to find the smallest $j'$ such that $|\spell(\msaij{i'}{j+1..j'})|=x$ given $x$. These can be done in constant time after $O(mn)$ time preprocessing for rank and select queries on bitvectors marking locations of the the gap symbols. \end{proof}
 
\begin{corollary}
    After an $O(mn \log m)$ time preprocessing, Algorithms~\ref{algo:maxblocks} and \ref{algo:minmaxlength} compute the scores $\mathtt{maxblocks}(n)=b$ and $\mathtt{minmaxlength}(n)=\max\limits_{i:1\leq i \leq b} L(S^i)$ of optimal semi-repeat-free segmentations $S^1,S^2,\ldots,S^b$ of $\msaij{1..m}{1..n}$ in $O(n)$ and $O(n \log \log n)$ time, respectively. The produced segmentations induce semi-repeat-free \efgs from a general \msa.
\end{corollary}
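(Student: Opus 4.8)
The plan is to recognize that this corollary is the verbatim analogue of the Theorem stated immediately after Algorithm~\ref{algo:minmaxlength}, with ``gapless \msa + repeat-free'' replaced by ``general \msa + semi-repeat-free'', and to show that only two ingredients change: the preprocessing that supplies the values $f(j)$, and the underlying notion of a valid segment. The main loops of Algorithms~\ref{algo:maxblocks} and \ref{algo:minmaxlength} are left untouched. First I would invoke Lemma~\ref{lemma:non-monotonic-preprocessing}, which computes all values $f(j)$ (the smallest $f(j)$ such that $\msaij{1..m}{j+1..f(j)}$ is a semi-repeat-free segment) in $O(mn \log m)$ time; this single call accounts for the stated preprocessing bound and is the only place where the general-\msa machinery is needed.

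Next I would argue that Algorithms~\ref{algo:maxblocks} and \ref{algo:minmaxlength} consume the right-extensions $(j,f(j))$ and depend only on the right-extension property captured by Observation~\ref{obs:right-extension}: once $\msaij{1..m}{j'+1..f(j')}$ is a valid (semi-repeat-free) segment, every longer segment $\msaij{1..m}{j'+1..j}$ with $j \geq f(j')$ is valid as well. This is precisely what both algorithms exploit, since at column $j$ they activate exactly those earlier boundaries $j'$ with $f(j')=j$, thereby making the optimal prefix score at $j'$ available for extension by the segment $[j'+1..j]$. I would then note that the sorted-by-second-component input the algorithms expect is produced from the output of Lemma~\ref{lemma:non-monotonic-preprocessing} in $O(n)$ time by counting sort, as the defined $f(j)$ values lie in $[1..n]$, so the interface between preprocessing and main computation is unchanged.

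With that interface in place, the correctness and running-time analysis of the earlier Theorem transfers directly: an induction on $j$ shows that $\mathtt{maxblocks}(j)$ and $\mathtt{minmaxlength}(j)$ equal the optimal scores over all semi-repeat-free segmentations of the prefix $\msaij{1..m}{1..j}$ (reusing the case split a)/b) for Algorithm~\ref{algo:minmaxlength}, which only ever referenced segment availability and the score definition), and traceback from column $n$ recovers an optimal segmentation. Algorithm~\ref{algo:maxblocks} runs in $O(n)$ time and Algorithm~\ref{algo:minmaxlength} in $O(n \log \log n)$ time via van Emde Boas trees, exactly as before. Finally, because every segment of the produced segmentation is semi-repeat-free by the definition of $f(\cdot)$, the induced graph is a semi-repeat-free \efg.

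The subtle point, and the one I would verify most carefully, is the following: the gapless Theorem was justified in a setting where the $f(j)$ values are monotone, whereas in the general case they need not be (the text and Table~\ref{table:elasticvalid} exhibit $f(j)>f(j+1)$). The obstacle is thus to confirm that neither algorithm actually uses monotonicity of $f$: they only ever process right-extensions in increasing order of their second component, activating each $(j',f(j'))$ at the step $j=f(j')$, which is well defined irrespective of how $f$ behaves across adjacent columns. Once this non-reliance on monotonicity is checked, every remaining step is routine reuse of the gapless argument together with the right-extension Observation~\ref{obs:right-extension}.
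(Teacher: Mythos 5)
Your proposal is correct and takes essentially the same approach as the paper: the corollary is stated there without a separate proof precisely because it follows, as you argue, from Lemma~\ref{lemma:non-monotonic-preprocessing} (the $O(mn\log m)$ preprocessing), Observation~\ref{obs:right-extension} (the right-extension property), and the unchanged correctness and running-time analysis of the gapless theorem for Algorithms~\ref{algo:maxblocks} and~\ref{algo:minmaxlength}. Your closing point---that the main algorithms never rely on monotonicity of $f(\cdot)$, only on processing pairs $(j',f(j'))$ in increasing order of their second component---is exactly the key observation the paper makes in the text preceding the corollary.
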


\subsection{Prefix-free \efgs} 

While Algorithms~\ref{algo:maxblocks} and \ref{algo:minmaxlength} would work also for the prefix-free case, it appears difficult to modify the preprocessing for the same.

Instead of separate preprocessing and dynamic programming to compute the score of an optimal segmentation, we proceed directly with the main recurrence. We consider only minimizing the maximum block length score, as for this score we can derive a non-trivial parameterized solution. 

Let $e(j')$ be the score of a minimum scoring repeat-free segmentation $S^1,S^2,\ldots,S^b$ of prefix $\msa[1..m,1..j']$, where the score is defined as $\max\limits_{i:1\leq i \leq b} L(S^i)$. Then  
\begin{equation}
e(j)=\min_{\begin{array}{c} j':0\leq j'< j,\\ \msa[1..m,j'+1..j] \\\text{is repeat-free segment},\\ e(j')<j'+1 \end{array}} \max(j-j',e(j')), \label{eq:repeatfreeelasticscore}
\end{equation}
with $e(0)$ initialized to $0$. When there is no valid segmentation for some $j$, $e(j)=j+1$.

To test for a valid range $[j'+1..j]$, we adjust the sliding window preprocessing algorithm of Sect.~\ref{sect:preprocessing} in order to integrate it with the computation of the recurrence as follows:
\begin{enumerate}
    \item A unidirectional BWT index \cite{BCKM20} is built on the \msa rows concatenated into one long string, after the gap characters are removed and some separator symbols added between the rows.
    \item The search on each row of $\msa$ is initiated for each $j$, decreasing $j'$ from $j-1$ to $0$. This means only left-extensions are required.
    \item When $\msa[i,j]$ is accessed to alter the BWT interval, the old interval is retained if $\msa[i,j]=\text{-}$. 
    \item Modification 3) can cause intervals to become nested (exactly when substring $\text{spell}(\msa[i',j'..j])$ becomes a prefix of $\text{spell}(\msa[i,j'..j])$), and this needs to be checked for the proper detection of valid ranges. 
\end{enumerate}

Recall that at any step of the preprocessing algorithm of Sect.~\ref{sect:preprocessing}, we had a non-nested set $I=\{[i'_a\ltdots i_a]\}_{a\in \{1,2 \ldots m\}}$ of intervals. We exploited the non-nestedness in the use of a bitvectors $M$ (marking suffixes of current column), $B$ (BWT interval beginning), and $E$ (BWT interval ending) to detect if $I$ contains only BWT intervals of suffixes of the current column of the \msa. This gave us the linear time algorithm. Now that the intervals can become nested, these bitvectors no longer work as intended. Instead we resort to a generic method to check nestedness, and to compute the size of the union of distinct intervals in $I$, when no nestedness is detected. If no nestedness is detected, and the size of the union is $m$, we know that the range in consideration is valid. This can be done in $m \log m$ time e.g. by sorting the interval endpoints and simple scanning to maintain how many active intervals there are at the endpoints. If there is more than one active interval at any point, the range is not valid. Otherwise the range is valid, and the size of the union of intervals is just the sum of their lengths. This nestedness check and the computation of the union of intervals is repeated at each column.

\begin{theorem}
    The values $e(j)$ of Eq.~(\ref{eq:repeatfreeelasticscore}), for all $j \in [1 \ltdots n]$, can be computed in $O(mn e_{\mathtt{max}} \log m)$ time, where $e_{\mathtt{max}}=\max_{j} e(j)$. The optimal segmentation defined by Eq.~(\ref{eq:repeatfreeelasticscore}) yields a repeat-free elastic founder graph.
\label{thm:repeatfreeelasticscore}    
\end{theorem}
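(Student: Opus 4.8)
The plan is to show that the recurrence is correct, that it can be evaluated within the claimed bound, and that its optimum induces a repeat-free \efg. First I would prove, by induction on $j$, that $e(j)$ equals the minimum over all repeat-free segmentations of $\msaij{1..m}{1..j}$ of the largest block length. The base case $e(0)=0$ is immediate. For the step, every repeat-free segmentation of $\msaij{1..m}{1..j}$ factors as a repeat-free segmentation of some prefix $\msaij{1..m}{1..j'}$ followed by a final repeat-free block $\msaij{1..m}{j'+1..j}$; by the Characterization Lemma (Lemma~\ref{lemma:charaterization}) such a factorization is itself repeat-free exactly when the prefix is repeat-free-segmentable---encoded by $e(j')<j'+1$, as $e(j')=j'+1$ is precisely the sentinel for ``no valid segmentation''---and the last segment is repeat-free. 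Its score is $\max(j-j',e(j'))$, so minimizing over admissible $j'$ gives $e(j)$ (and $e(j)=j+1$ when no admissible $j'$ exists). Detection of a repeat-free final segment is inherited from the sliding-window procedure with modifications 1--4: each row is searched right-to-left in a BWT index, the interval is retained whenever $\msaij{i}{j}$ is a gap, and $\msaij{1..m}{j'+1..j}$ is declared valid iff no one of the $m$ row-intervals is properly nested inside another (no spelled substring is a proper prefix of another) and their union has size exactly $m$ (no occurrence outside the $m$ rows). These two conditions are precisely the repeat-free requirement for a single block.

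Second, I would bound the work. The key is to examine, for each $j$, only $O(e_{\mathtt{max}})$ candidate predecessors instead of all $j$ of them. Fix $j$ and scan $j'=j-1,j-2,\ldots$ by successive left extensions, maintaining the running minimum $\mathtt{best}$ of $\max(j-j',e(j'))$ over admissible $j'$ already seen, and halt as soon as the block length $j-j'$ reaches $\mathtt{best}$: any longer block has length, hence score, at least $\mathtt{best}$ and cannot improve the minimum. To see this halts within $e(j)$ extensions, let $j^{\star}$ attain the minimum, so $e(j)=\max(j-j^{\star},e(j^{\star}))$ with $j-j^{\star}\le e(j)$ and (since $j^{\star}$ is admissible) $e(j^{\star})\le j^{\star}$; by the time the scan reaches block length $j-j^{\star}$ it has examined $j^{\star}$ and set $\mathtt{best}\le e(j)$, while every candidate value is $\ge e(j)$, so $\mathtt{best}=e(j)$ from then on and the scan stops by block length $e(j)\le e_{\mathtt{max}}$. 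When no admissible predecessor exists, $e(j)=j+1$, the scan runs all $j<e_{\mathtt{max}}$ steps, and the bound still holds. Each extension does $m$ constant-time left extensions (Lemma~\ref{lemma:randomized_index_construction}) followed by the validity test, which sorts the $m$ interval endpoints and scans them in $O(m\log m)$ time and dominates. Hence the total cost is $O\big(m\log m\sum_{j}e(j)\big)=O(mn\,e_{\mathtt{max}}\log m)$, with the $O(mn)$ index construction absorbed.

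Finally, since the recurrence only ever combines repeat-free blocks, the segmentation recovered by traceback from $e(n)$ consists entirely of repeat-free segments, so by Lemma~\ref{lemma:charaterization} the induced \efg is repeat-free. The step I expect to be hardest is the running-time analysis: unlike the gapless setting, left-extension of a valid block need not stay valid (Table~\ref{table:elasticvalid}), so the admissible $j'$ do not form an interval and the earlier constant-time union maintenance no longer applies---forcing the $O(m\log m)$ per-step check. What salvages near-linear-in-the-parameter time is the monotonicity of the block-length term combined with the fact that $\mathtt{best}$ provably descends to $e(j)$ before the scanned block grows past it, which is exactly what caps the per-column work at $e_{\mathtt{max}}$.
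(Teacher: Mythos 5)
Your proposal is correct and follows essentially the same route as the paper's proof: evaluate Eq.~(\ref{eq:repeatfreeelasticscore}) by scanning $j'$ downward from $j-1$ with BWT left extensions, stop once $j-j'$ reaches the running minimum (capping the scan at $e(j)\le e_{\mathtt{max}}$ steps), and pay $O(m\log m)$ per step for the sort-and-scan nestedness/union check, plus traceback for the segmentation. Your write-up is in fact somewhat more explicit than the paper's (the induction establishing correctness of the recurrence and the careful halting argument via $j^{\star}$ are left implicit there), but the underlying ideas are identical.
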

\begin{proof}
The unidirectional BWT index can be constructed in $O(mn)$ time and each left-extension takes constant time \cite{BCKM20}.
We can start comparing $\max(j-j',e(j'))$ from $j'=j-1$ decreasing $j'$ by one each step and maintaining $e(j)$ as the minimum value so far. Once value $j-j'$ grows bigger than current $e(j)$, we know that the value of $e(j)$ can no longer decrease. This means we can decrease $j'$ exactly $e(j)$ times. At each decrease of $j'$, we do $m$ left-extensions (one for each row), and then check for the validity by computing the union of search intervals in $O(m \log m)$ time. This gives the claimed running time. Traceback from $e(n)$ gives an optimal repeat-free segmentation.
\end{proof}

\section{Connection to Wheeler graphs\label{sect:wheeler}}

Wheeler graphs, also known as Wheeler automata, are a class of labeled graphs that admit an efficient index for path queries~\cite{GMS17}. We now give an alternative way to index repeat-free elastic block graphs by transforming the graph into an equivalent Wheeler automaton.

We view a block graph as a nondeterministic finite automaton (NFA) by adding a new initial state and edges from the source node to the starts of the first block, and expanding each string of each block to a path of states. To conform with automata notions, we define that the label of an edge is the label of the destination node. 

We denote the repeat-free NFA with $F$. First we determinize it with the standard subset construction for the reachable subsets of states. The states of the DFA are subsets of states of the NFA such that there is an edge from subset $S_1$ to subset $S_2$ with label $c$ iff $S_2$ is the set of states at the destinations of edges labeled with $c$ from  $S_1$. We only represent the subsets of states reachable from the subset containing only the initial state. We call the deterministic graph $G$. See Figures \ref{fig:NFA} and \ref{fig:DFA} for an example. 

A DFA is indexable as a Wheeler graph if there exists an order $<$ on the nodes such that if $u < v$, then every incoming path label to $u$ is colexicographically smaller than every incoming path label to $v$ (recall that the colexicographic order of strings is the lexicographic order of the reverses of the strings). The repeat-free property guarantees that the nodes at the ends of the blocks can be ordered among themselves by picking an arbitrary incoming path as the sorting key. 

To make sure that the rest of the nodes are sortable, we modify the graph so that if a node is not at the end of a block, we make it so that the incoming paths to the node do not branch backward before the backward path reaches the end of a previous block. This is done by turning each block into a set of disjoint trees, where the roots of the trees are the ending nodes of the previous block, in a way that preserves the language of the automaton. The roots may have multiple incoming edges from the leaves of the previous tree. See Figure \ref{fig:WDFA} for an example. The formal definition of the transformation and the proof of sortability are in Sect. \ref{sect:wheelerappendix}. We denote the transformed graph with $G'$ and obtain the following result:

\begin{restatable}{lemma}{WGsizerestatable}
The number of nodes in $G'$ is at most $O(NW)$, where $W$ is the maximum number of strings in a block of $F$ and $N$ is the total number of nodes in $F$.
\label{lemma:WG_size}
\end{restatable}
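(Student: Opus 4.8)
The plan is to bound $|V(G')|$ by analyzing the transformed graph \emph{block by block}, observing that after the transformation each block becomes a forest of tries, and then charging the nodes of each trie to the labels it spells. The whole estimate reduces to a double-counting identity that turns the ``duplication by previous-block endpoints'' into a bounded in-degree factor.

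\textbf{Step 1: pin down the structure of a block of $G'$.} First I would make precise the shape of $G'$ inside a single block. The transformation turns block $k$ into a disjoint union of rooted out-trees whose roots are the ending nodes of block $k-1$, and it preserves the language while keeping the automaton deterministic. Hence the tree rooted at an ending node $u$ of block $k-1$ must spell exactly the strings reachable from $u$ inside block $k$, namely the set $\{\ell(w)\mid (u,w)\in E,\ w\in V^k\}$; since out-edges of a deterministic node carry distinct labels, common prefixes are merged and this tree is precisely the trie of that string set, with its leaves identified with the ending nodes of block $k$ (which in turn serve as the shared roots of the block-$(k+1)$ forest). Thus, reading $\mathrm{ends}(0)$ as the initial state, every non-initial node of $G'$ is a non-root node of the trie rooted at exactly one $u\in\mathrm{ends}(k-1)$ for a unique $k$, so
\[
|V(G')|\;\le\;1+\sum_{k=1}^{b}\ \sum_{u\in \mathrm{ends}(k-1)}\bigl(\text{non-root nodes of the trie of }\{\ell(w)\mid (u,w)\in E\}\bigr).
\]

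\textbf{Step 2: bound one trie and charge by destination.} A trie of a set of strings has at most the sum of their lengths as non-root nodes, so the inner term is at most $\sum_{(u,w)\in E}|\ell(w)|$. Re-indexing the double sum of a block by the destination node $w$ gives $\sum_{u\in\mathrm{ends}(k-1)}\sum_{(u,w)\in E}|\ell(w)| = \sum_{w\in V^k}\mathrm{indeg}(w)\,|\ell(w)|$. Every in-neighbour of $w\in V^k$ lies in $V^{k-1}$, so $\mathrm{indeg}(w)\le |V^{k-1}|\le W$, whence this is at most $W\sum_{w\in V^k}|\ell(w)|$. Summing over all blocks yields $|V(G')|\le 1+W\sum_{w\in V}|\ell(w)|\le 1+WN=O(NW)$, where I use that $N$, the number of nodes of $F$, satisfies $\sum_{w\in V}|\ell(w)|\le N$ (each label character is a state of $F$).

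\textbf{Main obstacle.} The arithmetic in Step~2 is routine; the real work is Step~1, i.e.\ translating the informal transformation of Sect.~\ref{sect:wheelerappendix} into the exact statement that each block of $G'$ is a forest of \emph{tries} rooted at the previous block's ends, with those ends \emph{shared} rather than duplicated. Two points need care. First, I must confirm that language preservation together with determinism forces prefix-merging, so that each tree is the trie (and not a larger tree with unshared prefixes) — otherwise the $\sum_{(u,w)\in E}|\ell(w)|$ bound could fail. Second, I must verify that the roots themselves do not blow up: here the repeat-free property is essential, since it guarantees that reading any path ending with a label $\ell(w)$, $w\in V^k$, leaves the subset construction in a state whose only chain endpoint is that of $w$; hence the ending nodes of block $k$ are in bijection with $V^k$ and number at most $W$, so the duplication factor is genuinely the in-degree bound $W$ and not something larger accumulated across blocks.
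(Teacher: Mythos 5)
Your proposal is correct and takes essentially the same approach as the paper: the non-root nodes of your tries rooted at the previous block's end nodes are exactly the distinct pairs $(u,\alpha)$ (end-of-block node, path label) that the paper's proof counts, with determinism playing the same role (prefix merging versus distinctness of pairs) and repeat-freeness bounding the number of roots per block by $W$. The only cosmetic difference is bookkeeping: you bound each trie by $\sum_{(u,w)\in E}|\ell(w)|$ and re-index by the EFG in-degree, which is at most $W$, whereas the paper bounds the number of labels per root directly by the number of $F$-nodes in the next block; both yield the same $O(NW)$ bound.
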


The Wheeler order $<$ of the transformed graph can be found by running the XBWT sorting algorithm on a spanning tree of the graph, as shown by Alanko et al.~\cite{Alaetal20}. Finally, we can find the minimum equivalent Wheeler graph by running the general Wheeler graph minimization algorithm of Alanko et al.~\cite{Alaetal20}.

With the input graph now converted into a Wheeler graph, one can deploy succinct data structures supporting fast pattern matching \cite[Lemma 4]{GMS17}, leading to the following result:

\begin{corollary}
A repeat-free founder/block graph $G$ or a repeat-free elastic degenerate string can be indexed in $O(NH)$ time into a Wheeler-graph-based data structure occupying $O(NH \log |\Sigma|)$ bits of space, where $N$ is the total number of characters in the node labels of $G$, $H$ is the height of $G$ (maximum number of strings in a block of $G$), and $\Sigma$ is the alphabet. Later, using the data structure, one can find out in $O(|Q|\log |\Sigma|)$ time if a given query string $Q$ occurs in $G$.
\label{cor:wheelerindexing}
\end{corollary}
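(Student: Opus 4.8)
The plan is to assemble the claimed bounds by pushing the converted graph $G'$ through the Wheeler-graph machinery and charging every stage against the size of $G'$. First I would fix notation so that the size lemma applies cleanly: the NFA $F$ obtained by expanding each node label of the input founder/block graph into a path of single-character states has exactly $N$ states (one per label character), and expanding labels does not change how many strings sit in a block, so its maximum block height equals $H$. Lemma~\ref{lemma:WG_size} then gives that the transformed, Wheeler-sortable graph $G'$ has $O(NW)=O(NH)$ nodes. Before invoking any index I would also bound the edges of $G'$ by $O(NH)$: inside each block the states form a disjoint forest, so the intra-block edges number at most the node count; the inter-block edges run from the leaves of one block's forest to the roots of the next block's forest, and the transformation keeps their total within the same bound. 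Hence $G'$ has $O(NH)$ nodes and $O(NH)$ edges.

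Second, I would verify correctness, i.e.\ that a Wheeler path query on $G'$ decides substring occurrence in the input graph. The invariant to establish is that the set of strings spellable by \emph{some} sub-path of $G'$ equals the set of substrings of path labels of the input graph. This holds because (i) character-expansion into $F$ and determinization by reachable subset construction preserve the prefix-closed language read from the initial state, so every prefix of a founder-graph path is spellable from the initial state of the determinized graph; (ii) consequently any substring $Q$ of a founder-graph path is spellable by a sub-path starting at the state reached by its preceding prefix, and conversely any string read along a sub-path is a substring of some founder-graph path; and (iii) the transformation into $G'$ and the Wheeler minimization of Alanko et al.~\cite{Alaetal20} preserve the accepted language and hence, by the same prefix argument, the set of sub-path labels. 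Thus the Wheeler index returns a non-empty node interval for $Q$ exactly when $Q$ occurs in the input graph.

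Third, I would do the time and space accounting. Building $F$ and determinizing it is charged against the output, which Lemma~\ref{lemma:WG_size} caps at $O(NH)$, and the tree transformation producing $G'$ is a linear pass; the Wheeler order is then computed by running the XBWT sorting of~\cite{Alaetal20} on a spanning tree of $G'$ and the minimum equivalent graph by their minimization algorithm, both running in time linear in $|G'|=O(NH)$. Finally, given the Wheeler order, the succinct representation of \cite[Lemma~4]{GMS17} is built in linear time, occupies $O(|E'|\log|\Sigma|)=O(NH\log|\Sigma|)$ bits, and answers a path (hence substring) query in $O(|Q|\log|\Sigma|)$ time via $\log|\Sigma|$-cost rank/select per character, which is exactly the stated result.

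The main obstacle I expect is the time accounting for determinization: a naive subset construction can be superlinear, so the argument must lean on the block-by-block, repeat-free structure to guarantee that the number of reachable subsets and their transitions never exceed the $O(NH)$ ceiling of Lemma~\ref{lemma:WG_size} and that each transition is produced in amortized constant work. The secondary subtlety is pinning down the inter-block edge count of $G'$ and confirming that the \cite{Alaetal20} routines are genuinely linear in $|E'|$, so that the $\log|\Sigma|$ factor surfaces only in the query time and the space bound, and not in the $O(NH)$ construction time.
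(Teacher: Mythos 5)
Your proposal follows exactly the paper's own route: expand the founder graph into the character-level NFA $F$, determinize by the reachable-subset construction, apply the block-to-forest transformation and Lemma~\ref{lemma:WG_size} to obtain a Wheeler-sortable graph $G'$ of size $O(NW)=O(NH)$, compute the Wheeler order and minimize via the algorithms of Alanko et al.~\cite{Alaetal20}, and index with the succinct Wheeler-graph structure of \cite[Lemma~4]{GMS17}, which yields the stated space and $O(|Q|\log|\Sigma|)$ query bounds. The extra details you supply --- the explicit $O(NH)$ edge count for $G'$, the argument that sub-path labels of $G'$ coincide with substrings of path labels of the input graph, and the flagged concern that subset construction must be charged against the $O(NH)$ output size --- are all points the paper leaves implicit rather than departures from it, so the two proofs are essentially the same.
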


\begin{figure}
    \centering
    
    \begin{tikzpicture}%
      [
       scale=0.75,
       every state/.style={scale = 1.0, draw=none, fill=none},
       scale=0.9, every node/.style={scale=0.9}
      ]  

      \node[draw, black] (27_label) at (0, 0) {\small{\$}};
\node[align=left, black] (27) at (0, 0.5) {\small{27}};
\node[draw, black] (0_label) at (1.7, 0) {\small{C}};
\node[align=left, black] (0) at (1.7, 0.5) {\small{0}};
\node[draw, black] (1_label) at (1.7, 1) {\small{T}};
\node[align=left, black] (1) at (1.7, 1.5) {\small{1}};
\node[draw, black] (2_label) at (1.7, 2) {\small{T}};
\node[align=left, black] (2) at (1.7, 2.5) {\small{2}};
\node[draw, black] (9_label) at (3.4, 0) {\small{A}};
\node[align=left, black] (9) at (3.4, 0.5) {\small{9}};
\node[draw, black] (10_label) at (3.4, 1) {\small{T}};
\node[align=left, black] (10) at (3.4, 1.5) {\small{10}};
\node[draw, black] (11_label) at (3.4, 2) {\small{G}};
\node[align=left, black] (11) at (3.4, 2.5) {\small{11}};
\node[draw, black] (3_label) at (5.1, 0) {\small{G}};
\node[align=left, black] (3) at (5.1, 0.5) {\small{3}};
\node[draw, black] (4_label) at (5.1, 1) {\small{G}};
\node[align=left, black] (4) at (5.1, 1.5) {\small{4}};
\node[draw, black] (5_label) at (5.1, 2) {\small{G}};
\node[align=left, black] (5) at (5.1, 2.5) {\small{5}};
\node[draw, black] (12_label) at (6.8, 0) {\small{C}};
\node[align=left, black] (12) at (6.8, 0.5) {\small{12}};
\node[draw, black] (15_label) at (6.8, 1) {\small{A}};
\node[align=left, black] (15) at (6.8, 1.5) {\small{15}};
\node[draw, black] (18_label) at (6.8, 2) {\small{A}};
\node[align=left, black] (18) at (6.8, 2.5) {\small{18}};
\node[draw, black] (13_label) at (8.5, 0) {\small{T}};
\node[align=left, black] (13) at (8.5, 0.5) {\small{13}};
\node[draw, black] (16_label) at (8.5, 1) {\small{T}};
\node[align=left, black] (16) at (8.5, 1.5) {\small{16}};
\node[draw, black] (19_label) at (8.5, 2) {\small{C}};
\node[align=left, black] (19) at (8.5, 2.5) {\small{19}};
\node[draw, black] (14_label) at (10.2, 0) {\small{G}};
\node[align=left, black] (14) at (10.2, 0.5) {\small{14}};
\node[draw, black] (17_label) at (10.2, 1) {\small{A}};
\node[align=left, black] (17) at (10.2, 1.5) {\small{17}};
\node[draw, black] (20_label) at (10.2, 2) {\small{T}};
\node[align=left, black] (20) at (10.2, 2.5) {\small{20}};
\node[draw, black] (7_label) at (11.9, 0) {\small{T}};
\node[align=left, black] (7) at (11.9, 0.5) {\small{7}};
\node[draw, black] (8_label) at (11.9, 1) {\small{T}};
\node[align=left, black] (8) at (11.9, 1.5) {\small{8}};
\node[draw, black] (6_label) at (11.9, 2) {\small{G}};
\node[align=left, black] (6) at (11.9, 2.5) {\small{6}};
\node[draw, black] (23_label) at (13.6, 0) {\small{A}};
\node[align=left, black] (23) at (13.6, 0.5) {\small{23}};
\node[draw, black] (25_label) at (13.6, 1) {\small{A}};
\node[align=left, black] (25) at (13.6, 1.5) {\small{25}};
\node[draw, black] (21_label) at (13.6, 2) {\small{C}};
\node[align=left, black] (21) at (13.6, 2.5) {\small{21}};
\node[draw, black] (24_label) at (15.3, 0) {\small{G}};
\node[align=left, black] (24) at (15.3, 0.5) {\small{24}};
\node[draw, black] (26_label) at (15.3, 1) {\small{A}};
\node[align=left, black] (26) at (15.3, 1.5) {\small{26}};
\node[draw, black] (22_label) at (15.3, 2) {\small{G}};
\node[align=left, black] (22) at (15.3, 2.5) {\small{22}};
\draw [line width=0.03cm, -, opacity=1.0, color=black] (2, 0) -- (3.1, 0);
\draw [line width=0.03cm, -, opacity=1.0, color=black] (2, 1) -- (3.1, 1);
\draw [line width=0.03cm, -, opacity=1.0, color=black] (2, 2) -- (3.1, 2);
\draw [line width=0.03cm, -, opacity=1.0, color=black] (5.4, 0) -- (6.5, 0);
\draw [line width=0.03cm, -, opacity=1.0, color=black] (5.4, 1) -- (6.5, 1);
\draw [line width=0.03cm, -, opacity=1.0, color=black] (5.4, 2) -- (6.5, 2);
\draw [line width=0.03cm, -, opacity=1.0, color=black] (12.2, 2) -- (13.3, 2);
\draw [line width=0.03cm, -, opacity=1.0, color=black] (12.2, 0) -- (13.3, 0);
\draw [line width=0.03cm, -, opacity=1.0, color=black] (12.2, 1) -- (13.3, 1);
\draw [line width=0.03cm, -, opacity=1.0, color=black] (3.7, 0) -- (4.8, 0);
\draw [line width=0.03cm, -, opacity=1.0, color=black] (3.7, 0) -- (4.8, 1);
\draw [line width=0.03cm, -, opacity=1.0, color=black] (3.7, 0) -- (4.8, 2);
\draw [line width=0.03cm, -, opacity=1.0, color=black] (3.7, 1) -- (4.8, 1);
\draw [line width=0.03cm, -, opacity=1.0, color=black] (3.7, 1) -- (4.8, 2);
\draw [line width=0.03cm, -, opacity=1.0, color=black] (3.7, 2) -- (4.8, 1);
\draw [line width=0.03cm, -, opacity=1.0, color=black] (3.7, 2) -- (4.8, 2);
\draw [line width=0.03cm, -, opacity=1.0, color=black] (7.1, 0) -- (8.2, 0);
\draw [line width=0.03cm, -, opacity=1.0, color=black] (8.8, 0) -- (9.9, 0);
\draw [line width=0.03cm, -, opacity=1.0, color=black] (10.5, 0) -- (11.6, 0);
\draw [line width=0.03cm, -, opacity=1.0, color=black] (10.5, 0) -- (11.6, 1);
\draw [line width=0.03cm, -, opacity=1.0, color=black] (7.1, 1) -- (8.2, 1);
\draw [line width=0.03cm, -, opacity=1.0, color=black] (8.8, 1) -- (9.9, 1);
\draw [line width=0.03cm, -, opacity=1.0, color=black] (10.5, 1) -- (11.6, 0);
\draw [line width=0.03cm, -, opacity=1.0, color=black] (10.5, 1) -- (11.6, 1);
\draw [line width=0.03cm, -, opacity=1.0, color=black] (7.1, 2) -- (8.2, 2);
\draw [line width=0.03cm, -, opacity=1.0, color=black] (8.8, 2) -- (9.9, 2);
\draw [line width=0.03cm, -, opacity=1.0, color=black] (10.5, 2) -- (11.6, 2);
\draw [line width=0.03cm, -, opacity=1.0, color=black] (10.5, 2) -- (11.6, 0);
\draw [line width=0.03cm, -, opacity=1.0, color=black] (10.5, 2) -- (11.6, 1);
\draw [line width=0.03cm, -, opacity=1.0, color=black] (13.9, 2) -- (15, 2);
\draw [line width=0.03cm, -, opacity=1.0, color=black] (13.9, 0) -- (15, 0);
\draw [line width=0.03cm, -, opacity=1.0, color=black] (13.9, 1) -- (15, 1);
\draw [line width=0.03cm, -, opacity=1.0, color=black] (0.3, 0) -- (1.4, 0);
\draw [line width=0.03cm, -, opacity=1.0, color=black] (0.3, 0) -- (1.4, 1);
\draw [line width=0.03cm, -, opacity=1.0, color=black] (0.3, 0) -- (1.4, 2);
\fill[yellow, opacity=0.3] (2.89,-0.7) rectangle (3.91,2.7);
\fill[yellow, opacity=0.3] (9.69,-0.7) rectangle (10.71,2.7);
\fill[yellow, opacity=0.3] (14.79,-0.7) rectangle (15.81,2.7);
    \end{tikzpicture}
    \caption{Repeat-free block NFA. The last columns of each block are highlighted.}
    \label{fig:NFA}
  \end{figure}
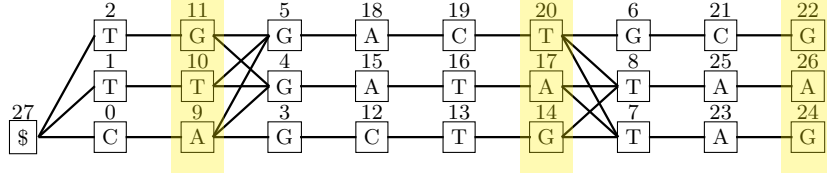
    
  \begin{figure}

    \centering
    
    \begin{tikzpicture}%
      [
       scale=0.75,
       every state/.style={scale = 1.0, draw=none, fill=none},
       scale=0.9, every node/.style={scale=0.9}
      ]  

      \node[draw, black] (0_label) at (0, 0) {\small{\$}};
\node[align=left, black] (0) at (0, 0.5) {\small{27}};
\node[draw, black] (1_label) at (1.7, 0) {\small{C}};
\node[align=left, black] (1) at (1.7, 0.5) {\small{0}};
\node[draw, black] (2_label) at (1.7, 1) {\small{T}};
\node[align=left, black] (2) at (1.7, 1.5) {\small{1,2}};
\node[draw, black] (3_label) at (3.4, 0) {\small{A}};
\node[align=left, black] (3) at (3.4, 0.5) {\small{9}};
\node[draw, black] (4_label) at (3.4, 1) {\small{G}};
\node[align=left, black] (4) at (3.4, 1.5) {\small{11}};
\node[draw, black] (5_label) at (3.4, 2) {\small{T}};
\node[align=left, black] (5) at (3.4, 2.5) {\small{10}};
\node[draw, black] (6_label) at (5.1, 0) {\small{G}};
\node[align=left, black] (6) at (5.1, 0.5) {\small{3,4,5}};
\node[draw, black] (7_label) at (5.1, 1) {\small{G}};
\node[align=left, black] (7) at (5.1, 1.5) {\small{4,5}};
\node[draw, black] (8_label) at (6.8, 0) {\small{A}};
\node[align=left, black] (8) at (6.8, 0.5) {\small{15,18}};
\node[draw, black] (9_label) at (6.8, 1) {\small{C}};
\node[align=left, black] (9) at (6.8, 1.5) {\small{12}};
\node[draw, black] (10_label) at (8.5, 0) {\small{C}};
\node[align=left, black] (10) at (8.5, 0.5) {\small{19}};
\node[draw, black] (11_label) at (8.5, 1) {\small{T}};
\node[align=left, black] (11) at (8.5, 1.5) {\small{16}};
\node[draw, black] (12_label) at (8.5, 2) {\small{T}};
\node[align=left, black] (12) at (8.5, 2.5) {\small{13}};
\node[draw, black] (13_label) at (10.2, 0) {\small{T}};
\node[align=left, black] (13) at (10.2, 0.5) {\small{20}};
\node[draw, black] (14_label) at (10.2, 1) {\small{A}};
\node[align=left, black] (14) at (10.2, 1.5) {\small{17}};
\node[draw, black] (15_label) at (10.2, 2) {\small{G}};
\node[align=left, black] (15) at (10.2, 2.5) {\small{14}};
\node[draw, black] (16_label) at (11.9, 0) {\small{G}};
\node[align=left, black] (16) at (11.9, 0.5) {\small{6}};
\node[draw, black] (17_label) at (11.9, 1) {\small{T}};
\node[align=left, black] (17) at (11.9, 1.5) {\small{7,8}};
\node[draw, black] (18_label) at (13.6, 0) {\small{C}};
\node[align=left, black] (18) at (13.6, 0.5) {\small{21}};
\node[draw, black] (19_label) at (13.6, 1) {\small{A}};
\node[align=left, black] (19) at (13.6, 1.5) {\small{23,25}};
\node[draw, black] (20_label) at (15.3, 0) {\small{G}};
\node[align=left, black] (20) at (15.3, 0.5) {\small{22}};
\node[draw, black] (21_label) at (15.3, 1) {\small{A}};
\node[align=left, black] (21) at (15.3, 1.5) {\small{26}};
\node[draw, black] (22_label) at (15.3, 2) {\small{G}};
\node[align=left, black] (22) at (15.3, 2.5) {\small{24}};
\draw [line width=0.03cm, -, opacity=1.0, color=black] (0.3, 0) -- (1.4, 0);
\draw [line width=0.03cm, -, opacity=1.0, color=black] (0.3, 0) -- (1.4, 1);
\draw [line width=0.03cm, -, opacity=1.0, color=black] (2, 0) -- (3.1, 0);
\draw [line width=0.03cm, -, opacity=1.0, color=black] (2, 1) -- (3.1, 1);
\draw [line width=0.03cm, -, opacity=1.0, color=black] (2, 1) -- (3.1, 2);
\draw [line width=0.03cm, -, opacity=1.0, color=black] (3.7, 0) -- (4.8, 0);
\draw [line width=0.03cm, -, opacity=1.0, color=black] (3.7, 1) -- (4.8, 1);
\draw [line width=0.03cm, -, opacity=1.0, color=black] (3.7, 2) -- (4.8, 1);
\draw [line width=0.03cm, -, opacity=1.0, color=black] (5.4, 0) -- (6.5, 0);
\draw [line width=0.03cm, -, opacity=1.0, color=black] (5.4, 0) -- (6.5, 1);
\draw [line width=0.03cm, -, opacity=1.0, color=black] (5.4, 1) -- (6.5, 0);
\draw [line width=0.03cm, -, opacity=1.0, color=black] (7.1, 0) -- (8.2, 0);
\draw [line width=0.03cm, -, opacity=1.0, color=black] (7.1, 0) -- (8.2, 1);
\draw [line width=0.03cm, -, opacity=1.0, color=black] (7.1, 1) -- (8.2, 2);
\draw [line width=0.03cm, -, opacity=1.0, color=black] (8.8, 0) -- (9.9, 0);
\draw [line width=0.03cm, -, opacity=1.0, color=black] (8.8, 1) -- (9.9, 1);
\draw [line width=0.03cm, -, opacity=1.0, color=black] (8.8, 2) -- (9.9, 2);
\draw [line width=0.03cm, -, opacity=1.0, color=black] (10.5, 0) -- (11.6, 0);
\draw [line width=0.03cm, -, opacity=1.0, color=black] (10.5, 0) -- (11.6, 1);
\draw [line width=0.03cm, -, opacity=1.0, color=black] (10.5, 1) -- (11.6, 1);
\draw [line width=0.03cm, -, opacity=1.0, color=black] (10.5, 2) -- (11.6, 1);
\draw [line width=0.03cm, -, opacity=1.0, color=black] (12.2, 0) -- (13.3, 0);
\draw [line width=0.03cm, -, opacity=1.0, color=black] (12.2, 1) -- (13.3, 1);
\draw [line width=0.03cm, -, opacity=1.0, color=black] (13.9, 0) -- (15, 0);
\draw [line width=0.03cm, -, opacity=1.0, color=black] (13.9, 1) -- (15, 1);
\draw [line width=0.03cm, -, opacity=1.0, color=black] (13.9, 1) -- (15, 2);
\fill[yellow, opacity=0.3] (2.89,-0.7) rectangle (3.91,2.7);
\fill[yellow, opacity=0.3] (9.69,-0.7) rectangle (10.71,2.7);
\fill[yellow, opacity=0.3] (14.79,-0.7) rectangle (15.81,2.7);
    \end{tikzpicture}
    \caption{The DFA resulting from the subset construction for the NFA in Figure \ref{fig:NFA}. The numbers above the nodes specify the subset of NFA states corresponding to the DFA state.}
    \label{fig:DFA}
  \end{figure}
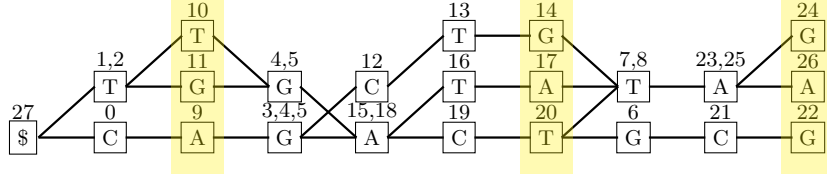

  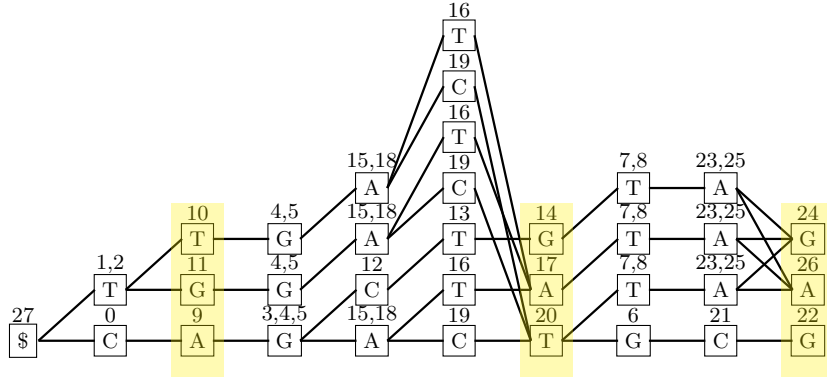
\begin{figure}

    \centering
    
    \begin{tikzpicture}%
      [
       scale=0.75,
       every state/.style={scale = 1.0, draw=none, fill=none},
       scale=0.9, every node/.style={scale=0.9}
      ]  

      \node[draw, black] (0_label) at (0, 0) {\small{\$}};
\node[align=left, black] (0) at (0, 0.5) {\small{27}};
\node[draw, black] (1_label) at (1.7, 0) {\small{C}};
\node[align=left, black] (1) at (1.7, 0.5) {\small{0}};
\node[draw, black] (2_label) at (1.7, 1) {\small{T}};
\node[align=left, black] (2) at (1.7, 1.5) {\small{1,2}};
\node[draw, black] (3_label) at (3.4, 0) {\small{A}};
\node[align=left, black] (3) at (3.4, 0.5) {\small{9}};
\node[draw, black] (4_label) at (3.4, 1) {\small{G}};
\node[align=left, black] (4) at (3.4, 1.5) {\small{11}};
\node[draw, black] (5_label) at (3.4, 2) {\small{T}};
\node[align=left, black] (5) at (3.4, 2.5) {\small{10}};
\node[draw, black] (6_label) at (5.1, 0) {\small{G}};
\node[align=left, black] (6) at (5.1, 0.5) {\small{3,4,5}};
\node[draw, black] (7_label) at (5.1, 1) {\small{G}};
\node[align=left, black] (7) at (5.1, 1.5) {\small{4,5}};
\node[draw, black] (8_label) at (5.1, 2) {\small{G}};
\node[align=left, black] (8) at (5.1, 2.5) {\small{4,5}};
\node[draw, black] (9_label) at (6.8, 0) {\small{A}};
\node[align=left, black] (9) at (6.8, 0.5) {\small{15,18}};
\node[draw, black] (10_label) at (6.8, 1) {\small{C}};
\node[align=left, black] (10) at (6.8, 1.5) {\small{12}};
\node[draw, black] (11_label) at (6.8, 2) {\small{A}};
\node[align=left, black] (11) at (6.8, 2.5) {\small{15,18}};
\node[draw, black] (12_label) at (6.8, 3) {\small{A}};
\node[align=left, black] (12) at (6.8, 3.5) {\small{15,18}};
\node[draw, black] (13_label) at (8.5, 0) {\small{C}};
\node[align=left, black] (13) at (8.5, 0.5) {\small{19}};
\node[draw, black] (14_label) at (8.5, 1) {\small{T}};
\node[align=left, black] (14) at (8.5, 1.5) {\small{16}};
\node[draw, black] (15_label) at (8.5, 2) {\small{T}};
\node[align=left, black] (15) at (8.5, 2.5) {\small{13}};
\node[draw, black] (16_label) at (8.5, 3) {\small{C}};
\node[align=left, black] (16) at (8.5, 3.5) {\small{19}};
\node[draw, black] (17_label) at (8.5, 4) {\small{T}};
\node[align=left, black] (17) at (8.5, 4.5) {\small{16}};
\node[draw, black] (18_label) at (8.5, 5) {\small{C}};
\node[align=left, black] (18) at (8.5, 5.5) {\small{19}};
\node[draw, black] (19_label) at (8.5, 6) {\small{T}};
\node[align=left, black] (19) at (8.5, 6.5) {\small{16}};
\node[draw, black] (20_label) at (10.2, 0) {\small{T}};
\node[align=left, black] (20) at (10.2, 0.5) {\small{20}};
\node[draw, black] (21_label) at (10.2, 1) {\small{A}};
\node[align=left, black] (21) at (10.2, 1.5) {\small{17}};
\node[draw, black] (22_label) at (10.2, 2) {\small{G}};
\node[align=left, black] (22) at (10.2, 2.5) {\small{14}};
\node[draw, black] (23_label) at (11.9, 0) {\small{G}};
\node[align=left, black] (23) at (11.9, 0.5) {\small{6}};
\node[draw, black] (24_label) at (11.9, 1) {\small{T}};
\node[align=left, black] (24) at (11.9, 1.5) {\small{7,8}};
\node[draw, black] (25_label) at (11.9, 2) {\small{T}};
\node[align=left, black] (25) at (11.9, 2.5) {\small{7,8}};
\node[draw, black] (26_label) at (11.9, 3) {\small{T}};
\node[align=left, black] (26) at (11.9, 3.5) {\small{7,8}};
\node[draw, black] (27_label) at (13.6, 0) {\small{C}};
\node[align=left, black] (27) at (13.6, 0.5) {\small{21}};
\node[draw, black] (28_label) at (13.6, 1) {\small{A}};
\node[align=left, black] (28) at (13.6, 1.5) {\small{23,25}};
\node[draw, black] (29_label) at (13.6, 2) {\small{A}};
\node[align=left, black] (29) at (13.6, 2.5) {\small{23,25}};
\node[draw, black] (30_label) at (13.6, 3) {\small{A}};
\node[align=left, black] (30) at (13.6, 3.5) {\small{23,25}};
\node[draw, black] (31_label) at (15.3, 0) {\small{G}};
\node[align=left, black] (31) at (15.3, 0.5) {\small{22}};
\node[draw, black] (32_label) at (15.3, 1) {\small{A}};
\node[align=left, black] (32) at (15.3, 1.5) {\small{26}};
\node[draw, black] (33_label) at (15.3, 2) {\small{G}};
\node[align=left, black] (33) at (15.3, 2.5) {\small{24}};
\draw [line width=0.03cm, -, opacity=1.0, color=black] (0.3, 0) -- (1.4, 0);
\draw [line width=0.03cm, -, opacity=1.0, color=black] (0.3, 0) -- (1.4, 1);
\draw [line width=0.03cm, -, opacity=1.0, color=black] (2, 0) -- (3.1, 0);
\draw [line width=0.03cm, -, opacity=1.0, color=black] (2, 1) -- (3.1, 1);
\draw [line width=0.03cm, -, opacity=1.0, color=black] (2, 1) -- (3.1, 2);
\draw [line width=0.03cm, -, opacity=1.0, color=black] (3.7, 0) -- (4.8, 0);
\draw [line width=0.03cm, -, opacity=1.0, color=black] (3.7, 1) -- (4.8, 1);
\draw [line width=0.03cm, -, opacity=1.0, color=black] (3.7, 2) -- (4.8, 2);
\draw [line width=0.03cm, -, opacity=1.0, color=black] (5.4, 0) -- (6.5, 0);
\draw [line width=0.03cm, -, opacity=1.0, color=black] (5.4, 0) -- (6.5, 1);
\draw [line width=0.03cm, -, opacity=1.0, color=black] (5.4, 1) -- (6.5, 2);
\draw [line width=0.03cm, -, opacity=1.0, color=black] (5.4, 2) -- (6.5, 3);
\draw [line width=0.03cm, -, opacity=1.0, color=black] (7.1, 0) -- (8.2, 0);
\draw [line width=0.03cm, -, opacity=1.0, color=black] (7.1, 0) -- (8.2, 1);
\draw [line width=0.03cm, -, opacity=1.0, color=black] (7.1, 1) -- (8.2, 2);
\draw [line width=0.03cm, -, opacity=1.0, color=black] (7.1, 2) -- (8.2, 3);
\draw [line width=0.03cm, -, opacity=1.0, color=black] (7.1, 2) -- (8.2, 4);
\draw [line width=0.03cm, -, opacity=1.0, color=black] (7.1, 3) -- (8.2, 5);
\draw [line width=0.03cm, -, opacity=1.0, color=black] (7.1, 3) -- (8.2, 6);
\draw [line width=0.03cm, -, opacity=1.0, color=black] (8.8, 0) -- (9.9, 0);
\draw [line width=0.03cm, -, opacity=1.0, color=black] (8.8, 1) -- (9.9, 1);
\draw [line width=0.03cm, -, opacity=1.0, color=black] (8.8, 2) -- (9.9, 2);
\draw [line width=0.03cm, -, opacity=1.0, color=black] (8.8, 3) -- (9.9, 0);
\draw [line width=0.03cm, -, opacity=1.0, color=black] (8.8, 4) -- (9.9, 1);
\draw [line width=0.03cm, -, opacity=1.0, color=black] (8.8, 5) -- (9.9, 0);
\draw [line width=0.03cm, -, opacity=1.0, color=black] (8.8, 6) -- (9.9, 1);
\draw [line width=0.03cm, -, opacity=1.0, color=black] (10.5, 0) -- (11.6, 0);
\draw [line width=0.03cm, -, opacity=1.0, color=black] (10.5, 0) -- (11.6, 1);
\draw [line width=0.03cm, -, opacity=1.0, color=black] (10.5, 1) -- (11.6, 2);
\draw [line width=0.03cm, -, opacity=1.0, color=black] (10.5, 2) -- (11.6, 3);
\draw [line width=0.03cm, -, opacity=1.0, color=black] (12.2, 0) -- (13.3, 0);
\draw [line width=0.03cm, -, opacity=1.0, color=black] (12.2, 1) -- (13.3, 1);
\draw [line width=0.03cm, -, opacity=1.0, color=black] (12.2, 2) -- (13.3, 2);
\draw [line width=0.03cm, -, opacity=1.0, color=black] (12.2, 3) -- (13.3, 3);
\draw [line width=0.03cm, -, opacity=1.0, color=black] (13.9, 0) -- (15, 0);
\draw [line width=0.03cm, -, opacity=1.0, color=black] (13.9, 1) -- (15, 1);
\draw [line width=0.03cm, -, opacity=1.0, color=black] (13.9, 1) -- (15, 2);
\draw [line width=0.03cm, -, opacity=1.0, color=black] (13.9, 2) -- (15, 1);
\draw [line width=0.03cm, -, opacity=1.0, color=black] (13.9, 2) -- (15, 2);
\draw [line width=0.03cm, -, opacity=1.0, color=black] (13.9, 3) -- (15, 1);
\draw [line width=0.03cm, -, opacity=1.0, color=black] (13.9, 3) -- (15, 2);
\fill[yellow, opacity=0.3] (2.89,-0.7) rectangle (3.91,2.7);
\fill[yellow, opacity=0.3] (9.69,-0.7) rectangle (10.71,2.7);
\fill[yellow, opacity=0.3] (14.79,-0.7) rectangle (15.81,2.7);
    \end{tikzpicture}
    \caption{The Wheeler DFA resulting from running our Wheeler expansion algorithm on the DFA in Figure \ref{fig:DFA}.}
    \label{fig:WDFA}
  \end{figure}

\subsection{Wheeler graph details and proofs \label{sect:wheelerappendix}}

Suppose we have the NFA $F$ corresponding to a repeat-free \efg, and let the $G$ be the determinization of $F$ defined above. Denote with $P(v)$ the set of path labels from the initial state to $v$. Since the graph is a DFA, all sets $P(v)$ are disjoint. Denote with $P_{min}(v)$ and $P_{max}(v)$ the colexicographic minimum and maximum of $P(v)$ respectively. We denote the colexicographic order relation with $\prec$. 

We say that a node $v$ is \emph{atomic} if for all path labels $\alpha$ in the graph, we have $\alpha \in P(v)$ iff $P_{min}(v) \preceq \alpha \preceq P_{max}(v)$. A DFA is Wheeler if and only if all its nodes are atomic~\cite{Alaetal20}. In this case, the Wheeler order $<$ of nodes is defined so that $v < u$ if $\alpha \prec \beta$ for some strings $\alpha \in P(v)$ and $\beta \in P(u)$ (this is well-defined when all nodes are atomic).

We will expand the graph so that all its nodes become atomic. To achieve this, we process the nodes in a topological order. If the current node $v$ is at the end of a block, we do nothing. Otherwise, we apply the following transformation. Suppose the in-neighbors of $v$ are $u_1,\ldots,u_k$ and the out-neighbors of $v$ are $w_1,\ldots, w_l$. We delete $v$, add $k$ new nodes $v_1,\ldots v_k$ and add the sets of edges $\{(u_i, v_i) \; | \; 1 \leq i \leq k \}$ and $\{(v_i,w_j) \; | \; 1 \leq i \leq k \textrm{ and } 1 \leq j \leq l\}$. In other words, we \emph{distribute} the in-edges of $v$ and \emph{duplicate} the out-edges. The automaton remains deterministic after this. Figure \ref{fig:WDFA} shows an example of the expansion.

We denote the resulting graph after all transformations with $G'$. First we note that by construction, the language of $G'$ is the same as the language of $G$: any path from the initial state in $G$ can be mapped to a corresponding path with the same label in $G'$ and vice versa. Next, we show that the graph is Wheeler-sortable:

\begin{lemma}\label{lemma:WG_atomic}
Every node $v$ in $G'$ is atomic.
\end{lemma}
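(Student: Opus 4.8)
The plan is to prove atomicity separately for the two kinds of nodes of $G'$: the nodes lying at the end of a block, and the \emph{internal} nodes of a block. Since $G'$ is a DFA, the path-label sets $P(v)$ partition all path labels of the graph, so atomicity of $v$ is exactly the statement that $P(v)$ is a \emph{colexicographic interval}: no path label reaching a node other than $v$ lies colex-between $P_{min}(v)$ and $P_{max}(v)$. I will establish the block-end case first, with no reference to internal nodes, and then reduce the internal case to it, so that the argument is not circular.

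For a node $v$ at the end of a block, let $\lambda$ be the founder-graph label of its block read from the previous block end up to $v$ (equivalently, the string spelled by the corresponding run of single-character states). The key point is the repeat-free property: $\lambda$ occurs as a substring of a path label only at this block. Since $G'$ has the same language as $G$ and $F$, every path label reaching $v$ ends with $\lambda$, and conversely every path label of $G'$ ending with $\lambda$ must reach $v$. Thus $P(v)$ equals the set of all path labels of $G'$ having suffix $\lambda$. Strings sharing a fixed suffix form a colex interval (their reverses share the prefix $\lambda^{-1}$), so any path label colex-between two elements of $P(v)$ also ends with $\lambda$ and hence belongs to $P(v)$; therefore $v$ is atomic.

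For an \emph{internal} node $v'$ of a block I use the defining feature of the transformation: after turning each block into a forest of trees rooted at the ends of the previous block, the backward path from $v'$ does not branch until it reaches a unique root $r$, which is a block-end node. Writing $s$ for the (nonempty) string spelled from $r$ to $v'$, determinism and non-branching give exactly $P(v') = \{\alpha s : \alpha \in P(r)\}$, and appending the fixed suffix $s$ is colex-order-preserving, so $P_{min}(v') = P_{min}(r)\,s$ and $P_{max}(v') = P_{max}(r)\,s$. Now take any path label $\gamma$ with $P_{min}(v') \preceq \gamma \preceq P_{max}(v')$. Both endpoints end with $s$, so $\gamma$ ends with $s$, say $\gamma = \gamma'' s$; stripping the common suffix shows $\gamma''$ is colex-between $P_{min}(r)$ and $P_{max}(r)$. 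Let $y$ be the node reached after reading $\gamma''$ (the node $|s|$ steps before the end of the path spelling $\gamma$); by the already-proved atomicity of the block-end node $r$ we get $\gamma'' \in P(r)$ and $y = r$. Finally, reading $s$ forward from $r$ in the DFA $G'$ reaches a unique node, namely $v'$, so $\gamma \in P(v')$ and $v'$ is atomic. The base case, where the root is the initial state with $P(r) = \{\varepsilon\}$, is handled identically.

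The main obstacle is the last reduction in the internal case: I must argue that a path label $\gamma$ that merely sits colex-between the two extreme labels of $v'$ is in fact routed to $v'$. This is where determinism and the forest structure do the work — the common suffix $s$ forces $\gamma$ to traverse the unique backward path of $v'$, and atomicity of the tree root $r$ pins down where that path begins — so the only genuinely delicate points are the standard colex facts that appending a fixed suffix preserves order and that a label lying between two labels with a common suffix must itself carry that suffix, both of which require the usual care with end markers and differing lengths.
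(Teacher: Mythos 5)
Your proof is correct and follows essentially the same route as the paper's: the paper writes every label in $P(v)$ as $\alpha\beta\gamma$ (with $\beta$ the uniquely-occurring block string and $\gamma$ the in-block tree path) and applies exactly your three ingredients --- the colexicographic sandwich argument forcing the common suffix, repeat-freeness sending every occurrence of $\beta$ to the same block-end node, and determinism routing $\gamma$ to $v$. Your two-stage organization (block-end nodes first, then internal nodes reduced to their tree root $r$) is the same argument with the cases $\gamma = \varepsilon$ and $\gamma = s$ handled separately rather than uniformly.
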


\begin{proof}

If $v$ is in the first block $G'$, then $|P(v)| = 1$, so $v$ is trivially atomic. Suppose $v$ is not in the first block. Then all strings in $P(v)$ are of the form $\alpha \beta \gamma$, with $|\alpha| \geq 0$, $|\beta| > 0$ and $|\gamma| \geq 0$, such that $\beta$ occurs in the NFA $F$ only once (repeat-free property) and $\gamma$ is a prefix of some string of a block. By the construction of $G'$, strings $\beta$ and $\gamma$ are the same for all strings in $P(v)$. 
Consider a string $\delta$ in $G'$ such that $P_{min}(v) \preceq \delta \preceq P_{max}(v)$. Then $\delta$ must be suffixed by $\beta \gamma$, so it follows that $\delta \in P(v)$, since (1) all occurrences of $\beta$ lead to the same node $u$ at the end of the previous block in $G'$ and (2) all paths from $u$ with the label $\gamma$ must lead to $v$, or otherwise $G'$ is not deterministic, a contradiction.
\end{proof}
Next, we show that transformation from $F$ to $G'$ increases the number of nodes by at most a polynomial amount.

\WGsizerestatable*
\begin{proof}
Each non-source node $v$ of $G'$ can be associated to a pair $(u,\alpha)$, where $u$ is the node of $G'$ reached by walking from $v$ back to the end of the previous block, and $\alpha$ is the label of the path from $u$ to $v$. If $v$ is at the first block, we set $u$ to the added source node. If $v$ is at the end of a block, we set $u=v$ and set $\alpha$ to be the empty string. These pairs are all distinct because $G'$ is deterministic. 

We bound the number of nodes in $G'$ by bounding the number of possible distinct pairs $(u,\alpha)$. Each end-of-block node $v$ of $G'$ is paired only with prefixes of strings from the next block. Let $end(b)$ be the set of nodes of $F$ that are at the end of block $b$ and let $f(b)$ be the set of nodes at block $b$ in $F$. Then the total number of possible pairs with nonempty $\alpha$ is at most $\sum_{b = 0}^{B-1} |end(b)| \cdot |f(b+1)|$, where $B$ is the number of blocks, with block zero corresponding to the initial state. The number of pairs with empty $\alpha$ is $\sum_{b = 0}^{B} |end(b)| \leq N$. Bounding $end(b) \leq W$ in the former sum, we have a total of at most $W \sum_{b=0}^{B-1} |f(b+1)| + N = O(WN)$ possible distinct pairs.
\end{proof}

\section{Implementation \label{sect:experiments}}

We implemented construction and indexing of (semi-)repeat-free (elastic) founder graphs. The implementation is available at \url{https://github.com/algbio/founderblockgraphs}. Some proof-of-concept experiments can be found in the conference version of the paper \cite{MCENT20}. 

\section{Discussion\label{sect:discussion}}

One characterization of our solution is that we compact those vertical repeats in \msa that are not horizontal repeats. This can be seen as positional extension of variable order de Bruijn graphs. Also, our solution is parameter-free unlike de Bruijn approaches that always need some threshold $k$, even in the variable order case. 

The founder graph concept could also be generalized so that it is not directly induced from a segmentation. One could consider cyclic graphs having the same repeat-free property. This could be an interesting direction in defining parameter-free de Bruijn graphs.

This paper only scratches the surface of a new family of pangenome representations. There are myriad of options how to optimize among the valid segmentations \cite{NCKM19,CKMN19}. We studied some of these here, but left open how to e.g. minimize the maximum number of distinct strings in a segment (i.e. \emph{height} of the graph) \cite{NCKM19}, or how to control the over-expressiveness of the graph. For the former, our ongoing work gives an efficient solution \cite{RM22b}. 

Other open problems include strengthening the conditional indexing lower bound to cover non-elastic founder graphs, and improving the running time for constructing (semi-)repeat-free elastic founder graphs. For the latter, our ongoing work improves the preprocessing time to linear, and consequently shows that a semi-repeat segmentation maximizing the number of blocks can be computed in linear time \cite{RM22a}.

We focused here on the theoretical aspects of indexable founder graphs. Our preliminary experiments \cite{MCENT20} show that the approach works well in practice on multiple sequence alignments without gaps. In our future work, we will focus on making the approach practical also in the general case. 

\section*{Acknowledgments}

We wish to thank the anonymous reviewers of WABI 2020, RECOMB-seq 2021, and ISAAC 2021 for helping us improve the readability.

This work was partly funded by the Academy of Finland (grants 309048, 322595 and 328877), Helsinki Institute for Information Technology (HIIT), and
by the European Research Council (ERC) under the European Union's Horizon 2020 research and innovation programme (grant agreement No.~851093, SAFEBIO).

\bibliographystyle{plain}

\begin{thebibliography}{10}

\bibitem{AC75}
Alfred~V. Aho and Margaret~J. Corasick.
\newblock Efficient string matching: An aid to bibliographic search.
\newblock {\em Commun. {ACM}}, 18(6):333--340, 1975.

\bibitem{Alaetal20}
Jarno Alanko, Giovanna D'Agostino, Alberto Policriti, and Nicola Prezza.
\newblock Regular languages meet prefix sorting.
\newblock In Shuchi Chawla, editor, {\em Proceedings of the 2020 {ACM-SIAM}
  Symposium on Discrete Algorithms, {SODA} 2020, Salt Lake City, UT, USA,
  January 5-8, 2020}, pages 911--930, {USA}, 2020. {SIAM}.

\bibitem{Alzetal20}
Mai Alzamel, Lorraine A.~K. Ayad, Giulia Bernardini, Roberto Grossi, Costas~S.
  Iliopoulos, Nadia Pisanti, Solon~P. Pissis, and Giovanna Rosone.
\newblock Comparing degenerate strings.
\newblock {\em Fundam. Informaticae}, 175(1-4):41--58, 2020.

\bibitem{AmirLL00}
Amihood Amir, Moshe Lewenstein, and Noa Lewenstein.
\newblock Pattern matching in hypertext.
\newblock {\em J. Algorithms}, 35(1):82--99, 2000.

\bibitem{aoyama_et_alCPM2018}
Kotaro Aoyama, Yuto Nakashima, Tomohiro I, Shunsuke Inenaga, Hideo Bannai, and
  Masayuki Takeda.
\newblock {Faster Online Elastic Degenerate String Matching}.
\newblock In Gonzalo Navarro, David Sankoff, and Binhai Zhu, editors, {\em
  Annual Symposium on Combinatorial Pattern Matching (CPM 2018)}, volume 105 of
  {\em Leibniz International Proceedings in Informatics (LIPIcs)}, pages
  9:1--9:10, Dagstuhl, Germany, 2018. Schloss Dagstuhl--Leibniz-Zentrum fuer
  Informatik.

\bibitem{BC19}
Djamal Belazzougui and Fabio Cunial.
\newblock Fully-functional bidirectional burrows-wheeler indexes and
  infinite-order de bruijn graphs.
\newblock In Nadia Pisanti and Solon~P. Pissis, editors, {\em 30th Annual
  Symposium on Combinatorial Pattern Matching, {CPM} 2019, June 18-20, 2019,
  Pisa, Italy}, volume 128 of {\em LIPIcs}, pages 10:1--10:15, Dagstuhl,
  Germany, 2019. Schloss Dagstuhl - Leibniz-Zentrum f{\"{u}}r Informatik.

\bibitem{belazzougui2013versatile}
Djamal Belazzougui, Fabio Cunial, Juha K{\"a}rkk{\"a}inen, and Veli
  M{\"a}kinen.
\newblock Versatile succinct representations of the bidirectional
  burrows-wheeler transform.
\newblock In {\em European Symposium on Algorithms}, pages 133--144. Springer,
  2013.

\bibitem{BCKM20}
Djamal Belazzougui, Fabio Cunial, Juha K\"{a}rkk\"{a}inen, and Veli
  M\"{a}kinen.
\newblock Linear-time string indexing and analysis in small space.
\newblock {\em ACM Trans. Algorithms}, 16(2):Article 17, March 2020.

\bibitem{bernardini_et_al2019elastic}
Giulia Bernardini, Pawel Gawrychowski, Nadia Pisanti, Solon~P. Pissis, and
  Giovanna Rosone.
\newblock {Even Faster Elastic-Degenerate String Matching via Fast Matrix
  Multiplication}.
\newblock In Christel Baier, Ioannis Chatzigiannakis, Paola Flocchini, and
  Stefano Leonardi, editors, {\em 46th International Colloquium on Automata,
  Languages, and Programming (ICALP 2019)}, volume 132 of {\em Leibniz
  International Proceedings in Informatics (LIPIcs)}, pages 21:1--21:15,
  Dagstuhl, Germany, 2019. Schloss Dagstuhl--Leibniz-Zentrum fuer Informatik.

\bibitem{Beretal17}
Giulia Bernardini, Nadia Pisanti, Solon~P. Pissis, and Giovanna Rosone.
\newblock Pattern matching on elastic-degenerate text with errors.
\newblock In Gabriele Fici, Marinella Sciortino, and Rossano Venturini,
  editors, {\em String Processing and Information Retrieval - 24th
  International Symposium, {SPIRE} 2017, Palermo, Italy, September 26-29, 2017,
  Proceedings}, volume 10508 of {\em Lecture Notes in Computer Science}, pages
  74--90, Germany, 2017. Springer.

\bibitem{Beretal20}
Giulia Bernardini, Nadia Pisanti, Solon~P. Pissis, and Giovanna Rosone.
\newblock Approximate pattern matching on elastic-degenerate text.
\newblock {\em Theor. Comput. Sci.}, 812:109--122, 2020.

\bibitem{BW94}
M.~Burrows and D.~Wheeler.
\newblock A block-sorting lossless data compression algorithm.
\newblock Technical Report 124, Digital Equipment Corporation, 1994.

\bibitem{CKMN19}
Bastien Cazaux, Dmitry Kosolobov, Veli M{\"{a}}kinen, and Tuukka Norri.
\newblock Linear time maximum segmentation problems in column stream model.
\newblock In Nieves~R. Brisaboa and Simon~J. Puglisi, editors, {\em String
  Processing and Information Retrieval - 26th International Symposium, {SPIRE}
  2019, Segovia, Spain, October 7-9, 2019, Proceedings}, volume 11811 of {\em
  Lecture Notes in Computer Science}, pages 322--336, Germany, 2019. Springer.

\bibitem{Chaetal15}
Maria Chatzou, Cedrik Magis, Jia-Ming Chang, Carsten Kemena, Giovanni Bussotti,
  Ionas Erb, and Cedric Notredame.
\newblock {Multiple sequence alignment modeling: methods and applications}.
\newblock {\em Briefings in Bioinformatics}, 17(6):1009--1023, 11 2015.

\bibitem{cunial2019framework}
Fabio Cunial, Jarno Alanko, and Djamal Belazzougui.
\newblock A framework for space-efficient variable-order markov models.
\newblock {\em Bioinformatics}, 35(22):4607--4616, 2019.

\bibitem{Bri59}
Rene De~La~Briandais.
\newblock File searching using variable length keys.
\newblock In {\em Papers Presented at the the March 3-5, 1959, Western Joint
  Computer Conference}, IRE-AIEE-ACM â€™59 (Western), page 295â€“298, New York,
  NY, USA, 1959. Association for Computing Machinery.

\bibitem{EGMT19}
Massimo Equi, Roberto Grossi, Veli M{\"{a}}kinen, and Alexandru~I. Tomescu.
\newblock On the complexity of string matching for graphs.
\newblock In Christel Baier, Ioannis Chatzigiannakis, Paola Flocchini, and
  Stefano Leonardi, editors, {\em 46th International Colloquium on Automata,
  Languages, and Programming, {ICALP} 2019, July 9-12, 2019, Patras, Greece},
  volume 132 of {\em LIPIcs}, pages 55:1--55:15, Dagstuhl, Germany, 2019.
  Schloss Dagstuhl - Leibniz-Zentrum f{\"{u}}r Informatik.

\bibitem{EMT21}
Massimo Equi, Veli M{\"{a}}kinen, and Alexandru~I. Tomescu.
\newblock Graphs cannot be indexed in polynomial time for sub-quadratic time
  string matching, unless {SETH} fails.
\newblock In Tom{\'{a}}s Bures, Riccardo Dondi, Johann Gamper, Giovanna
  Guerrini, Tomasz Jurdzinski, Claus Pahl, Florian Sikora, and Prudence W.~H.
  Wong, editors, {\em {SOFSEM} 2021: Theory and Practice of Computer Science -
  47th International Conference on Current Trends in Theory and Practice of
  Computer Science, {SOFSEM} 2021, Bolzano-Bozen, Italy, January 25-29, 2021,
  Proceedings}, volume 12607 of {\em Lecture Notes in Computer Science}, pages
  608--622, Germany, 2021. Springer.

\bibitem{ENACTM21}
Massimo Equi, Tuukka Norri, Jarno Alanko, Bastien Cazaux, Alexandru~I. Tomescu,
  and Veli M{\"{a}}kinen.
\newblock Algorithms and complexity on indexing elastic founder graphs.
\newblock In Hee{-}Kap Ahn and Kunihiko Sadakane, editors, {\em 32nd
  International Symposium on Algorithms and Computation, {ISAAC} 2021, December
  6-8, 2021, Fukuoka, Japan}, volume 212 of {\em LIPIcs}, Dagstuhl, Germany,
  2021. Schloss Dagstuhl - Leibniz-Zentrum f{\"{u}}r Informatik.
\newblock pp. 20:1--20:18.

\bibitem{Farach97}
Martin Farach.
\newblock Optimal suffix tree construction with large alphabets.
\newblock In {\em Proceedings 38th Annual Symposium on Foundations of Computer
  Science}, pages 137--143. IEEE, 1997.

\bibitem{GBT84}
Harold~N. Gabow, Jon~Louis Bentley, and Robert~Endre Tarjan.
\newblock Scaling and related techniques for geometry problems.
\newblock In Richard~A. DeMillo, editor, {\em Proceedings of the 16th Annual
  {ACM} Symposium on Theory of Computing, April 30 - May 2, 1984, Washington,
  DC, {USA}}, pages 135--143, {USA}, 1984. {ACM}.

\bibitem{GMS17}
Travis Gagie, Giovanni Manzini, and Jouni Sir{\'{e}}n.
\newblock Wheeler graphs: {A} framework for bwt-based data structures.
\newblock {\em Theor. Comput. Sci.}, 698:67--78, 2017.

\bibitem{GN19}
Travis Gagie and Gonzalo Navarro.
\newblock Compressed indexes for repetitive textual datasets.
\newblock In Sherif Sakr and Albert~Y. Zomaya, editors, {\em Encyclopedia of
  Big Data Technologies}. Springer, Germany, 2019.

\bibitem{GNP20}
Travis Gagie, Gonzalo Navarro, and Nicola Prezza.
\newblock Fully functional suffix trees and optimal text searching in bwt-runs
  bounded space.
\newblock {\em J. {ACM}}, 67(1):2:1--2:54, 2020.

\bibitem{GibneySPIRE2020}
Daniel Gibney.
\newblock An efficient elastic-degenerate text index? not likely.
\newblock In {\em International Symposium on String Processing and Information
  Retrieval}, pages 76--88. Springer, 2020.

\bibitem{GT19}
Daniel Gibney and Sharma~V. Thankachan.
\newblock On the hardness and inapproximability of recognizing wheeler graphs.
\newblock In Michael~A. Bender, Ola Svensson, and Grzegorz Herman, editors,
  {\em 27th Annual European Symposium on Algorithms, {ESA} 2019, September
  9-11, 2019, Munich/Garching, Germany}, volume 144 of {\em LIPIcs}, pages
  51:1--51:16, Germany, 2019. Schloss Dagstuhl - Leibniz-Zentrum f{\"{u}}r
  Informatik.

\bibitem{GGV03}
Roberto Grossi, Ankur Gupta, and Jeffrey~Scott Vitter.
\newblock High-order entropy-compressed text indexes.
\newblock In {\em Proceedings of the Fourteenth Annual {ACM-SIAM} Symposium on
  Discrete Algorithms, January 12-14, 2003, Baltimore, Maryland, {USA}}, pages
  841--850, {USA}, 2003. {ACM/SIAM}.

\bibitem{IKP17}
Costas~S. Iliopoulos, Ritu Kundu, and Solon~P. Pissis.
\newblock Efficient pattern matching in elastic-degenerate texts.
\newblock In Frank Drewes, Carlos Mart{\'{\i}}n{-}Vide, and Bianca Truthe,
  editors, {\em Language and Automata Theory and Applications - 11th
  International Conference, {LATA} 2017, Ume{\aa}, Sweden, March 6-9, 2017,
  Proceedings}, volume 10168 of {\em Lecture Notes in Computer Science}, pages
  131--142, 2017.

\bibitem{CR16}
Costas~S. Iliopoulos and Jakub Radoszewski.
\newblock Truly subquadratic-time extension queries and periodicity detection
  in strings with uncertainties.
\newblock In Roberto Grossi and Moshe Lewenstein, editors, {\em 27th Annual
  Symposium on Combinatorial Pattern Matching, {CPM} 2016, June 27-29, 2016,
  Tel Aviv, Israel}, volume~54 of {\em LIPIcs}, pages 8:1--8:12, Dagstuhl,
  Germany, 2016. Schloss Dagstuhl - Leibniz-Zentrum f{\"{u}}r Informatik.

\bibitem{IP01}
Russell Impagliazzo and Ramamohan Paturi.
\newblock {On the Complexity of k-SAT}.
\newblock {\em Journal of Computer and System Sciences}, 62(2):367 -- 375,
  2001.

\bibitem{Jac89}
G.~Jacobson.
\newblock Space-efficient static trees and graphs.
\newblock In {\em Proc. FOCS}, pages 549--554, 1989.

\bibitem{Mai78}
David Maier.
\newblock The complexity of some problems on subsequences and supersequences.
\newblock {\em J. ACM}, 25(2):322â€“336, April 1978.

\bibitem{MCENT20}
Veli M{\"{a}}kinen, Bastien Cazaux, Massimo Equi, Tuukka Norri, and
  Alexandru~I. Tomescu.
\newblock Linear time construction of indexable founder block graphs.
\newblock In Carl Kingsford and Nadia Pisanti, editors, {\em 20th International
  Workshop on Algorithms in Bioinformatics, {WABI} 2020, September 7-9, 2020,
  Pisa, Italy (Virtual Conference)}, volume 172 of {\em LIPIcs}, pages
  7:1--7:18, Dagstuhl, Germany, 2020. Schloss Dagstuhl - Leibniz-Zentrum
  f{\"{u}}r Informatik.

\bibitem{MNSV09jcb}
Veli M{\"a}kinen, Gonzalo Navarro, Jouni Sir{\'e}n, and Niko V{\"a}lim{\"a}ki.
\newblock Storage and retrieval of highly repetitive sequence collections.
\newblock {\em Journal of Computational Biology}, 17(3):281--308, 2010.

\bibitem{manber1992approximate}
U.~Manber and S.~Wu.
\newblock Approximate string matching with arbitrary costs for text and
  hypertext.
\newblock In {\em IAPR Workshop on Structural and Syntactic Pattern
  Recognition, Bern, Switzerland}, pages 22--33, 1992.

\bibitem{MM93}
Udi Manber and Eugene~W. Myers.
\newblock Suffix arrays: {A} new method for on-line string searches.
\newblock {\em {SIAM} J. Comput.}, 22(5):935--948, 1993.

\bibitem{marschall2016computational}
Tobias Marschall, Manja Marz, Thomas Abeel, Louis Dijkstra, Bas~E Dutilh, Ali
  Ghaffaari, Paul Kersey, Wigard Kloosterman, Veli M{\"a}kinen, Adam Novak,
  et~al.
\newblock Computational pan-genomics: status, promises and challenges.
\newblock {\em BioRxiv}, page 043430, 2016.

\bibitem{Naetal16b}
Joong Na, Hyunjoon Kim, Seunghwan Min, Heejin Park, Thierry Lecroq, Martine
  Leonard, Laurent Mouchard, and Kunsoo Park.
\newblock {FM}-index of alignment with gaps.
\newblock {\em Theoretical Computer Science}, 710, 06 2016.

\bibitem{Naetal16a}
Joong~Chae Na, Hyunjoon Kim, Heejin Park, Thierry Lecroq, Martine LÃ©onard,
  Laurent Mouchard, and Kunsoo Park.
\newblock {FM}-index of alignment: A compressed index for similar strings.
\newblock {\em Theoretical Computer Science}, 638:159--170, 2016.
\newblock Pattern Matching, Text Data Structures and Compression.

\bibitem{Naetal13a}
Joong~Chae Na, Heejin Park, Maxime Crochemore, Jan Holub, Costas~S. Iliopoulos,
  Laurent Mouchard, and Kunsoo Park.
\newblock Suffix tree of alignment: An efficient index for similar data.
\newblock In Thierry Lecroq and Laurent Mouchard, editors, {\em Combinatorial
  Algorithms - 24th International Workshop, {IWOCA} 2013, Rouen, France, July
  10-12, 2013, Revised Selected Papers}, volume 8288 of {\em Lecture Notes in
  Computer Science}, pages 337--348, Germany, 2013. Springer.

\bibitem{Naetal13b}
Joong~Chae Na, Heejin Park, Sunho Lee, Minsung Hong, Thierry Lecroq, Laurent
  Mouchard, and Kunsoo Park.
\newblock Suffix array of alignment: {A} practical index for similar data.
\newblock In Oren Kurland, Moshe Lewenstein, and Ely Porat, editors, {\em
  String Processing and Information Retrieval - 20th International Symposium,
  {SPIRE} 2013, Jerusalem, Israel, October 7-9, 2013, Proceedings}, volume 8214
  of {\em Lecture Notes in Computer Science}, pages 243--254, Germany, 2013.
  Springer.

\bibitem{NCKM19}
Tuukka Norri, Bastien Cazaux, Dmitry Kosolobov, and Veli M{\"{a}}kinen.
\newblock Linear time minimum segmentation enables scalable founder
  reconstruction.
\newblock {\em Algorithms Mol. Biol.}, 14(1):12:1--12:15, 2019.

\bibitem{rautiainen2017aligning}
Mikko Rautiainen and Tobias Marschall.
\newblock Aligning sequences to general graphs in {$O(V+ mE)$} time.
\newblock {\em bioRxiv}, pages 216--127, 2017.

\bibitem{RM22b}
Nicola Rizzo and Veli M\"akinen.
\newblock Indexable elastic founder graphs of minimum height.
\newblock In {\em Proc. 33rd Annual Symposium on Combinatorial Pattern Matching
  (CPM 2022)}, 2022.
\newblock To appear.

\bibitem{RM22a}
Nicola Rizzo and Veli M{\"{a}}kinen.
\newblock Linear time construction of indexable elastic founder graphs.
\newblock In Cristina Bazgan and Henning Fernau, editors, {\em Combinatorial
  Algorithms - 33rd International Workshop, {IWOCA} 2022, Trier, Germany, June
  7-9, 2022, Proceedings}, volume 13270 of {\em Lecture Notes in Computer
  Science}, pages 480--493. Springer, 2022.

\bibitem{Sad07}
Kunihiko Sadakane.
\newblock Compressed suffix trees with full functionality.
\newblock {\em Theory Comput. Syst.}, 41(4):589--607, 2007.

\bibitem{SOG12}
Thomas Schnattinger, Enno Ohlebusch, and Simon Gog.
\newblock Bidirectional search in a string with wavelet trees and bidirectional
  matching statistics.
\newblock {\em Inf. Comput.}, 213:13--22, 2012.

\bibitem{SVM14}
Jouni Sir{\'e}n, Niko V{\"a}lim{\"a}ki, and Veli M{\"a}kinen.
\newblock Indexing graphs for path queries with applications in genome
  research.
\newblock {\em IEEE/ACM Transactions on Computational Biology and
  Bioinformatics}, 11(2):375--388, 2014.

\bibitem{Tha13}
Chris Thachuk.
\newblock Indexing hypertext.
\newblock {\em Journal of Discrete Algorithms}, 18:113 -- 122, 2013.
\newblock Selected papers from the 18th International Symposium on String
  Processing and Information Retrieval (SPIRE 2011).

\bibitem{Williams05}
Ryan Williams.
\newblock A new algorithm for optimal 2-constraint satisfaction and its
  implications.
\newblock {\em Theor. Comput. Sci.}, 348(2-3):357--365, 2005.

\end{thebibliography}

\end{document}